\documentclass[journal]{IEEEtran} 
\usepackage{amsmath}
\usepackage{amsfonts}
\usepackage{amssymb}
\usepackage{amsthm}
\usepackage{mathtools, cuted}
\usepackage{stfloats}
\usepackage[dvips]{graphicx}
\usepackage{verbatim}
\usepackage{setspace}
\usepackage{bm}
\usepackage{algorithmic} 
\usepackage[ruled,vlined]{algorithm2e}
\usepackage{cite}
\usepackage{url}

\usepackage{breakurl}
\usepackage{changepage}
\usepackage{pdfpages}
\usepackage{color}
\usepackage{blkarray, bigstrut}
\usepackage{caption}
\usepackage{subcaption}

\usepackage[labelformat=simple]{subcaption}

\newtheorem{lemma}{Lemma}
\newtheorem{proposition}{Proposition}

\newtheorem{corollary}{Corollary}
\newtheorem{remark}{Remark}

\IEEEoverridecommandlockouts

\begin{document}

\title{
    {Slow Movable Antenna System Design Based on Cell-Specific Long-Term Angular Power Spectrum}
}

\author{Ge Yan, 
        Lipeng Zhu,~\IEEEmembership{Senior Member,~IEEE,}
        Wenyan Ma,~\IEEEmembership{Member,~IEEE,}
        Rui Zhang,~\IEEEmembership{Fellow,~IEEE}

\thanks{G. Yan is with the NUS Graduate School, National University of Singapore, Singapore 119077, and also with the Department of Electrical and Computer Engineering, National University of Singapore, Singapore 117583 (e-mail: geyan@u.nus.edu). }
\thanks{L. Zhu is with the State Key Laboratory of CNS/ATM and the School of Interdisciplinary Science, Beijing Institute of Technology, Beijing 100081, China (e-mail: zhulp@bit.edu.cn). }
\thanks{W. Ma and R. Zhang are with the Department of Electrical and Computer Engineering, National University of Singapore, Singapore 117583 (e-mails: wenyan@u.nus.edu, elezhang@nus.edu.sg). }
}


\maketitle

\IEEEpeerreviewmaketitle


\begin{abstract}
    Movable antenna (MA) has recently emerged as a promising paradigm for enhancing wireless communication performance by exploiting spatial degrees of freedom through flexible antenna repositioning.
    However, most existing designs rely on short-term user-specific instantaneous/statistical channel state information (CSI), which incurs substantial channel estimation overhead and excessive complexity due to frequent antenna movement.
    To enable slow antenna movement, this paper proposes a new design framework for antenna position optimization over a much longer timescale based on cell-level statistical channel information acquired at the base station (BS).
    In particular, a \textit{cell-specific} statistical channel model is developed for MA-aided multiuser communication systems, capturing both the scattering environment and long-term user distribution. 
    Based on this model, the antenna position optimization framework for maximizing the ergodic system utility is formulated, and the covariance-eigenvalues-balancing antenna positions (CEBAP) are derived to asymptotically approximate optimal solutions by equalizing channel power across spatial eigenmodes, thereby statistically reducing users' channel correlation. 
    Notably, the CEBAP solution solely depends on the BS-side angular power spectrum (APS) of wireless channels for mobile users across the cell, which significantly alleviates the overhead of channel acquisition and antenna movement, and yet remains effective for improving various system utilities over long timescales, such as weighted sum rate and minimum signal-to-interference-plus-noise ratio (SINR). 
    To numerically solve the CEBAP, a low-complexity log-barrier penalized optimization (LOBPO) method is further proposed. 
    Simulation results based on realistic urban layouts and ray-tracing channels demonstrate consistent performance gains of the proposed MA design over fixed-position antenna (FPA) systems across different utility functions. 
    In particular, for moderately large antenna moving regions, the proposed APS-based MA positioning solution can closely approach the performance upper bound achieved by antenna position optimization based on instantaneous CSI. 

\end{abstract}

\begin{IEEEkeywords}
    Movable antenna (MA), statistical channel knowledge, antenna position optimization, slow antenna movement, angular power spectrum. 
\end{IEEEkeywords}


\section{INTRODUCTION}\label{sec:introduction}

    In recent years, movable antenna (MA)-aided wireless systems have attracted significant attention due to their ability to fully exploit spatial degrees of freedom (DoFs) via antenna position optimization~\cite{ref:ma-opportunities}. 
    In contrast to conventional fixed-position antennas (FPAs), an MA or MA array can dynamically adjust its position within a spatial region using various means, such as motors and micro-electromechanical systems (MEMS), thereby enabling more favorable channel conditions. 
    By leveraging this capability, considerable performance gains can be achieved for MA-aided wireless systems in various aspects~\cite{ref:zlp-ma-tutorial,ref:ma-null-steering,ref:mwy-multi-beam-ma} using the same or even a smaller number of antennas compared to conventional FPA systems~\cite{ref:ma-modeling-and-perf-analysis,ref:mwy-ma-mimo-capacity-characterization}, thus alleviating the need for a large number of costly, power-demanding radio-frequency (RF) chains. 
    Similar or related architectures have also been explored in other contexts, such as fluid antenna systems (FAS)~\cite{ref:fas-perf-mrc}, flexible-position antennas~\cite{ref:flexible-position-mimo}, rotatable antennas~\cite{ref:beixiong-wc-rotatable-antenna-oppor,ref:beixiong-tcom-rotatable-antenna-model-optm}, and pinching antennas~\cite{ref:pinching-antenna, ref:pinching-antenna-model-optm}.

    Given these promising advantages, extensive research has been conducted to reap the performance gains enabled by MAs. 
    Based on the field-response channel model tailored for MA systems, the authors in~\cite{ref:ma-modeling-and-perf-analysis, ref:zlp-ma-tutorial} demonstrated significant spatial diversity gains for narrowband single-input single-output (SISO) systems through antenna position reconfiguration, which was later extended to wideband systems in~\cite{ref:zlp-perf-ma-wideband}. 
    Meanwhile, for multiple-input multiple-output (MIMO) systems, antenna positions were optimized to reshape the channel matrix in~\cite{ref:mwy-ma-mimo-capacity-characterization}, thereby enhancing spatial multiplexing gains and maximizing system capacity. 
    In~\cite{ref:zlp-multiuser-commun-aided-by-ma,ref:ma-multiuser-joint-design,ref:ma-multiuser-discrete-global-optimal,ref:ma-wsr-max,ref:zlp-ma-near-field-statistical}, MA designs were extended to multiuser systems under both far-field and near-field channel conditions, where optimized antenna positioning mitigates inter-user channel correlation and facilitates interference suppression via beamforming. 
    Under various quality-of-service (QoS) constraints, a wide range of optimization techniques have been employed to determine antenna positions, including conventional gradient-based algorithms and successive convex approximation (SCA)~\cite{ref:mwy-ma-mimo-capacity-characterization}, as well as particle swarm optimization~\cite{ref:near-field-multiuser-ma} and deep learning methods~\cite{ref:ma-multicasting-dl}. 
    Moreover, the concept of six-dimensional movable antenna (6DMA) was introduced in~\cite{ref:6dma-modeling-and-optm-statistical,ref:6dma-discrete-optm, ref:6dma-opportunity-challenge} to further exploit antenna rotations beyond spatial translations. 
    In addition, MAs have been integrated with various technologies, such as intelligent reflecting surfaces (IRS)~\cite{ref:joint-bf-ma-irs-multi-user,ref:ma-secure-irs-isac} and cell-free MIMO~\cite{ref:ma-cell-free, ref:zyc-ma-cell-free}, and have also been applied to diverse scenarios including secure communications~\cite{ref:yaodong-ma-secure-commun-oppor} and wireless sensing~\cite{ref:wenyan-ma-sensing,ref:ma-isac-fundamentals-future,ref:mwy-reconfig-ma-for-sensing}.

    Despite these benefits, MA systems introduce new challenges due to the need for physical antenna movements. 
    In particular, most existing works optimize antenna positions based on instantaneous channel state information (CSI), which incurs substantial time overhead for channel acquisition and additional latency caused by mechanical repositioning of antennas~\cite{ref:mwy-cs-ce-ma,ref:xiaozhenyu-ma-ce-framework}. 
    Such overhead may even exceed the channel coherence time, especially in fast varying environments. 
    Besides, frequent antenna movements lead to increased energy consumption and mechanical wear, thereby hindering practical deployment. 
    To overcome these limitations, recent studies have explored antenna position optimization based on statistical CSI (S-CSI), thereby reducing repositioning overhead. 
    For example, antenna positions were optimized to maximize point-to-point MIMO channel capacity under a conventional Rician fading channel model in~\cite{ref:schober-joint-bf-ma-statistic-csi}. 
    Meanwhile, two-timescale optimization frameworks were adopted in~\cite{ref:two-timescale-ma-uplink-ula-pga} and~\cite{ref:two-timescale-ma-qqw} for maximizing the uplink and downlink sum rates of multiple users, respectively, where antenna positions are designed based on Rician fading S-CSI while beamforming is adapted to instantaneous CSI. 
    Furthermore, field-response statistical channel models were developed in~\cite{ref:my-ma-tcom,ref:jqj-6dma-stat-low-complex} for ergodic sum rate maximization over small-scale fading via antenna movement optimization, based on which the performance achieved was shown close to that based on instantaneous CSI~\cite{ref:my-ma-tcom}. 
    
    Nevertheless, the aforementioned works generally rely on \textit{user-specific} S-CSI which depends on user locations, such as the angle of departure (AoD) of the line-of-sight (LoS) channel path~\cite{ref:6dma-modeling-and-optm-statistical, ref:schober-joint-bf-ma-statistic-csi,ref:two-timescale-ma-uplink-ula-pga,ref:two-timescale-ma-qqw} and the full channel spatial covariance matrix~\cite{ref:my-ma-tcom,ref:jqj-6dma-stat-low-complex} for each user. 
    Although user locations evolve more slowly than instantaneous CSI, it is still challenging to move the antennas to track the user locations in real time. 
    Moreover, acquiring such user-specific channel information also incurs non-negligible channel estimation overhead. 
    Additionally, antenna positions optimized for different system utilities, such as sum-rate maximization, fairness enhancement, and interference mitigation, are generally distinct, which further increases the frequency of antenna repositioning required to accommodate varying system requirements. 

    To address these issues, we propose in this paper a slow antenna movement design that maximizes the ergodic performance of an MA-aided multiuser multiple-input single-output (MU-MISO) system over an extended timescale, during which mobile terminals are randomly distributed or moved within the cell. 
    In particular, we develop a new antenna position optimization framework that is robust against various system demands while requiring only the \textit{cell-specific} angular power spectrum (APS) at the base station (BS). 
    Unlike the user-specific S-CSI that changes with each user's location, the cell-specific APS is averaged over long-term wireless channels of mobile users across the cell, which is much easier to acquire and its estimation has been extensively investigated in the existing literature~\cite{ref:aps-est-hilbert,ref:aps-est-projection,ref:aps-est-time-varying}. 
    The main contributions of this paper are summarized as follows:
    \begin{itemize}
        \item Based on the statistical field-response channel model in~\cite{ref:my-ma-tcom}, a \textit{cell-specific} statistical channel model is developed for randomly distributed/moving users across the cell, which captures both the scattering environment and long-term user distribution across the cell. 
        Specifically, the cell is partitioned into smaller subregions, where the downlink channel within each subregion can be regarded as Gaussian vectors under the assumption of rich scattering around users. 
        For each subregion, the channel covariance matrix is uniquely determined by its location and dominant scatterers in the cell. 
        By weighting the subregion-specific Gaussian channel distributions with the user distribution, we obtain a cell-specific Gaussian mixture channel model, based on which the APS is defined. 
        
        \item Adopting zero-forcing (ZF) precoding at the BS, the covariance-eigenvalues-balancing antenna positions (CEBAP) are derived, which are shown to asymptotically approximate the optimal antenna positions for maximizing the ergodic system utility. 
        In particular, the CEBAP design mitigates the channel correlations among users for interference avoidance via balancing the channel power over the channel space based on the APS. 
        Although independent of any specific utility function, CEBAP can be shown effective for improving a broad class of practical performance utilities, such as weighted sum rate or minimum weighted signal-to-interference-plus-noise ratio (SINR). 
        Furthermore, a low-complexity log-barrier penalized optimization (LOBPO) method is proposed to numerically solve the CEBAP, where antenna positions are optimized via gradient ascent with non-convex constraints incorporated as penalty terms. 

        \item To validate the effectiveness of the proposed CEBAP and LOBPO methods, simulations are conducted based on realistic urban layouts in Singapore~\cite{ref:openstreetmap} and ray-tracing channel models. 
        The results demonstrate consistent performance gains over FPAs across different utilities, system configurations, and user distributions. 
        Notably, for smaller antenna moving regions, the proposed MA optimization approach yields more pronounced gains over FPAs, with the performance closely approaching that obtained via antenna position optimization catering to user-location-specific S-CSI and even instantaneous CSI. 
    \end{itemize}

    The rest of the paper is organized as follows. 
    Section~\ref{sec:system-channel-models} introduces the MA-aided system and channel models. 
    In Section~\ref{sec:prob-form-and-CEBAP}, the antenna position optimization problem is formulated and the proposed CEBAP is illustrated, while the LOBPO method is detailed in Section~\ref{sec:proposed-LOBPO}. 
    Simulation results are presented in Section~\ref{sec:perf-eval-raytrace} and conclusions are summarized in Section~\ref{sec:conclusions}.

    \textit{Notations:} 
    Boldface letters refer to vectors (lower case) or matrices (upper case). 
    For square matrix $\boldsymbol{A}$, $\text{tr}(\boldsymbol{A})$ denotes its trace and $\boldsymbol{A}^{-1}$ denotes its inverse matrix. 
    For matrix $\boldsymbol{B}$, let $\boldsymbol{B}^{T}$, $\boldsymbol{B}^{H}$, $\text{rank}(\boldsymbol{B})$, $\|\boldsymbol{B}\|_F$, and $[\boldsymbol{B}]_{nm}$ denote the transpose, conjugate transpose, rank, Frobenius norm, and the element in the $n$-th row and $m$-th column of $\boldsymbol{B}$, respectively. 
    $\boldsymbol{I}_N$ denotes the $N\times N$-dimensional identity matrix. 
    For vector $\boldsymbol{x}$, $\|\boldsymbol{x}\|_{2}$ denotes its Euclidean norm. 
    $\boldsymbol{0}_{N\times M}$ denotes the $N\times M$-dimensional zero matrix. 
    Vector $\boldsymbol{1}_{K}$ denotes the $K$-dimensional column vector with all entries equal to one. 
    For vector $\boldsymbol{x}$, $\text{Diag}(\boldsymbol{x})$ denotes the diagonal matrix whose main diagonal elements equals to $\boldsymbol{x}$. 
    For matrix $\boldsymbol{A}$, $\text{diag}(\boldsymbol{A})$ denotes the vector whose elements equals to the main diagonal elements of $\boldsymbol{A}$. 
    Sets $\mathbb{C}^{a\times b}$, $\mathbb{R}^{a\times b}$, and $\mathbb{R}_{+}^{a\times b}$ denote the space of $a\times b$-dimensional matrices with complex, real, and non-negative real elements, respectively. 
    $\mathbb{E}[\cdot]$ denotes expectation. 
    Symbol $\mathcal{CN}$ denotes the circular symmetric complex Gaussian (CSCG) distribution. 
    Symbol $\odot$ represents the Hadamard product for matrices. 
    Symbol $j = \sqrt{-1}$.


\section{System and Channel Models}\label{sec:system-channel-models}
    
    \subsection{MA-Aided MU-MISO System}\label{subsec:system-model}
        Consider a BS equipped with a two-dimensional (2D) array with ${N}$ MAs serving multiple single-FPA users within the cell, as shown in Fig.~\ref{fig:system-model}. 
        By establishing a three-dimensional (3D) local Cartesian coordinate system over the antenna plane, the position of the $n$-th MA in the array plane can be represented by a 2D vector $\boldsymbol{r}_{n} = [x_n, y_n]^T\in\mathbb{R}^{2\times 1}$, $\forall n$, where $x_n$ and $y_n$ are its coordinates along the $x$ and $y$ axies, respectively. 
        Besides, define vectors $\boldsymbol{x} = [{x}_{1}, \ldots, {x}_{N}]^T\in\mathbb{R}^{N\times 1}$ and $\boldsymbol{y} = [{y}_{1}, \ldots, {y}_{N}]^T\in\mathbb{R}^{N\times 1}$ as the $x$ and $y$ coordinates of all $N$ MAs, respectively. 
        The antenna moving region is a rectangle area on the $x$-$O$-$y$ plane centered at the origin and its sizes along the $x$ and $y$ axies are denoted as $S_{x}$ and $S_{y}$, respectively. 

        \begin{figure}[t]
            \begin{center}
                \includegraphics[scale = 0.4]{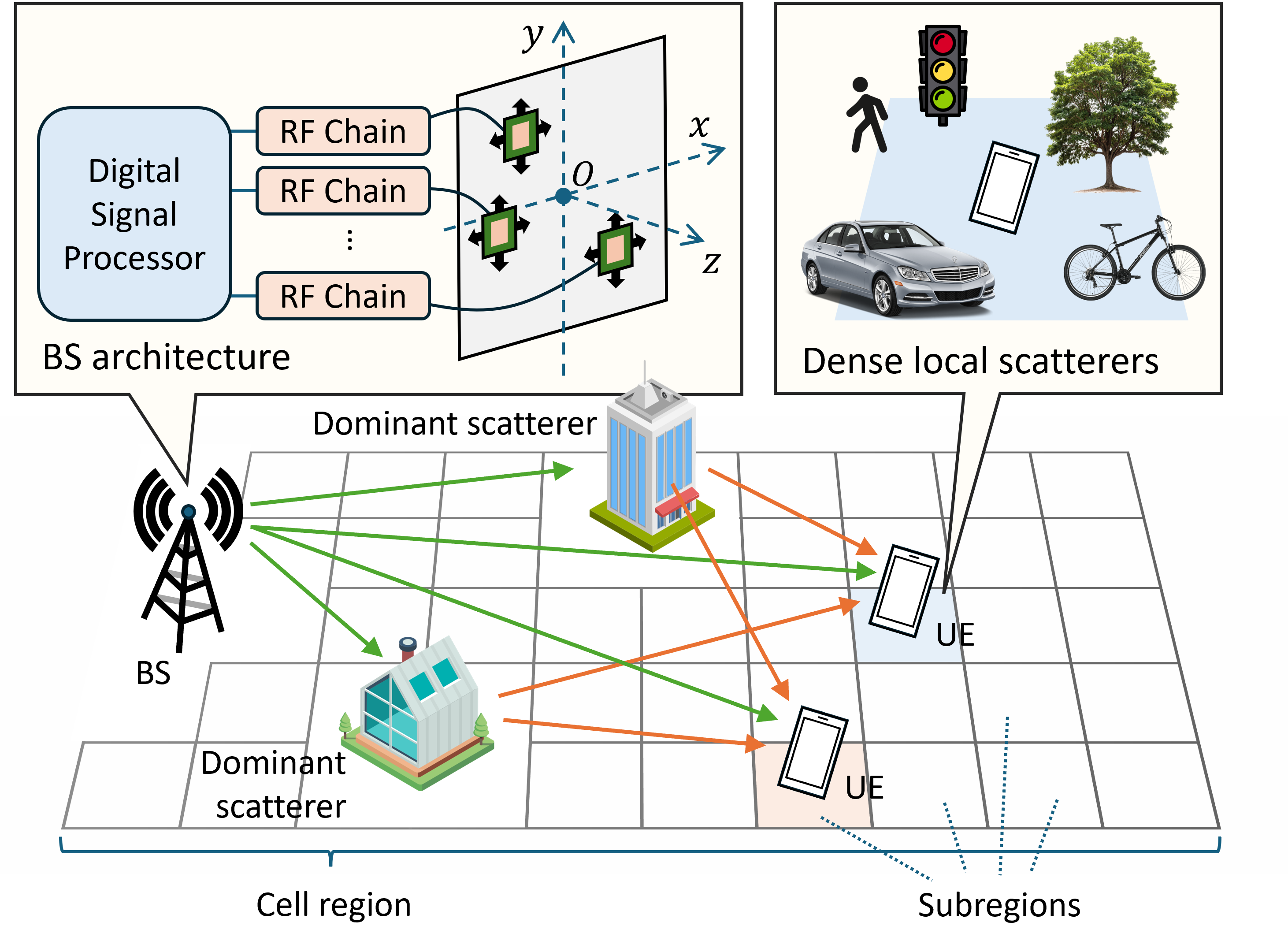}
                \vspace{-14pt}
                \caption{The MA-aided downlink MU-MISO system. }
                \label{fig:system-model}
            \end{center}
            \vspace{-4pt}
        \end{figure}

        Denote $K$ as the number of users. 
        For the $k$-th user, the precoding vector at the BS and the baseband equivalent downlink channel are denoted as $\boldsymbol{w}_{k}\in\mathbb{C}^{N\times 1}$ and $\boldsymbol{h}_{k}\in\mathbb{C}^{N\times 1}$, respectively, $1\le k\le K$. 
        By defining $\boldsymbol{W} = [\boldsymbol{w}_{1}, \ldots, \boldsymbol{w}_{K}]\in\mathbb{C}^{{N}\times K}$ and $\boldsymbol{H} = [\boldsymbol{h}_{1}, \ldots, \boldsymbol{h}_{K}]\in\mathbb{C}^{{{N}}\times{K}}$ as the downlink precoding and channel matrices, respectively, the received signal at users, denoted by $\boldsymbol{y} = [y_{1}, \ldots, y_{K}]^T\in\mathbb{C}^{K\times 1}$, is given by 
        \begin{equation}\label{def:inst-received-signal}
            \boldsymbol{y} = \boldsymbol{H}^H\boldsymbol{W}\boldsymbol{s} + \boldsymbol{z}, 
        \end{equation}
        where $\boldsymbol{s}\in\mathbb{C}^{K\times 1}$ is the transmitted signal with $\mathbb{E}[\boldsymbol{s}\boldsymbol{s}^H] = \boldsymbol{I}_{K}$, and $\boldsymbol{z}\in\mathbb{C}^{K\times 1}$ is the receiver noise following CSCG distribution $\mathcal{CN}(\boldsymbol{0}, \sigma^2\boldsymbol{I}_{K})$, while $\sigma^2$ is the average noise power.

    \subsection{Cell-Specific Gaussian-Mixture Channel Model}\label{subsec:cell-specific-stat-channel}
        
        Based on the field-response channel model tailored for the MA-aided systems~\cite{ref:my-ma-tcom}, each user's channel is composed of multiple transmit channel paths propagating from the BS towards several dominant scatterers, as shown in Fig.~\ref{fig:system-model}. 
        For users moving within a small area of several-wavelength size, the angles of departure (AoDs) of the transmit channel paths can be considered approximately constant, while the path-response coefficients are regarded as independent CSCG variables due to rich scattering around users~\cite{ref:my-ma-tcom}. 
        Thus, by dividing the cell region into $M$ non-overlapping subregions, denoted by $\mathcal{V}_{1}, \ldots, \mathcal{V}_{M}$, as shown in Fig.~\ref{fig:system-model}, the user channel within each subregion can be modeled as a CSCG vector. 
        
        Specifically, $L_{m}$ is defined as the number of transmit channel paths from the BS to the $m$-th subregion $\mathcal{V}_{m}$, while $\theta_{m, l}$ and $\phi_{m, l}$ denote the elevation and azimuth angles of the $l$-th path with respect to (w.r.t.) the MA plane, respectively, as shown in Fig.~\ref{fig:tx-path-wavevector}, with the corresponding wavevector given by $\boldsymbol{\kappa}_{m, l} = {\kappa}_{0}[\cos\theta_{m, l}\cos\phi_{m, l}, \cos\theta_{m, l}\sin\phi_{m, l}, \sin\theta_{m, l}]^{T}\in\mathbb{R}^{3\times 1}$, $1\le l\le L_{m}$, where ${\kappa}_{0} = 2\pi/\lambda$ is the carrier wavenumber with $\lambda$ being the carrier wavelength. 
        Note that we have $\boldsymbol{\kappa}_{m,l}\in\mathcal{S}_{+}\triangleq\{\boldsymbol{\kappa} = [\kappa^{x}, \kappa^{y}, \kappa^{z}]^{T} | \kappa^{z} > 0, \|\boldsymbol{\kappa}\|_{2} = \kappa_{0}\}$, $\forall m, l$. 
        Therefore, the transmit field-response vector (FRV) of the downlink channel from the $n$-th antenna to $\mathcal{V}_{m}$ is written as the following $L_{m}\times 1$ vector: 
        \begin{equation}\label{def:subregion-m-transmit-FRV}
            \boldsymbol{q}_{m}(\boldsymbol{r}_{n}) = \left[
                \exp\left(j\boldsymbol{\kappa}_{m, 1}^{T}\tilde{\boldsymbol{r}}_{n}\right), \ldots, \exp\left(j\boldsymbol{\kappa}_{m, L_{m}}^{T}\tilde{\boldsymbol{r}}_{n}\right)
            \right]^{T}, 
        \end{equation}
        where $\tilde{\boldsymbol{r}}_{n} = [\boldsymbol{r}_{n}^{T}, 0]^{T}\in\mathbb{R}^{3\times 1}$. 
        Moreover, denote the transmit path-response vector (PRV) as $\boldsymbol{\psi}_{m} = [\psi_{m, 1}, \ldots, \psi_{m, L_{m}}]^{T}\in\mathbb{C}^{L_{m}\times 1}$, which is given by $\boldsymbol{\psi}\sim\mathcal{CN}(\boldsymbol{0}, \mathrm{Diag}(\boldsymbol{\varrho}_{m}))$, where $\boldsymbol{\varrho}_{m} = [\varrho_{m, 1}, \ldots, \varrho_{m, L_{m}}]^{T}\in\mathbb{R}_{+}^{L_{m}\times 1}$ is the vector of average power gain of the transmit channel paths~\cite{ref:my-ma-tcom}. 
        Then, the channel of a random user location within the $m$-th subregion, denoted by $\boldsymbol{\hbar}_{m}$, is given by 
        \begin{equation}\label{def:subregion-m-channel-vector}
            \boldsymbol{\hbar}_{m} = \boldsymbol{Q}_{m}^{H}\boldsymbol{\psi}_{m}\sim\mathcal{CN}(\boldsymbol{0}, \boldsymbol{G}_{m}), ~\forall m, 
        \end{equation}
        where $\boldsymbol{Q}_{m} = [\boldsymbol{q}_{m}(\boldsymbol{r}_{1}), \ldots, \boldsymbol{q}_{m}(\boldsymbol{r}_{N})]\in\mathbb{C}^{L_{m}\times N}$ is the transmit field-response matrix (FRM) and $\boldsymbol{G}_{m} = \mathbb{E}_{\mathcal{V}_{m}}[\boldsymbol{\hbar}_{m}\boldsymbol{\hbar}_{m}^{H}] = \boldsymbol{Q}_{m}^{H}\mathrm{Diag}(\boldsymbol{\varrho}_{m})\boldsymbol{Q}_{m}$ is the channel covariance matrix for $\mathcal{V}_{m}$.

        To account for the user distribution, we assume indepedent users and denote the probability of a typical user in the $m$-th subregion as $\mu_{m}\ge 0, m = 1, \ldots, M$, with $\sum_{m = 1}^{M}{\mu_{m}} = 1$. 
        Given the user number $K$, the MU-MISO downlink channels can be modeled as independent and identically distributed (i.i.d.) Gaussian mixture random vectors as follows
        \begin{equation}\label{def:channel-distrbtn-gaussian-mixture}
            \boldsymbol{h}_{k}\sim\mathcal{CN}(\boldsymbol{0}, \boldsymbol{G}_{m}), ~\mathrm{w.p.} ~\mu_{m}, ~\forall k, 
        \end{equation}
        where $\text{w.p.}$ stands for ``with probability''. 
        Furthermore, the user number distribution is defined via $\zeta_{n} = \Pr(K = n)$, $n = 1, \ldots, N$, with $\sum_{n = 1}^{N}{\zeta_{n}} = 1$. 

        \begin{figure}[t]
            \begin{center}
                \includegraphics[scale = 0.4]{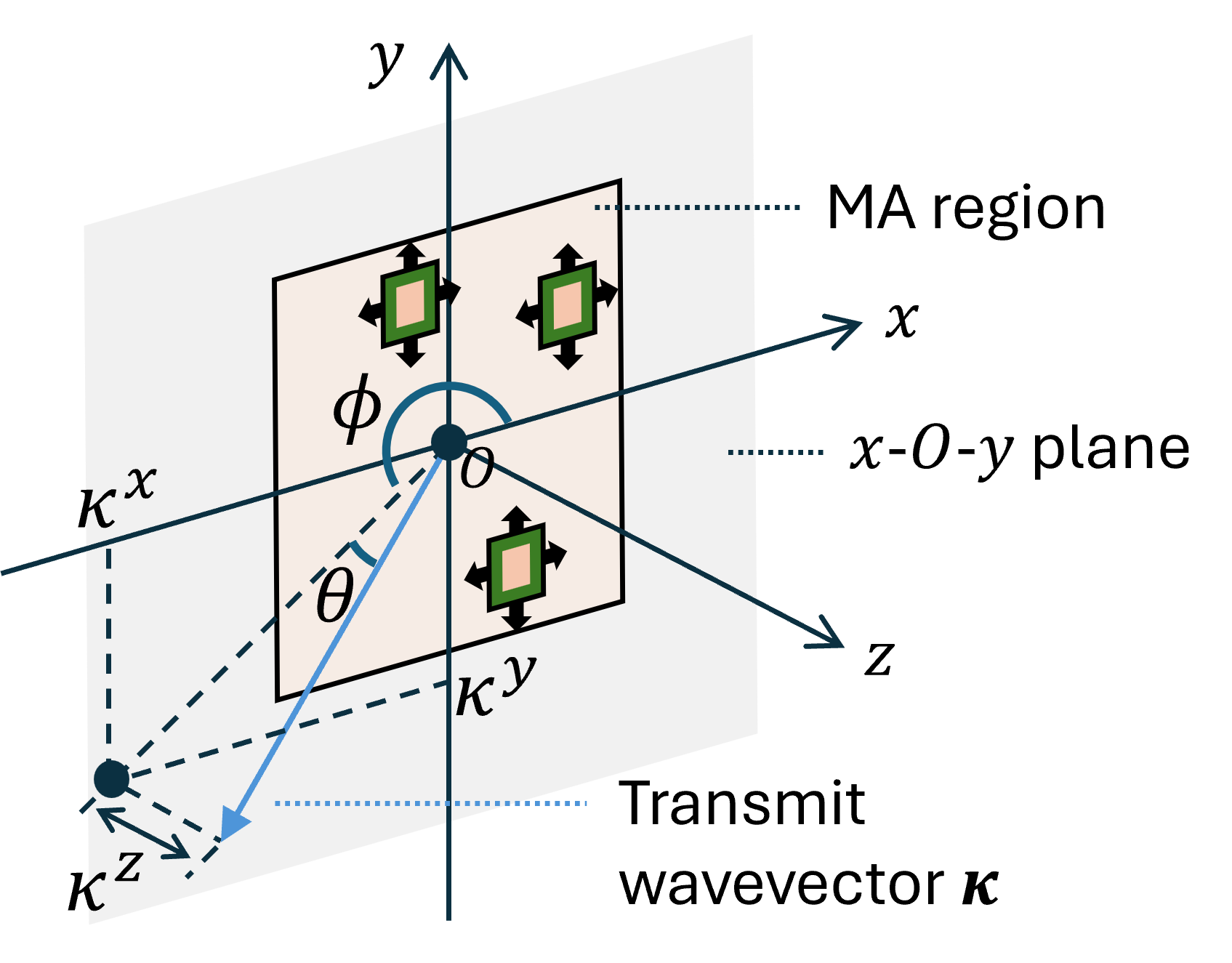}
                \vspace{-2pt}
                \caption{Transmit wavevectors. }
                \label{fig:tx-path-wavevector}
            \end{center}
            \vspace{-4pt}
        \end{figure}

    \vspace{-2pt}
    \subsection{APS and Channel Covariance Matrix}\label{subsec:aps-and-covmat}
        
        To define the APS based on the proposed Gaussian mixture channel model, the elevation and azimuth angles are equally discretized into $N_{E}$ and $N_{A}$ angles, respectively, i.e., $\vartheta_{i}$, $1\le i\le N_{E}$, and $\varphi_{i'}$, $1\le i'\le N_{A}$, which yields $L_{0} = N_{E}N_{A}$ solid angle grids over $\mathcal{S}_{+}$, denoted by $\Omega_{l}$, $l = 1, \ldots, L_{0}$. 
        With sufficiently large values of $N_{E}$ and $N_{A}$, an arbitrary wavevector $\boldsymbol{\kappa}$ falling within the $l$-th grid can be approximated by the discretized wavevector $\bar{\boldsymbol{\kappa}}_{l}\in\mathcal{S}_{+}$ defined for $\Omega_{l}$, as shown in Fig.~\ref{fig:tx-discretized-wavevector}. 
        Thus, the APS can be defined as the total power responses of transmit channel paths falling with every solid angle grid\footnote{In this paper, the elevation and azimuth angles are discretized separately for clarity and ease of interpretation, but this is not the only way. More sophisticated angular discretization schemes can be employed for uniform sampling over $\mathcal{S}_{+}$, without affecting the efficacy of the proposed approach. }. 
        Specifically, by denoting the discretized wavevectors as $\bar{\boldsymbol{\kappa}}_{l} = [\bar{\kappa}_{l}^{x}, \bar{\kappa}_{l}^{y}, \bar{\kappa}_{l}^{z}]^{T}, l = 1, \ldots, L_{0}$, the transmit power responses for subregion $\mathcal{V}_{m}$ can be equivalently represented by vector $\bar{\boldsymbol{\varrho}}_{m} = [\bar{\varrho}_{m, 1}, \ldots, \bar{\varrho}_{m, L_{0}}]^{T}\in\mathbb{R}_{+}^{L_{0}\times 1}$, where $\bar{\varrho}_{m, l}$ denotes the sum of power responses of transmit channel paths falling within the $l$-th grid for users in the $m$-th subregion, $1\le l\le L_{0}$. 
        In particular, $\bar{\varrho}_{m, l}$ is defined as
        \begin{equation}\label{def:discretized-power-response}
            \bar{\varrho}_{m, l}\triangleq\sum_{i = 1}^{L_{m}}{
                \varrho_{m, i}\cdot\mathbb{I}(\boldsymbol{\kappa}_{m, i}\in\Omega_{l})
            }, ~\forall m, l, 
        \end{equation}
        where $\mathbb{I}(\cdot)$ equals $1$ if the given condition is met and $0$ otherwise. 
        By defining $\boldsymbol{\mu} = [\mu_{1}, \ldots, \mu_{M}]^{T}\in\mathbb{R}_{+}^{M\times 1}$, the cell-specific APS can be obtained as the averaged subregion-specific power responses w.r.t. the user distribution $\boldsymbol{\mu}$, which is denoted as $\boldsymbol{b} = [b_{1}, \ldots, b_{L_{0}}]^{T}\in\mathbb{R}_{+}^{L_{0}\times 1}$ and given by
        \begin{equation}\label{def:global-aps-vec}
            \boldsymbol{b} = \boldsymbol{D}\boldsymbol{\mu}, ~\boldsymbol{D} \triangleq \left[
                \bar{\boldsymbol{\varrho}}_{1}, \ldots, \bar{\boldsymbol{\varrho}}_{M}
            \right]\in\mathbb{R}_{+}^{L_{0}\times M}, 
        \end{equation}
        with $\beta \triangleq \sum_{l = 1}^{L_{0}}b_{l}$ defined as the average channel power gain per antenna over the entire cell. 

        Moreover, by defining the FRM for discretized wavevectors as $\bar{\boldsymbol{Q}}\in\mathbb{C}^{L_{0}\times N}$, i.e., $[\bar{\boldsymbol{Q}}]_{ln} = \exp(j\bar{\boldsymbol{\kappa}}_{l}^{T}\tilde{\boldsymbol{r}}_{n})$, $\forall l, n$, it can be verified that $\boldsymbol{G}_{m} = \boldsymbol{Q}_{m}^{H}\mathrm{Diag}(\boldsymbol{\varrho}_{m})\boldsymbol{Q}_{m} = \bar{\boldsymbol{Q}}^{H}\mathrm{Diag}(\bar{\boldsymbol{\varrho}}_{m})\bar{\boldsymbol{Q}}$, $\forall m$. 
        Hence, the channel covariance matrix $\bar{\boldsymbol{G}} = \mathbb{E}\left[\boldsymbol{h}_{k}\boldsymbol{h}_{k}^{H}\right]\in\mathbb{C}^{N\times N}$ over the cell is given by 
        \begin{subequations}
            \begin{align}
                \bar{\boldsymbol{G}} & = \sum_{m = 1}^{M}{\mu_{m}\boldsymbol{G}_{m}} = \bar{\boldsymbol{Q}}^{H}\left(
                    \sum_{m = 1}^{M}{\mu_{m}\mathrm{Diag}(\bar{\boldsymbol{\varrho}}_{m})}
                \right)\bar{\boldsymbol{Q}} \\
                & = \bar{\boldsymbol{Q}}^{H}\mathrm{Diag}(\boldsymbol{b})\bar{\boldsymbol{Q}}. 
            \end{align}
        \end{subequations}
        Notably, $[\bar{\boldsymbol{G}}]_{nn} = \beta$ and $\mathrm{tr}(\bar{\boldsymbol{G}}) = N\beta$ are constant, while the off-diagonal elements of $\bar{\boldsymbol{G}}$ rely on antenna positions. 

        \begin{figure}[t]
            \begin{center}
                \includegraphics[scale = 0.4]{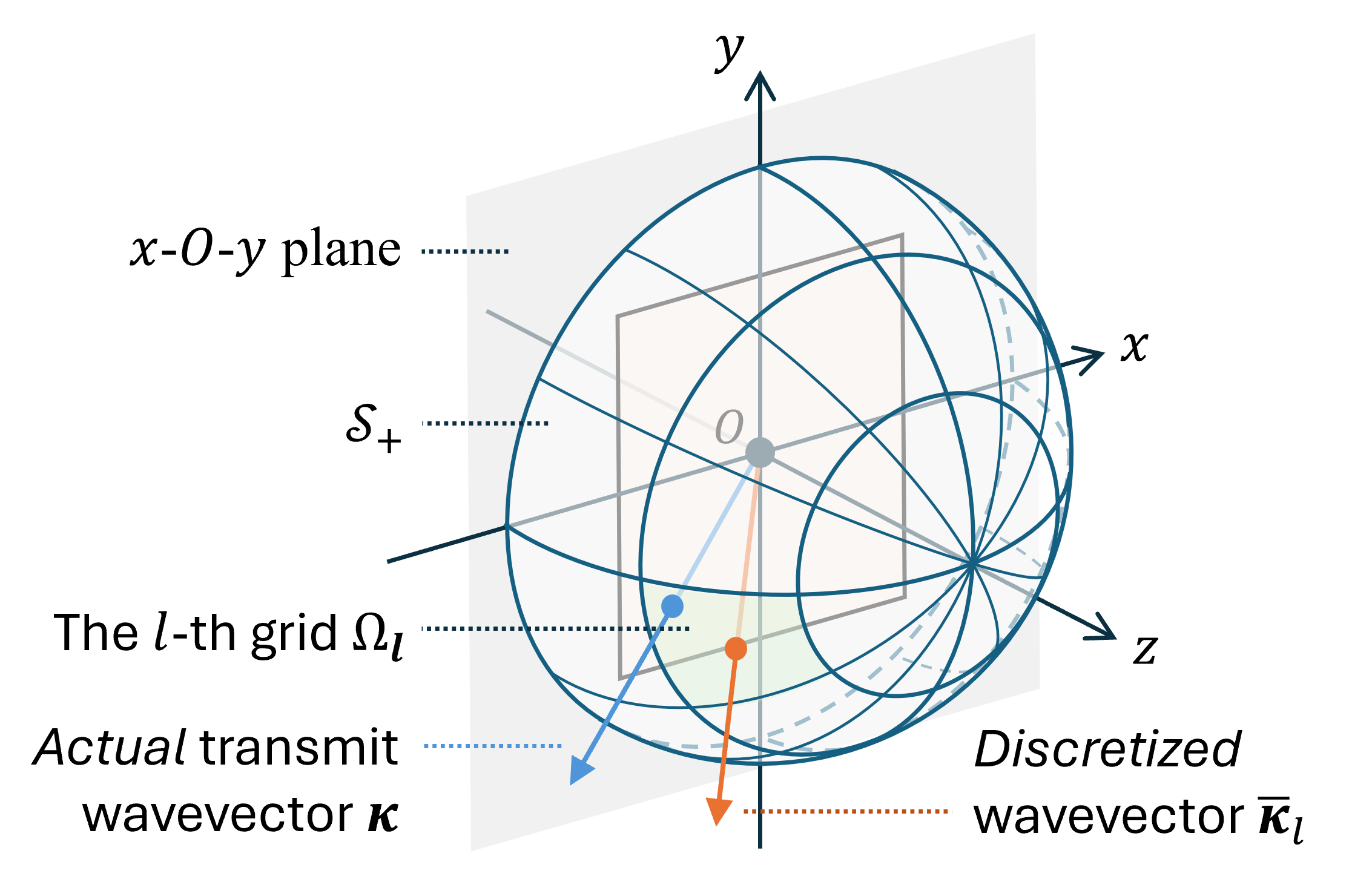}
                \caption{Discretized wavevectors. }
                \label{fig:tx-discretized-wavevector}
            \end{center}
            \vspace{-4pt}
        \end{figure}


\section{Problem Formulation and CEBAP}\label{sec:prob-form-and-CEBAP}
    In this section, the MA optimization framework for maximizing the ergodic system utility is formulated and the proposed CEBAP is illustrated. 
    
    \vspace{-4pt}
    \subsection{Two-Timescale Utility Maximization}\label{subsec:two-timescale-utility-maximization}
        Based on the Gaussian mixture channel model developed in Section~\ref{subsec:cell-specific-stat-channel}, a two-timescale optimization framework is formulated, where the precoding is designed catering to instantaneous CSI while antenna positions are optimized based on long-term cell-specific S-CSI to maximize the ergodic system utility. 
        Without loss of generality, we define the utility as a function of the number of users and their SINR, i.e., $g(\boldsymbol{\gamma}; K)$, where $\boldsymbol{\gamma} = [\gamma_{1}, \ldots, \gamma_{K}]^{T}\in\mathbb{R}_{+}^{K\times 1}$ with $\gamma_{k}$, the SINR of the $k$-th user, given by
        \begin{equation}\label{def:sinr-user-k}
            {\gamma}_{k} = \frac{\big|\boldsymbol{h}_{k}^H\boldsymbol{w}_{k}\big|^2}{\sigma^2 + \sum_{i\neq k}{\big|\boldsymbol{h}_{k}^H\boldsymbol{w}_{i}\big|^2}}, ~\forall k. 
        \end{equation}
        In particular, the utility function $g(\boldsymbol{\gamma}; K)$ is assumed non-decreasing w.r.t. each SINR $\gamma_{k}$, $\forall k$. 
        Therefore, the two-timescale optimization framework can be formulated as
        \begin{subequations}\label{prob:two-timescale-optm}
            \allowdisplaybreaks
            \begin{align}
                & \max_{\boldsymbol{x}, \boldsymbol{y}} ~\mathbb{E}_{K, \boldsymbol{H}}\left[
                    \max_{\boldsymbol{W}} ~g(\boldsymbol{\gamma}; K)
                \right] \tag{\ref{prob:two-timescale-optm}}\\
                & ~\text{s.t.} ~~\text{tr}\left(
                    \boldsymbol{W}\boldsymbol{W}^{H}
                \right)\le {P}_{T}, \label{prob-constraint:tx-power} \\
                & ~~~~~~ |x_n|\le\frac{{S}_{x}}{2}, |y_n|\le\frac{{S}_{y}}{2}, ~\forall n, \label{prob-constraint:ma-region}\\
                & ~~~~~~ \|\boldsymbol{r}_{n} - \boldsymbol{r}_{i}\|_2\ge \Delta, ~\forall n\neq i, \label{prob-constraint:ma-separation}
            \end{align}
        \end{subequations}
        where constraint~\eqref{prob-constraint:tx-power} confines the maximum transmit power ${P}_{T}$, constraint~\eqref{prob-constraint:ma-region} is resulted from the limited antenna moving region, and constraint~\eqref{prob-constraint:ma-separation} specifies the minimum inter-antenna spacing $\Delta$, which is typically set as $\Delta = \lambda/2$. 
        

    \subsection{ZF Precoding}\label{subsec:zf-precoding}
        Given $\boldsymbol{H}$, the optimal precoding depends on the utility function. 
        Nevertheless, the ZF precoding can be shown asymptotically optimal for arbitrary utility in the form of $g(\boldsymbol{\gamma}; K)$ as the transmit signal-to-noise ratio (SNR) $P_{T}/\sigma^{2}\to +\infty$, according to~\cite{ref:emil-optimal-bf}. 
        Meanwhile, it was demonstrated in~\cite{ref:my-ma-tcom} that the major gain achieved by antenna position optimization based on statistical CSI comes from its capability to reduce user channel correlation. 
        Therefore, we consider the high SNR region and adopt ZF precoding, where inter-user interference dominates the system performance, suggesting more pronounced advantages of MA-aided systems. 

        Specifically, we have $\boldsymbol{W} = \boldsymbol{H}(\boldsymbol{H}^{H}\boldsymbol{H})^{-1}\mathrm{Diag}(\boldsymbol{p})^{\frac{1}{2}}$, where $\boldsymbol{C} = \boldsymbol{H}^{H}\boldsymbol{H}\in\mathbb{C}^{K\times K}$ is invertible with probability $1$ and $\boldsymbol{p} = [p_{1}, \ldots, p_{K}]^{T}\in\mathbb{R}_{+}^{K\times 1}$ is the power allocation vector. 
        Then, the SINRs are given by $\boldsymbol{\gamma} = \boldsymbol{p}/\sigma^{2}$ and the transmit power constraint~\eqref{prob-constraint:tx-power} can be rewritten as 
        \begin{equation}\label{eq:zf-transmit-power-constraint}
            \mathrm{tr}\left(\boldsymbol{W}\boldsymbol{W}^{H}\right) = \mathrm{tr}\left(\boldsymbol{C}^{-1}\mathrm{Diag}(\boldsymbol{p})\right) = \sum_{k = 1}^{K}{\frac{p_{k}}{c_{k}}} \le {P}_{T}, 
        \end{equation}
        where $c_{k} = [\boldsymbol{C}^{-1}]_{kk}^{-1}\in\mathbb{R}$ is the reciprocal of the $k$-th diagonal element of matrix $\boldsymbol{C}^{-1}$. 
        Particularly, by using the woodbury identity~\cite{ref:woodbury-identity}, ${c}_{k}$ can be equivalently written as 
        \begin{equation}\label{eq:corr-vec-power-interpretation}
            c_{k} = \|\boldsymbol{h}_{k}\|_{2}^{2} - \boldsymbol{h}_{k}^{H}\boldsymbol{H}_{\sim k}\left(
                \boldsymbol{H}_{\sim k}^{H}\boldsymbol{H}_{\sim k}
            \right)^{-1}\boldsymbol{H}_{\sim k}^{H}\boldsymbol{h}_{k}, 
        \end{equation}
        where $\boldsymbol{H}_{\sim k} = [\boldsymbol{h}_{1}, \ldots, \boldsymbol{h}_{k - 1}, \boldsymbol{h}_{k + 1}, \ldots, \boldsymbol{h}_{K}]\in\mathbb{C}^{N\times(K - 1)}$ is the sub-matrix of $\boldsymbol{H}$ with its $k$-th column removed, $\forall k$. 
        Note that two terms in the right-hand-side of equation~\eqref{eq:corr-vec-power-interpretation} represent the total power of $\boldsymbol{h}_{k}$ and its power that falls within the subspace spanned by columns of $\boldsymbol{H}_{\sim k}$, respectively. 
        Thus, $c_{k}$ can be interpreted as the \textit{decorrelated channel power gain} for user $k$ w.r.t. other users. 
        As such, the instantaneous utility maximization problem becomes 
        \begin{equation}\label{prob:zf-power-alloc}
            \max_{\boldsymbol{p}} ~g(\boldsymbol{p}/\sigma^{2}; K), ~\mathrm{s.t.} ~\boldsymbol{p}\ge\boldsymbol{0}, ~\sum_{k = 1}^{K}{\frac{p_{k}}{c_{k}}}\le {P}_{T}. 
        \end{equation}
        Depending on the utility function, the optimal power allocation strategy varies, e.g., the water-filling algorithm for maximizing weighted sum rate. 
        By defining vector $\boldsymbol{c} = [c_{1}, \ldots, c_{K}]^{T}\in\mathbb{R}_{+}^{K\times 1}$ and denoting $\boldsymbol{p}^{\star}$ as the optimal solution to problem~\eqref{prob:zf-power-alloc}, we denote the optimal utility value as a function $\mathcal{G}(\boldsymbol{c}; K) = g(\boldsymbol{p}^{\star}/\sigma^{2}; K)$ of $\boldsymbol{c}$ and $K$. 
        The objective function for antenna position optimization in problem~\eqref{prob:two-timescale-optm} is given by 
        \begin{equation}\label{def:ma-optm-objective-w-zf}
            f(\boldsymbol{x}, \boldsymbol{y}) = \mathbb{E}_{K, \boldsymbol{H}}\left[
                \mathcal{G}(\boldsymbol{c}; K)
            \right]. 
        \end{equation}

        \begin{remark}
            Note that ZF precoding is only employed here for MA optimization, while different precoding methods can be applied in practice to cater to the specific utility function once the antennas have been deployed at the optimized positions. 
        \end{remark}

    \subsection{Asymptotic Approximation}\label{subsec:asymp-approx}
        Despite the simplification of the objective by applying ZF for problem~\eqref{prob:two-timescale-optm}, the utility-dependent power allocation and expectation over random channel realizations remain challenging for efficient implementations. 
        To address this issue, asymptotic analysis and approximations are employed for vector $\boldsymbol{c}$ in this subsection for more tractable antenna position optimization. 
        Specifically, by taking the expectation over $\boldsymbol{H}$ given $K$, $c_{k}$ can be asymptotically approximated by a constant $\rho_{K}$ detailed as follows. 
        \begin{proposition}\label{prop:asymp-approx}
            Based on the Gaussian mixture channel model, we have 
            $\mathbb{E}_{\boldsymbol{H}}[c_{k} | K] - \rho_{K} \rightarrow 0$, $\forall k$, as $N, K, L_{m}\to +\infty$ at the same rate while the total averaged channel power gain for each subregion and the ratio of maximum and minimum power responses among $\varrho_{m, 1}, \ldots, \varrho_{m, L_{m}}$ are bounded, $\forall m$. 
            In particular, $\rho_{K}$ is defined as the asymptotic decorrelated channel power gain and can be approximately solved as the unique positive solution of the following equation: 
            \begin{equation}\label{def:corr-factor-fixed-point}
                \mathrm{tr}\left[
                    \bar{\boldsymbol{G}}\left(
                        \rho_{K}\boldsymbol{I}_{N} + (K - 1)\bar{\boldsymbol{G}}
                    \right)^{-1}
                \right] = 1. 
            \end{equation}
        \end{proposition}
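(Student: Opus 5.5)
I will start from the representation~\eqref{eq:corr-vec-power-interpretation}, or equivalently $c_{k} = \lim_{z\to 0^{+}} z\,\boldsymbol{h}_{k}^{H}(z\boldsymbol{I}_{N} + \boldsymbol{H}_{\sim k}\boldsymbol{H}_{\sim k}^{H})^{-1}\boldsymbol{h}_{k}$. This is the standard Schur-complement identity for the projection onto the orthogonal complement of the span of $\boldsymbol{H}_{\sim k}$.

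Since $\boldsymbol{h}_{k}$ is independent of $\boldsymbol{H}_{\sim k}$, conditioning on the subregion index $m$ of user $k$ and applying the trace lemma gives
\[
z\,\boldsymbol{h}_{k}^{H}\boldsymbol{R}(z)\boldsymbol{h}_{k} - z\,\mathrm{tr}[\boldsymbol{G}_{m}\boldsymbol{R}(z)] \to 0
\]
almost surely, where $\boldsymbol{R}(z) = (z\boldsymbol{I}_{N} + \boldsymbol{H}_{\sim k}\boldsymbol{H}_{\sim k}^{H})^{-1}$. The trace lemma needs $\|\boldsymbol{G}_{m}\|$ bounded, which follows from the bounded total power and the bounded max/min ratio of the path powers. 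The quantity $z\,\mathrm{tr}[\boldsymbol{G}_{m}\boldsymbol{R}(z)]$ is then analysed with the deterministic equivalent for Gram matrices with independent, non-identically distributed columns (Hachem–Loubaton–Najim, or Wagner et al.\ for correlated MISO). The columns of $\boldsymbol{H}_{\sim k}$ are i.i.d.\ Gaussian mixtures. Conditioned on the subregion labels, they are independent $\mathcal{CN}(\boldsymbol{0},\boldsymbol{G}_{m_i})$ vectors, so the deterministic equivalent reads
\[
\boldsymbol{R}(z) \asymp \Bigl(z\boldsymbol{I}_{N} + \sum_{i\neq k}\frac{\boldsymbol{G}_{m_i}}{1+\delta_{m_i}(z)}\Bigr)^{-1}, \qquad \delta_{m}(z) = \mathrm{tr}[\boldsymbol{G}_{m}\boldsymbol{R}(z)].
\]
By the law of large numbers over the labels (as $K\to\infty$ with $\mu_m$ fixed), the sum $\sum_{i\ne k}$ is replaced by $(K-1)\sum_m \mu_m \boldsymbol{G}_m/(1+\delta_m)$.

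Next I pass to $z\to 0$. For $K<N$ the quantities $\delta_m(z)$ diverge like $1/z$, so I set $z(1+\delta_m(z)) \to z\,\delta_m(z) \to e_m$, where $e_m$ is the limit of $c_k$ for a user in subregion $m$. This yields the fixed-point system
\[
e_m = \mathrm{tr}\Bigl[\boldsymbol{G}_m\Bigl(\boldsymbol{I}_N + (K-1)\sum_{m'}\mu_{m'}\boldsymbol{G}_{m'}/e_{m'}\Bigr)^{-1}\Bigr],
\]
and also $\mathbb{E}[c_k]\approx\sum_m \mu_m e_m$. This is where the paper's word ``approximately'' enters. I will replace the subregion-dependent $e_m$ by a single common value $\rho_K$, justified by the fact that the mixture covariance seen by the interferers is $\bar{\boldsymbol{G}}$. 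Equivalently, I treat $\boldsymbol{h}_k$ as having second-order statistics $\bar{\boldsymbol{G}}$, which is exact for the mean $\mathbb{E}\,\mathrm{tr}[\boldsymbol{G}_m\boldsymbol{R}] = \mathrm{tr}[\bar{\boldsymbol{G}}\boldsymbol{R}]$. Then the system collapses to $\rho_K = \mathrm{tr}[\bar{\boldsymbol{G}}(\boldsymbol{I}_N + (K-1)\bar{\boldsymbol{G}}/\rho_K)^{-1}]$. Dividing by $\rho_K$ gives exactly~\eqref{def:corr-factor-fixed-point}.

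Uniqueness follows because $\rho\mapsto\mathrm{tr}[\bar{\boldsymbol{G}}(\rho\boldsymbol{I}_N+(K-1)\bar{\boldsymbol{G}})^{-1}]$ is continuous and strictly decreasing on $(0,\infty)$. It tends to $0$ as $\rho\to\infty$. As $\rho\to 0^{+}$ it tends to $\mathrm{rank}(\bar{\boldsymbol{G}})/(K-1)>1$, provided the rank exceeds $K-1$, which holds for $K\le N$ with generic positions and many paths. Finally, I upgrade almost-sure convergence to convergence of expectations by uniform integrability: $0\le c_k\le\|\boldsymbol{h}_k\|^2$, and $\|\boldsymbol{h}_k\|^2$ has bounded second moment.

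The main obstacle is the $z\to0$ limit interchange: the deterministic equivalents are proven for fixed $z>0$. I would handle it by requiring $\liminf (N-K)/N>0$, i.e.\ $\limsup K/N<1$ under ``same rate'', and using a lower bound on the smallest nonzero eigenvalue of $\boldsymbol{H}_{\sim k}^H\boldsymbol{H}_{\sim k}/N$ to get equicontinuity in $z$. The second weak point is the collapse $e_m\to\rho_K$. I expect to present it honestly as the approximation step and not as an exact asymptotic equality.
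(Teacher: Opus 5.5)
Your proposal follows the paper's proof in Appendix~\ref{appendix:asymp-approx-proposition} essentially step for step: the paper imports the ZF deterministic equivalent $c_k^{\infty}=\mathrm{tr}(\boldsymbol{G}_{\mathcal{M}(k)}\boldsymbol{Y}_k^{-1})$ from~\cite{ref:my-ma-tcom}, which you rederive from the regularized resolvent by letting $z\to 0^{+}$. It then uses the law of large numbers over subregion labels to get $k$-independent per-subregion constants $\bar{e}_m^{-1}$ (your $e_m$), sets $\rho_K=\sum_m\mu_m\bar{e}_m^{-1}$, and reaches~\eqref{def:corr-factor-fixed-point} only approximately by replacing every $\bar{e}_m^{-1}$ with $\rho_K$, which is exactly the collapse you flag as heuristic.

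Your rank condition for uniqueness and your uniform-integrability step are refinements the paper omits. One flaw, though: bounded total power and a bounded max/min ratio constrain only $\boldsymbol{\varrho}_m$, not $\boldsymbol{Q}_m$, so they do not by themselves bound $\|\boldsymbol{G}_m\|$. For example, paths packed in a narrow angular cluster make $\boldsymbol{G}_m$ nearly rank one with norm of order $N\beta_m$, where $\beta_m=\sum_l\varrho_{m,l}$. Your trace-lemma step therefore needs an extra conditioning assumption such as $\|\boldsymbol{Q}_m\|^2=\mathcal{O}(L_m)$, which the paper also leaves implicit in its citation.
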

        \begin{proof}[Proof\textup{:}\nopunct]
            Please refer to Appendix~\ref{appendix:asymp-approx-proposition}. 
        \end{proof}
        \noindent Notably, according to Proposition~\ref{prop:asymp-approx}, the expected decorrelated channel power gain for each user becomes asymptotically identical, which is independent of the subregion-specific information but only relies on the cell-specific channel covariance matrix $\bar{\boldsymbol{G}}$. 

        Next, we move the expectation in equation~\eqref{def:ma-optm-objective-w-zf} onto $c_{k}$, $1\le k\le K$, and define function $\mathcal{G}^{\infty}(\rho; K) \triangleq \mathcal{G}(\rho\boldsymbol{1}_{K}; K)$ for $\rho\ge 0$, $\forall K$. 
        Under the conditions of Proposition~\ref{prop:asymp-approx}, we have
        \begin{subequations}
            \begin{align}
                \mathbb{E}_{\boldsymbol{H}}[\mathcal{G}(\boldsymbol{c}; K) | K] & \approx \mathcal{G}(\mathbb{E}_{\boldsymbol{H}}[\boldsymbol{c} | K]; K) \approx \mathcal{G}(\rho_{K}\boldsymbol{1}_{K}; K) \\
                & = \mathcal{G}^{\infty}(\rho_{K}; K). 
            \end{align}
        \end{subequations}
        Hence, we propose to approximate $f(\boldsymbol{x}, \boldsymbol{y})$ as
        \begin{equation}
            f(\boldsymbol{x}, \boldsymbol{y}) = \mathbb{E}_{K}\left[
                \mathbb{E}_{\boldsymbol{H}}\left[
                    \mathcal{G}(\boldsymbol{c}; K) | K
                \right]
            \right]\approx\mathbb{E}_{K}\left[
                \mathcal{G}^{\infty}(\rho_{K}; K)
            \right], 
        \end{equation}
        which further simplifies the objective function for antenna position optimization by approximating the expectation over random channel realizations. 
        
        Nevertheless, the asymptotic approximation requires knowledge of $\rho_{K}$, $\forall K$, which is generally difficult to obtain in practice. 
        For $K = 1$, we have $\rho_{1} = \mathrm{tr}(\bar{\boldsymbol{G}}) = (N\beta)$, representing the expected channel power gain with zero interference. 
        For $K > 1$, $\rho_{K}$ is implicitly defined via equation~\eqref{def:corr-factor-fixed-point} and cannot be solved in closed form. 
        To address this issue, the eigenvalue decomposition is employed to the Hermitian channel covariance matrix as $\bar{\boldsymbol{G}} = \boldsymbol{U}\mathrm{Diag}(\boldsymbol{\lambda})\boldsymbol{U}^{H}$, where $\boldsymbol{U}\in\mathbb{C}^{N\times N}$ is unitary and $\boldsymbol{\lambda} = [\lambda_{1}, \ldots, \lambda_{N}]^{T}\in\mathbb{R}_{+}^{N\times 1}$ denotes the vector of eigenvalues. 
        Then, the left-hand-side of equation~\eqref{def:corr-factor-fixed-point}, denoted by function $\xi_{K}(\rho_{K})$, can be equivalently written as follows:
        \begin{equation}\label{def:corr-facotr-xi-func}
            \xi_{K}(\rho) = \sum_{n = 1}^{N}\frac{\lambda_{n}}{\rho + (K - 1)\lambda_{n}}, ~\rho\ge 0, ~K = 2, \ldots, N. 
        \end{equation}
        Therefore, $\rho_{K}$ is the unique positive solution of equation $\xi_{K}(\rho) = 1$, which can be efficiently solved via the Newton's method~\cite{ref:newton-method}. 
        Specifically, define $J_{K}(\rho) = \partial{\xi_{K}(\rho)}/\partial{\rho}$ as the derivative of $\xi_{K}(\rho)$ w.r.t. $\rho \ge 0$, which is given by
        \begin{equation}\label{def:corr-factor-xi-func-deriv}
            J_{K}(\rho) = - \sum_{n = 1}^{N}{
                \frac{\lambda_{n}}{(\rho + (K - 1)\lambda_{n})^{2}}
            }.
        \end{equation}
        By denoting $\rho_{K}^{(0)} = 0$ as initialization and $I_{\text{c}}$ as the maximum number of iterations, $\rho_{K}$ can be solved via the following iterative updates: 
        \begin{equation}\label{def:newtons-iterations}
            \rho_{K}^{(i + 1)} = \rho_{K}^{(i)} - J_{K}(\rho_{K}^{(i)})^{-1}\left[\xi_{K}(\rho_{K}^{(i)}) - 1\right], 
        \end{equation}
        where $\rho_{K}^{(i)}$ is the value updated in the $i$-th iteration, $0 \le i\le I_{\text{c}} - 1$. 
        The convergence of the Newton's method can be easily verified as $\xi_{K}(\rho)$ is convex and decreasing with $\rho\ge 0$, which is detailed in Appendix~\ref{appendix:newton-convergence}. 

    \subsection{CEBAP Solution}\label{subsec:cebap}
        From equation~\eqref{def:corr-factor-fixed-point} and definition of function $\mathcal{G}^{\infty}$, it can be easily verified that $\rho_{K}$ decreases with the user number $K$ while $\mathcal{G}^{\infty}(\rho; K)$ strictly increases with $\rho$. 
        By considering the worst case where the number of users equals that of BS antennas, i.e., replacing $\rho_{K}$ with $\rho_{N}$, we have $\mathcal{G}^{\infty}(\rho_{K}; K)\ge\mathcal{G}^{\infty}(\rho_{N}; K)$ and thus 
        \begin{equation}\label{eq:worse-case-lowerbound}
            f(\boldsymbol{x}, \boldsymbol{y}) \approx \mathbb{E}_{K}\left[
                \mathcal{G}^{\infty}(\rho_{K}; K)
            \right] \ge \mathbb{E}_{K}\left[
                \mathcal{G}^{\infty}(\rho_{N}; K)
            \right]. 
        \end{equation}
        Note that the right-hand-side of the inequality in~\eqref{eq:worse-case-lowerbound} serves as an approximate worst-case lower-bound on the objective function $f(\boldsymbol{x}, \boldsymbol{y})$ and is strictly decreasing with $\rho_{N}$. 
        Therefore, instead of optimizing $f(\boldsymbol{x}, \boldsymbol{y})$ that explicitly relies on the utility function and user number distribution, we propose to maximize the asymptotic decorrelated channel power gain $\rho_{N}$, which equivalently maximizes the lower-bound and therefore intermediately increases $f(\boldsymbol{x}, \boldsymbol{y})$. 
        Specifically, the antenna position optimization problem for ergodic utility maximization is finally relaxed to the following problem:
        \begin{equation}\label{prob:cebap}
            \max_{\boldsymbol{x}, \boldsymbol{y}} ~\rho_{N}, ~\mathrm{s.t.} ~\eqref{prob-constraint:ma-region},\eqref{prob-constraint:ma-separation}. 
        \end{equation}
        Since $\rho_{N}$ is determined by $N$ and $\bar{\boldsymbol{G}}$, problem~\eqref{prob:cebap} can be solved with the knowledge of APS alone at the BS, which is necessary for computing $\bar{\boldsymbol{G}}$, regardless of the transmit AoDs and power responses for all subregions, the user distribution, or even the utility function. 
        This significantly reduces the channel acquisition overhead and enhances robustness of the system performance against various demands. 

        The optimal solution for problem~\eqref{prob:cebap}, denoted by $(\boldsymbol{x}^{\star}, \boldsymbol{y}^{\star})$, is referred to as the CEBAP and will be numerically solved later by the proposed LOBPO method in Section~\ref{sec:proposed-LOBPO}. 
        In particular, it can be shown that CEBAP effectively balances the eigenvalues of the channel covariance matrix. 
        Specifically, by noting that $\rho_{N}$ depends on the eigenvalues $\lambda_{n}$, $1\le n\le N$, and that $\sum_{n = 1}^{N}\lambda_{n} = \mathrm{tr}(\bar{\boldsymbol{G}}) = N\beta$, the following lemma and corollary can be easily verified and thus their proofs are omitted. 
        \begin{lemma}\label{lemma:corr-factor-w-eigens}
            For arbitrary two indices $n\neq i$ satisfying $\lambda_{n} \ge \lambda_{i}$, $\rho_{N}$ decreases with $\delta = \lambda_{n} - \lambda_{i}\ge 0$ provided that all other eigenvalues are fixed. 
        \end{lemma}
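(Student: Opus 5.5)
The plan is to treat $\rho_N$ as an implicit function of the eigenvalues through $\xi_N(\rho)=1$ and show that spreading two eigenvalues apart, with their sum fixed, increases $\xi_N$ at every fixed $\rho$. Monotonicity of $\xi_N$ in $\rho$ then forces $\rho_N$ down. The sum must stay fixed because $\sum_{n}\lambda_n = \mathrm{tr}(\bar{\boldsymbol{G}}) = N\beta$ is constant. So varying $\delta$ with all other eigenvalues fixed means we write $\lambda_n = s/2+\delta/2$ and $\lambda_i = s/2-\delta/2$, with $s=\lambda_n+\lambda_i$ fixed and $\delta\in[0,s]$.

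First, define the scalar map
\begin{equation*}
\phi(\lambda) = \frac{\lambda}{\rho+(N-1)\lambda}, \qquad \lambda\ge 0,
\end{equation*}
for fixed $\rho>0$. Direct differentiation gives
\begin{equation*}
\phi'(\lambda) = \frac{\rho}{(\rho+(N-1)\lambda)^2}, \qquad \phi''(\lambda) = -\frac{2(N-1)\rho}{(\rho+(N-1)\lambda)^3} < 0 .
\end{equation*}
So $\phi$ is strictly concave on $\lambda\ge 0$ when $N\ge 2$.

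Next, only two terms of $\xi_N(\rho)$ depend on $\delta$, namely $\phi(s/2+\delta/2)+\phi(s/2-\delta/2)$. Its derivative in $\delta$ is
\begin{equation*}
\tfrac12\bigl[\phi'(s/2+\delta/2)-\phi'(s/2-\delta/2)\bigr],
\end{equation*}
which is negative for $\delta>0$ because $\phi'$ is strictly decreasing. Hence, for each fixed $\rho>0$, $\xi_N(\rho)$ strictly decreases in $\delta$.

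Then transfer this to $\rho_N$ using the facts established for the Newton iteration: $\xi_N$ is continuous and strictly decreasing in $\rho$, since $J_N<0$, and $\xi_N(\rho_N)=1$. Take $\delta_1<\delta_2$ with roots $\rho^{(1)}$ and $\rho^{(2)}$. At $\rho^{(1)}$, the $\delta_2$ configuration gives $\xi<1$. Since $\xi$ is decreasing in $\rho$, its root must satisfy $\rho^{(2)}<\rho^{(1)}$, so $\rho_N$ decreases in $\delta$. Alternatively, the implicit function theorem gives
\begin{equation*}
\frac{d\rho_N}{d\delta} = -\frac{\partial\xi_N/\partial\delta}{J_N} < 0 .
\end{equation*}

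The main obstacle is interpretive rather than technical: the statement must be read with the eigenvalue sum held fixed, since otherwise the claim fails (for example, decreasing $\lambda_i$ alone lowers $\xi_N$ and raises $\rho_N$). I would state this constraint explicitly, justify it by the trace identity, and restrict $\delta$ to $[0,s]$ so that the ordering $\lambda_n\ge\lambda_i\ge 0$ is preserved. I would also note that existence and uniqueness of a positive root are already asserted in Proposition~\ref{prop:asymp-approx}.
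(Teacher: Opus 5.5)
Your proof is correct and is the verification the paper leaves to the reader. The paper omits the proof, pointing only to the trace identity, which indeed forces $\lambda_n+\lambda_i$ to stay fixed once all other eigenvalues are fixed. In your argument, strict concavity of $\lambda\mapsto\lambda/(\rho+(N-1)\lambda)$ lowers $\xi_N(\rho)$ under a fixed-sum spread, and monotonicity of $\xi_N$ in $\rho$ then pushes the root $\rho_N$ down.

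There is one slip in your last paragraph. Decreasing $\lambda_i$ alone lowers $\xi_N$ and hence \emph{lowers} $\rho_N$, by your own transfer step, so that example agrees with the lemma rather than breaking it. For instance, for $N=2$ one finds $\rho_2=\sqrt{\lambda_1\lambda_2}$. The example that actually breaks the unconstrained reading is raising $\lambda_n$ alone, which increases both $\delta$ and $\rho_N$.

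The same $N=2$ formula also shows the root degenerates to $0$ at your endpoint $\delta=s$. So the positive root from Proposition~\ref{prop:asymp-approx} exists only for $\delta<s$, and at $\delta=s$ you should set $\rho_N=0$.
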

        \begin{corollary}\label{coro:corr-factor-limits}
            The upper bound on $\rho_{N}$ is obtained when $\lambda_{n} = \beta$, $\forall n$, are identical, which yields $\rho_{N}^{\mathrm{max}} = \beta$. 
            In contrast, we have $\rho_{N}\to 0$ as $\lambda_{1}\to N\beta$ and $\lambda_{n}\to 0$ for $n\ge 2$. 
        \end{corollary}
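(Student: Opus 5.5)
The statement to prove is Corollary~\ref{coro:corr-factor-limits}. I would work directly with the characterization of $\rho_{N}$ as the unique positive root of $\xi_{N}(\rho)=1$, where
\begin{equation*}
\xi_{N}(\rho)=\sum_{n=1}^{N}\frac{\lambda_{n}}{\rho+(N-1)\lambda_{n}},
\end{equation*}
under the fixed-trace constraint $\sum_{n}\lambda_{n}=N\beta$. The upper bound will follow from a concavity (Jensen-type) argument, and the lower limit from a direct computation.

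\textbf{Upper bound.} Fix $\rho>0$ and set $\phi_{\rho}(\lambda)=\lambda/(\rho+(N-1)\lambda)$ for $\lambda\ge 0$. Its second derivative is
\begin{equation*}
\phi_{\rho}''(\lambda)=-\frac{2\rho(N-1)}{(\rho+(N-1)\lambda)^{3}}<0,
\end{equation*}
so $\phi_{\rho}$ is strictly concave. Jensen's inequality then gives
\begin{equation*}
\xi_{N}(\rho)\le N\phi_{\rho}(\beta)=\frac{N\beta}{\rho+(N-1)\beta},
\end{equation*}
with equality if and only if all $\lambda_{n}=\beta$.

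\textbf{Comparison with the balanced case.} Evaluate at $\rho=\rho_{N}$. Since $\xi_{N}(\rho_{N})=1$, we get $1\le N\beta/(\rho_{N}+(N-1)\beta)$, i.e., $\rho_{N}\le\beta$. When all eigenvalues equal $\beta$, solving $N\beta/(\rho+(N-1)\beta)=1$ gives exactly $\rho_{N}=\beta$, so the bound is attained there and $\rho_{N}^{\max}=\beta$. This step relies on $\xi_{N}$ being strictly decreasing in $\rho$, which is already used in the Newton convergence argument. Lemma~\ref{lemma:corr-factor-w-eigens} points the same way, since balancing any pair of eigenvalues increases $\rho_{N}$, but the Jensen argument avoids having to iterate pairwise transfers.

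\textbf{Degenerate limit.} Let $\lambda_{1}\to N\beta$ and $\lambda_{n}\to 0$ for $n\ge 2$. Suppose, for contradiction, that $\rho_{N}\ge\epsilon>0$ along a subsequence. Then for $n\ge2$ we have $\phi_{\rho_{N}}(\lambda_{n})\le\lambda_{n}/\epsilon\to 0$, while
\begin{equation*}
\phi_{\rho_{N}}(\lambda_{1})\le\frac{\lambda_{1}}{\epsilon+(N-1)\lambda_{1}}\to\frac{N\beta}{\epsilon+(N-1)N\beta}<\frac{1}{N-1}.
\end{equation*}
For $N\ge2$ this limit is below $1$, so $\xi_{N}(\rho_{N})<1$ eventually, contradicting $\xi_{N}(\rho_{N})=1$. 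Hence $\rho_{N}\to0$. More precisely, $\rho_{N}$ balances $\sum_{n\ge2}\lambda_{n}/\rho_{N}\approx (N-2)/(N-1)$, so $\rho_{N}=O(\sum_{n\ge2}\lambda_{n})$.

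\textbf{Main obstacle.} The delicate part is the limit case: the positive root must exist and behave continuously as the $\lambda_{n}$ degenerate. At the exact degenerate point, $\xi_{N}(\rho)=N\beta/(\rho+(N-1)N\beta)<1$ for every $\rho>0$, so no positive root exists and only the limiting statement is meaningful. The subsequence contradiction argument above handles this.
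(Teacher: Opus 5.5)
Your proof is correct, but it takes a different route from the one the paper implies. The paper omits the proof and presents the statement as a corollary of Lemma~\ref{lemma:corr-factor-w-eigens}. The intended argument is that equalizing any pair of eigenvalues, with their sum fixed by $\mathrm{tr}(\bar{\boldsymbol{G}})=N\beta$, can only increase $\rho_N$. So the maximum is reached at $\lambda_n=\beta$ for all $n$, where $N\beta/(\rho+(N-1)\beta)=1$ gives $\rho_N=\beta$, and the degenerate case is read off by substitution into $\xi_N$.

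You instead apply Jensen's inequality once, at $\rho=\rho_N$, to the strictly concave map $\lambda\mapsto\lambda/(\rho+(N-1)\lambda)$. This gives $\rho_N\le\beta$, with equality exactly in the balanced case. Your argument is therefore self-contained: it does not lean on Lemma~\ref{lemma:corr-factor-w-eigens}, which the paper also leaves unproved. It also spares you showing that finitely many pairwise transfers reach the uniform vector. The paper's framing, on the other hand, conveys the stronger message that every balancing step helps. Your concavity computation in fact proves that lemma too: for fixed $\rho$, $\xi_N$ is Schur-concave in $\boldsymbol{\lambda}$, and since $\xi_N$ decreases in $\rho$, so is $\rho_N$.

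Your treatment of the degenerate limit is more careful than plain substitution. As you observe, there is no positive root at the limit point itself. Indeed $\xi_N(0^{+})$ equals $\mathrm{rank}(\bar{\boldsymbol{G}})/(N-1)$, so a positive root exists only for full-rank $\bar{\boldsymbol{G}}$. Your subsequence argument is the right way to make the statement precise.

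One side remark, not needed for the corollary, is inaccurate. The heuristic $\sum_{n\ge2}\lambda_n/\rho_N\approx(N-2)/(N-1)$ presumes $\lambda_n\ll\rho_N$ for $n\ge2$, which need not hold. With $\lambda_n=\delta$ for all $n\ge2$ one gets $\rho_N\approx\frac{N-1}{N-2}\delta$, which is comparable to $\delta$. What the equation actually gives, for $N\ge3$, is the inequality $\rho_N<\frac{N-1}{N-2}\sum_{n\ge2}\lambda_n$. This follows from $\lambda_1/(\rho_N+(N-1)\lambda_1)<1/(N-1)$ and $\lambda_n/(\rho_N+(N-1)\lambda_n)\le\lambda_n/\rho_N$. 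For $N=2$ the rate claim fails outright: $\xi_2(\rho)=1$ solves to $\rho_2=\sqrt{\lambda_1\lambda_2}$, which is not $O(\lambda_2)$.
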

        Therefore, $\rho_{N}$ is generally smaller when the eigenvalues of the channel covariance matrix are more diverse, while maximizing $\rho_{N}$ enforces more balanced eigenvalues even though the upper bound $\rho_{N}^{\text{max}}$ is not necessarily available. 
        As such, the total spectrum power is spread out over the $N$-dimensional channel space instead of being concentrated within a smaller subspace, thereby reducing the correlation between users' channels and increasing their decorrelated channel power gain to the most extent, which explains why CEBAP is consistently effective for improving various utilities. 
        Furthermore, it can be verified that $\rho_{N}$ is independent of the transmit power $P_{T}$, indicating that the same CEBAP can be applied regardless of the system operating SNR.

    \subsection{Discussions}\label{subsec:discussions}
        In this subsection, we consider a von Mises-Fisher (vMF)-type APS~\cite{ref:bayesian-vmf} to evaluate the effectiveness of CEBAP. 
        This choice provides both practical insights and analytical generality, because vMF distributions are widely used to model clustered users or scatterer directions~\cite{ref:spatial-corr-vmf, ref:general-spatial-corr-3d-vmf}, and general spherical functions can be accurately approximated by mixtures of vMF components~\cite{ref:clustering-vmf}, thus making the subsequent analysis applicable to arbitrary APS. 
        
        Specifically, denote $\omega_{l}$ as the surface area of the $l$-th solid angle grid $\Omega_{l}$ over $\mathcal{S}_{+}$, 
        $\forall l$. 
        Then, we define the angular power spectrum density (APSD)\footnote{Since the surface areas of discretized solid angle grids are not uniform, spectrum power for grids with larger surface areas tend to be higher, indicating that the APS depends on the angular discretization scheme. In contrast, the APSD is invariant given the scattering environment and user distribution with sufficiently large $N_{E}$ and $N_{A}$. Thus, the angular distribution of channel power gain across the cell can be reflected more properly by evaluating the APSD, despite that it is is easier to formulate and solve the proposed CEBAP based on the APS. } at grid $\Omega_{l}$ by dividing $b_{l}$ with the surface area $\omega_{l}$, representing the channel power density along $\bar{\boldsymbol{\kappa}}_{l}$, which is assumed to follow the vMF distribution:
        \begin{equation}\label{def:vmf-type-aps}
            \frac{b_{l}}{\omega_{l}} = \frac{1}{B}\exp(\boldsymbol{\nu}^{T}\bar{\boldsymbol{\kappa}}_{l}), ~l = 1, \ldots, L_{0}, 
        \end{equation}
        where $\boldsymbol{\nu} = [\nu_x, \nu_{y}, \nu_{z}]^{T}\in\mathbb{R}^{3\times 1}$, while $B$ is a normalization factor such that $\sum_{l = 1}^{L_{0}}{b_{l}} = \beta$. 
        As such, $\boldsymbol{b}$ can be regarded as the APS obtained for a user cluster in the direction of $\boldsymbol{\nu}$, where $\nu_{0} = \|\boldsymbol{\nu}\|_{2}$ is defined as the concentration factor and $\hat{\boldsymbol{\nu}} = \boldsymbol{\nu}/\nu_{0}$ denotes the cluster direction. 
        In Fig.~\ref{subfig:vmf-apsd-wide} and~\ref{subfig:vmf-apsd-narrow}, two APSDs are shown as examples given $\hat{\boldsymbol{\nu}} = [0, 1/2, \sqrt{3}/2]^{T}$ for $\nu_{0} = 0.1$ and $\nu_{0} = 1$, respectively, which is projected from the hemisphere $\mathcal{S}_{+}$ onto the $x$-$O$-$y$ plane while $\kappa^{x}$ and $\kappa^{y}$ are normalized by $\kappa_{0}$. 
        The total power of the APSDs is normalized to $1$ and the carrier frequency is set as $f_{c} = 5$ GHz, yielding $\lambda = 6$ cm and $\kappa_{0} = 104.72$. 
        Notably, the power density is more focused around $\hat{\boldsymbol{\nu}}$ for a larger concentration factor $\nu_{0}$, while the power density along directions far from $\hat{\boldsymbol{\nu}}$ becomes negligible, indicating a more concentrated user cluster and thus higher user channel correlation. 

        \begin{figure}[t!]
            \centering
            {
                \begin{subfigure}[t]{0.235\textwidth}
                    \centering
                    \includegraphics[scale = 0.33]{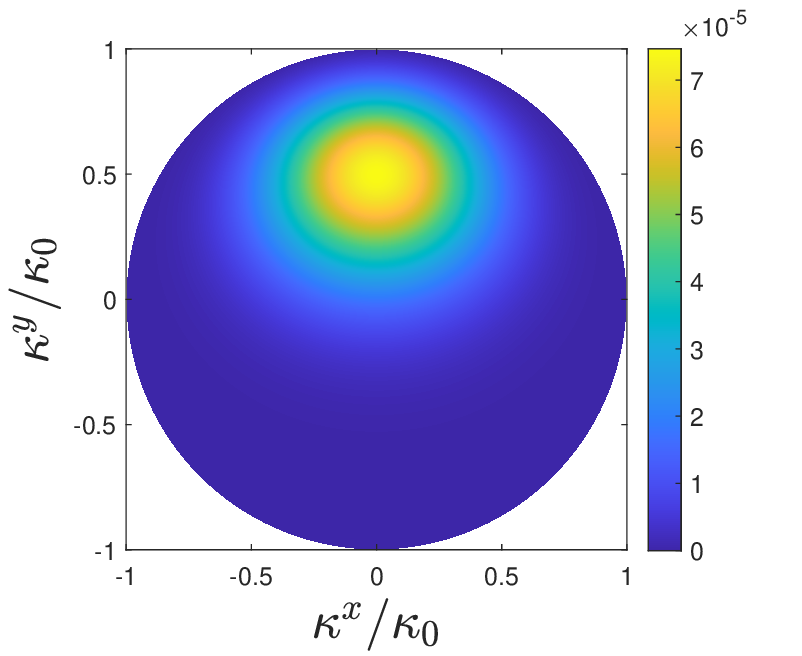}
                    \caption{$\nu_{0} = 0.1$. }
                    \label{subfig:vmf-apsd-wide}
                \end{subfigure}
                \begin{subfigure}[t]{0.24\textwidth}
                    \centering
                    \includegraphics[scale = 0.33]{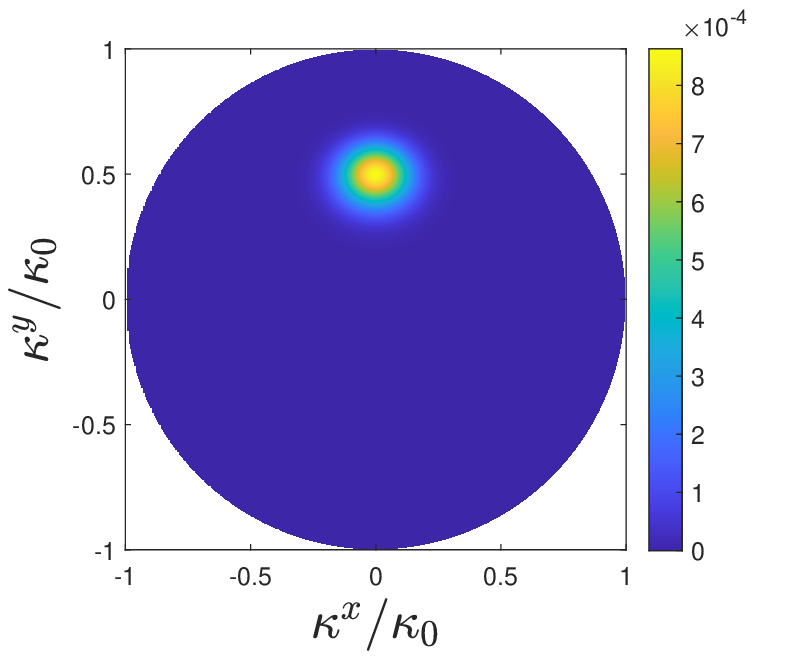}
                    \caption{$\nu_{0} = 1$. }
                    \label{subfig:vmf-apsd-narrow}
                \end{subfigure}
            }
            \vspace{-2pt}
            \caption{Examples for vMF-type APSDs given $\kappa_{0} = 104.72$ and $\hat{\boldsymbol{\nu}} = [0, 1/2, \sqrt{3}/2]^{T}$ with normalized total power. }
            \label{fig:vmf-apsd}
            \vspace{-2pt}
        \end{figure}

        Next, assuming $\nu_{z} > 0$ and sufficiently large $N_{E}$, $N_{A}$, and $\nu_{0}$ for~\eqref{def:vmf-type-aps} and defining $\boldsymbol{\delta}_{n, i} = \tilde{\boldsymbol{r}}_{n} - \tilde{\boldsymbol{r}}_{i}$ as the relative antenna position vector, $\forall n, i$, it can be shown that
        \begin{equation}\label{eq:vmf-corr-sinc}
            \left[\bar{\boldsymbol{G}}\right]_{ni} \approx \frac{4\pi\kappa_{0}^{2}}{B}\mathrm{sinc}(\kappa_{0}d_{n,i}), ~\forall n, i, 
        \end{equation}
        where $\mathrm{sinc}(0) = 1$ and $\mathrm{sinc}(x) = \sin{x}/x$ for any non-zero $x\in\mathbb{C}$, and $d_{n,i}$ is given by 
        \begin{equation}\label{def:vmf-corr-complex-dist}
            d_{n,i} = \sqrt{
                (\boldsymbol{\delta}_{n,i} + j\boldsymbol{\nu})^{T}(\boldsymbol{\delta}_{n,i} + j\boldsymbol{\nu})
            }\in\mathbb{C}. 
        \end{equation}
        Derivations for~\eqref{eq:vmf-corr-sinc} and~\eqref{def:vmf-corr-complex-dist} are given in Appendix~\ref{appendix-subsec:vmf-corr-derivation}. 
        In particular, it can be verified that as $\|\boldsymbol{\delta}_{n, i}\|_{2}\rightarrow +\infty$ given vector $\boldsymbol{\nu}$, we have
        \begin{equation}\label{eq:vmf-corr-sparse-limit}
            4\pi\kappa_{0}^{2}B^{-1}\mathrm{sinc}(\kappa_{0}d_{n, i})\rightarrow 0, 
        \end{equation}
        which is shown in Appendix~\ref{appendix-subsec:vmf-corr-sparse-limit-proof}. 
        Thus, all non-diagonal elements of matrix $\bar{\boldsymbol{G}}$ vanish if all antennas are separated with sufficiently large distances, leading to $\bar{\boldsymbol{G}}\approx\beta\boldsymbol{I}_{N}$ and $\rho_{N}\approx\rho_{N}^{\text{max}} = \beta$. 
        This reveals that any sparse array layout overspreading a sufficiently large MA region achieves approximately the same performance as the proposed CEBAP. 
        On the other hand, as $\nu_{0}\rightarrow +\infty$ given a finite MA region, i.e., the spectrum power is mostly concentrated along $\hat{\boldsymbol{\nu}}$, we have
        \begin{subequations}\label{eq:vmf-corr-concentrated-limit}
            \begin{align}
                & 4\pi\kappa_{0}^{2}B^{-1}\mathrm{sinc}(\kappa_{0}d_{n, i}) \rightarrow \beta\exp(-j\kappa_{0}\boldsymbol{\delta}_{n, i}^{T}\hat{\boldsymbol{\nu}}) \\
                & ~~~~ ~~~~ ~~~~ = \beta\exp(-j\kappa_{0}\tilde{\boldsymbol{r}}_{n}^{T}\hat{\boldsymbol{\nu}})\exp(j\kappa_{0}\tilde{\boldsymbol{r}}_{i}^{T}\hat{\boldsymbol{\nu}}), ~\forall n, i, 
            \end{align}
        \end{subequations}
        whose proof is given in Appendix~\ref{appendix-subsec:vmf-corr-concentrated-limit-proof}. 
        Under such conditions, $\bar{\boldsymbol{G}}\approx \beta\boldsymbol{v}\boldsymbol{v}^{H}$ is approximately a rank-one matrix, where $\boldsymbol{v} = [\exp(j\kappa_{0}\tilde{\boldsymbol{r}}_{1}^{T}\hat{\boldsymbol{\nu}}), \ldots, \exp(j\kappa_{0}\tilde{\boldsymbol{r}}_{N}^{T}\hat{\boldsymbol{\nu}})]^{H}\in\mathbb{C}^{N\times 1}$ is the steering vector along direction $\hat{\boldsymbol{\nu}}$, yielding $\rho_{N}\approx 0$ regardless of antenna positions. 
        Then, the system performance is independent of antenna positions because all users' channel vectors are always parallel. 
        Based on the analysis above, it could be expected that the advanatges of CEBAP against FPAs become pronounced when the prescribed MA region is moderately large and the APS is moderately concentrated, which typically yields a high-rank covariance matrix with unbalanced eigenvalues, as will be verified via simulations in Section~\ref{sec:perf-eval-raytrace}.


\section{Proposed LOBPO Method for Solving CEBAP}\label{sec:proposed-LOBPO}
    In this section, the proposed LOBPO method is illustrated for numerically solving the CEBAP from problem~\eqref{prob:cebap}. 
    Particularly, the non-convex constraints~\eqref{prob-constraint:ma-region} and~\eqref{prob-constraint:ma-separation} are incorporated into log-barrier penalty functions and the gradient ascent algorithm is then applied to maximize the penalized objective.

    \subsection{Log-Barrier Penalty}\label{subsec:log-barrier-penalty}
        Define $\mathcal{S}_{\text{MA}}$ as the feasible set for the antenna position duplet $(\boldsymbol{x}, \boldsymbol{y})$ that satisfies constraints~\eqref{prob-constraint:ma-region} and~\eqref{prob-constraint:ma-separation}. 
        Based on set $\mathcal{S}_{\text{MA}}$, the log-barrier function $\mathcal{L}(\boldsymbol{x}, \boldsymbol{y})$ is defined as follows:
        \begin{equation}\label{def:log-barrier-func}
            \mathcal{L}(\boldsymbol{x}, \boldsymbol{y}) = \left\{
                \begin{array}{ll}
                    \mathcal{L}_{\mathcal{S}}(\boldsymbol{x}, \boldsymbol{y}), & (\boldsymbol{x}, \boldsymbol{y})\in\text{int}(\mathcal{S}_{\text{MA}}), \\
                    -\infty, & \text{otherwise}, 
                \end{array}
            \right.
        \end{equation}
        where $\text{int}(\mathcal{S}_{\text{MA}})$ denotes the interior of $\mathcal{S}_{\text{MA}}$, while function $\mathcal{L}_{\mathcal{S}}(\boldsymbol{x}, \boldsymbol{y})$ is given by 
        \begin{equation}\label{def:log-barrier-feasible-func}
            \begin{aligned}
                & \mathcal{L}_{\mathcal{S}}(\boldsymbol{x}, \boldsymbol{y}) \triangleq \frac{1}{N^{2}}\sum_{
                    1 \le n < i \le {N}
                }{\ln{\left(
                    \|\boldsymbol{r}_{n} - \boldsymbol{r}_{i}\|_2^2 - \Delta^2
                \right)}} \\
                & ~~~~~~ + \frac{1}{N}\left[
                    \sum_{n = 1}^{{N}}{\ln{\left(
                        \frac{{S}_{x}^2}{4} - x_n^2
                    \right)} + \ln{\left(
                        \frac{{S}_{y}^2}{4} - y_n^2
                    \right)}}
                \right], 
            \end{aligned}
        \end{equation}
        which approaches $-\infty$ as $(\boldsymbol{x}, \boldsymbol{y})$ is close to the boundary of $\mathcal{S}_{\text{MA}}$. 
        By applying $\mathcal{L}(\boldsymbol{x}, \boldsymbol{y})$ as a penalty to replace constraints~\eqref{prob-constraint:ma-region} and~\eqref{prob-constraint:ma-separation}, problem~\eqref{prob:cebap} can be approximated as 
        \begin{equation}\label{prob:cebap-log-barrier-relaxed}
            \max_{\boldsymbol{x}, \boldsymbol{y}} ~F(\boldsymbol{x}, \boldsymbol{y}) \triangleq \rho_{N} + \alpha\mathcal{L}(\boldsymbol{x}, \boldsymbol{y}), 
        \end{equation}
        where $\alpha > 0$ is the penalty weight. 
        Note that if $(\boldsymbol{x}, \boldsymbol{y})$ does not satisfy constraints~\eqref{prob-constraint:ma-region} and~\eqref{prob-constraint:ma-separation}, we have $F(\boldsymbol{x}, \boldsymbol{y}) = \mathcal{L}(\boldsymbol{x}, \boldsymbol{y}) = -\infty$, indicating that $(\boldsymbol{x}, \boldsymbol{y})$ is not the optimal solution for problem~\eqref{prob:cebap-log-barrier-relaxed}. 
        Therefore, the feasibility of the optimal solution for problem~\eqref{prob:cebap-log-barrier-relaxed} to constraints~\eqref{prob-constraint:ma-region} and~\eqref{prob-constraint:ma-separation} is guaranteed. 
        Besides, the penalty term $\alpha\mathcal{L}_{\mathcal{S}}(\boldsymbol{x}, \boldsymbol{y})$ vanishes for any $(\boldsymbol{x}, \boldsymbol{y})\in\text{int}(\mathcal{S}_{\text{MA}})$ as $\alpha\to 0^{+}$. 
        Thus, the optimal solution for problem~\eqref{prob:cebap-log-barrier-relaxed} approximately maximizes $\rho_{N}$ given a sufficiently small $\alpha$ and suboptimally solves problem~\eqref{prob:cebap}. 

    \subsection{Gradient Ascent for Solving Problem~\eqref{prob:cebap-log-barrier-relaxed}}\label{subsec:gradient-ascent}
        Given the unconstrained log-barrier penalized problem~\eqref{prob:cebap-log-barrier-relaxed}, the gradient ascent algorithm can be applied to solve the CEBAP. 
        By defining matrix $\boldsymbol{Y}_{N} = \boldsymbol{I}_{N} + \rho_{N}^{-1}(N - 1)\bar{\boldsymbol{G}}\in\mathbb{C}^{N\times N}$ and leveraging the equation $\xi_{N}(\rho_{N}) = 1$, the gradients of $\rho_{N}$ w.r.t. $\boldsymbol{t}\in\{\boldsymbol{x}, \boldsymbol{y}\}$ can be obtained as
        \begin{subequations}\label{def:corr-factor-grads-wrt-positions}
            \begin{align}
                \nabla_{\boldsymbol{t}}{\rho_{N}} & \triangleq \frac{\mathrm{d}{\rho_{N}}}{\mathrm{d}\boldsymbol{t}} = -\left[\left(
                    \frac{\partial\xi_{N}(\rho)}{\partial\rho}
                \right)^{-1}\frac{\partial\xi_{N}(\rho)}{\partial\boldsymbol{t}}\right]
                \Bigg|_{\rho = \rho_{N}} \\
                & = \frac{
                    2\rho_{N}\mathrm{Re}\left[
                        \mathrm{diag}\left(\boldsymbol{Y}_{N}^{-2}\boldsymbol{S}^{t}\right)
                    \right]
                }{
                    \mathrm{tr}\left(\boldsymbol{Y}_{N}^{-2}\bar{\boldsymbol{G}}\right)
                }\in\mathbb{R}^{N\times 1}, 
            \end{align}
        \end{subequations}
        where $\boldsymbol{S}^{t}\in\mathbb{C}^{N\times N}$ is given by 
        \begin{equation}\label{def:corr-factor-grads-Smat}
            \boldsymbol{S}^{t} = \bar{\boldsymbol{Q}}^{H}\mathrm{Diag}(j\bar{\boldsymbol{\kappa}}^{t})\mathrm{Diag}(\boldsymbol{b})\bar{\boldsymbol{Q}}, ~t\in\{x, y\}, 
        \end{equation}
        with $\bar{\boldsymbol{\kappa}}^{t} = [\bar{\kappa}_{1}^{t}, \ldots, \bar{\kappa}_{L_{0}}^{t}]^{T}\in\mathbb{R}^{L_{0}\times 1}$. 
        The derivations of equations~\eqref{def:corr-factor-grads-wrt-positions} and~\eqref{def:corr-factor-grads-Smat} can be found in Appendix~\ref{appendix:corr-factor-grads}. 
        Besides, it is easy to verify that the gradients of $\mathcal{L}_{\mathcal{S}}$ w.r.t. antenna positions are given by
        \begin{equation}\label{def:log-barrier-gradients}
            \frac{\partial\mathcal{L}_{\mathcal{S}}}{\partial{t}_{n}} = \frac{2N^{-1}{t}_{n}}{
                {t}_{n}^2 - {S}_{t}^2/4
            } + \sum_{\substack{
                1\le i\le {N} \\
                i\neq n
            }}{\frac{
                2N^{-2}({t}_{n} - {t}_{i})
            }{
                \|\boldsymbol{r}_{n} - \boldsymbol{r}_{i}\|_2^2 - \Delta^2
            }}, ~\forall n, 
        \end{equation}
        where $t\in\{x, y\}$. 
        Thus, the gradients of $F(\boldsymbol{x}, \boldsymbol{y})$ w.r.t. $(\boldsymbol{x}, \boldsymbol{y})\in\mathrm{int}(\mathcal{S}_{\text{MA}})$ can be computed as 
        \begin{equation}\label{def:lobpo-grads}
            \nabla_{\boldsymbol{t}}{F} \triangleq \nabla_{\boldsymbol{t}}{\rho_{N}} + \alpha\nabla_{\boldsymbol{t}}{\mathcal{L}_{\mathcal{S}}}\in\mathbb{R}^{N\times 1}, ~\boldsymbol{t}\in\{\boldsymbol{x}, \boldsymbol{y}\}, 
        \end{equation}
        where $\nabla_{\boldsymbol{t}}{\mathcal{L}_{\mathcal{S}}} = [\partial\mathcal{L}_{\mathcal{S}}/\partial{t}_{1}, \ldots, \partial\mathcal{L}_{\mathcal{S}}/\partial{t}_{N}]^{T}\in\mathbb{R}^{N\times 1}$.

        The gradient ascent algorithm for solving problem~\eqref{prob:cebap-log-barrier-relaxed} given $\alpha$ is summarized in Algorithm~\ref{alg:gradient-ascent}. 
        Instead of directly using $\nabla_{\boldsymbol{x}}{F}$ and $\nabla_{\boldsymbol{y}}{F}$ for antenna positions' updates, the gradients are normalized, as shown in line $4$, such that the displacement of antennas can be determined by the step size $\epsilon$. 
        Moreover, a proper value for $\epsilon$ is solved in each iteration by applying the backtracking line search to ensure feasibility of $(\boldsymbol{x}, \boldsymbol{y})$ and increase of the objective $F$. 
        Specifically, based on its initial value $\epsilon_{0}$, $\epsilon$ keeps shrinking by half until it satisfies the following confitions, i.e., 
        \begin{subequations}\label{def:backtracking-conditions}
            \begin{gather}
                (\boldsymbol{x}^{(i)} + \epsilon\boldsymbol{d}_{\boldsymbol{x}}^{(i)}, \boldsymbol{y}^{(i)} + \epsilon\boldsymbol{d}_{\boldsymbol{y}}^{(i)})\in\text{int}(\mathcal{S}_{\text{MA}}), \label{subdef:backtracking-feasible} \\
                F(\boldsymbol{x}^{(i)} + \epsilon\boldsymbol{d}_{\boldsymbol{x}}^{(i)}, \boldsymbol{y}^{(i)} + \epsilon\boldsymbol{d}_{\boldsymbol{y}}^{(i)}) \ge F(\boldsymbol{x}^{(i)}, \boldsymbol{y}^{(i)}) + \eta\epsilon\|\boldsymbol{g}^{(i)}\|_2, \label{subdef:backtracking-increasing}
            \end{gather}
        \end{subequations}
        where equation~\eqref{subdef:backtracking-increasing} is the Armijo-Goldstein condition~\cite{ref:Armijo-gradient-backtracking} with control parameter $\eta\in(0, 1)$.

        {
        \begin{algorithm}[t]
            \begin{minipage}{0.95\linewidth}
            \centering
            \caption{The gradient ascent algorithm for solving problem~\eqref{prob:cebap-log-barrier-relaxed} given $\alpha$. }
            \begin{algorithmic}[1]\label{alg:gradient-ascent}
                \REQUIRE Discretized wavevectors $\bar{\boldsymbol{\kappa}}_{l}$, $1\le l\le L_{0}$, APS $\boldsymbol{b}$, initial antenna positions $(\boldsymbol{x}^{(0)}, \boldsymbol{y}^{(0)})$, penalty weight $\alpha$, initial step size $\epsilon_{0}$, and maximum iteration number $I$. 
                \STATE Let $i\gets 0$. 
                \WHILE{$i < I$}
                    \STATE For $(\boldsymbol{x}^{(i)}, \boldsymbol{y}^{(i)})$, compute $\nabla_{\boldsymbol{x}}{F}$ and $\nabla_{\boldsymbol{y}}{F}$ from~\eqref{def:lobpo-grads}. 
                    \STATE Compute $\boldsymbol{g}^{(i)}\gets [\nabla_{\boldsymbol{x}}{F}^{T}, \nabla_{\boldsymbol{y}}{F}^{T}]^{T}$ and normalized gradients $\boldsymbol{d}_{\boldsymbol{t}}^{(i)}\gets \nabla_{\boldsymbol{t}}{F}/\|\boldsymbol{g}^{(i)}\|_2$, $\boldsymbol{t}\in\{\boldsymbol{x}, \boldsymbol{y}\}$. 
                    \STATE Initial step size ${\epsilon}\gets{\epsilon}_{0}$ and keep shrinking the step size as $\epsilon\gets\epsilon/2$ until conditions~\eqref{def:backtracking-conditions} are met. 
                    \STATE Let $\boldsymbol{x}^{(i + 1)}\gets\boldsymbol{x}^{(i)} + \epsilon\boldsymbol{d}_{\boldsymbol{x}}^{(i)}$, $\boldsymbol{y}^{(i + 1)}\gets\boldsymbol{y}^{(i)} + \epsilon\boldsymbol{d}_{\boldsymbol{y}}^{(i)}$, and $i\gets i + 1$. 
                \ENDWHILE
                \RETURN The optimized antenna positions $(\boldsymbol{x}^{(I)}, \boldsymbol{y}^{(I)})$. 
            \end{algorithmic}
            \end{minipage}
        \end{algorithm}
        }

    \subsection{LOBPO Method}\label{subsec:lobpo}
        Applying the gradient ascent algorithm, the proposed LOBPO method gradually approaches a suboptimal solution $(\boldsymbol{x}^{\star}, \boldsymbol{y}^{\star})$ for problem~\eqref{prob:cebap}, i.e., the CEBAP, by iteratively shrinking the penalty weight $\alpha$ with factor $\tau$. 
        Specifically, we start from a relatively large initialization $\alpha_{0}$ and Algorithm~\ref{alg:gradient-ascent} is leveraged to solve problem~\eqref{prob:cebap-log-barrier-relaxed} given each $\alpha$, while the antenna positions solved for the previous penalty weight serve as a good initialization for the next iteration. 
        The LOBPO terminates if the displacement between antenna positions solved for two consecutive penalty weights is negligible, i.e., becomes smaller than a predefined threshold $\varepsilon_{0}$, indicating that the penalty weight is sufficiently small and the solved antenna positions can be approximately regarded as the CEBAP, as summarized in Algorithm~\ref{alg:lobpo}. 

        {
        \begin{algorithm}[t]
            \begin{minipage}{0.95\linewidth}
            \centering
            \caption{LOBPO method for solving the CEBAP. }
            \begin{algorithmic}[1]\label{alg:lobpo}
                \REQUIRE Discretized wavevectors $\bar{\boldsymbol{\kappa}}_{l}$, $1\le l\le L_{0}$, APS $\boldsymbol{b}$, initial penalty weight $\alpha_{0}$, factor $\tau$, initial step size $\epsilon_{0}$, maximum iteration number $I$, and threshold $\varepsilon_{0}$. 
                \STATE Sparse uniform planar array (UPA) is applied as initialization $(\hat{\boldsymbol{x}}^{(0)}, \hat{\boldsymbol{y}}^{(0)})$ (specified in Section~\ref{subsec:benchmarks-and-proposed} as UPA-sparse); $\alpha\gets\alpha_{0}$, $\varepsilon\gets\varepsilon_{0}$, and $i\gets 0$. 
                \WHILE{$\varepsilon \ge \varepsilon_{0}$}
                    \STATE Given initialization $(\hat{\boldsymbol{x}}^{(i)}, \hat{\boldsymbol{y}}^{(i)})$, apply Algorithm~\ref{alg:gradient-ascent} to solve problem~\eqref{prob:cebap-log-barrier-relaxed} and denote the optimized antenna positions as $(\hat{\boldsymbol{x}}^{(i + 1)}, \hat{\boldsymbol{y}}^{(i + 1)})$. 
                    \STATE Let $\varepsilon\gets(\|\hat{\boldsymbol{x}}^{(i + 1)} - \hat{\boldsymbol{x}}^{(i)}\|_{2}^{2} + \|\hat{\boldsymbol{y}}^{(i + 1)} - \hat{\boldsymbol{y}}^{(i)}\|_{2}^{2})^{1/2}$. 
                    \STATE Let $i\gets i + 1$ and $\alpha\gets\tau\alpha$. 
                \ENDWHILE
                \RETURN The optimized antenna positions $(\hat{\boldsymbol{x}}^{(i)}, \hat{\boldsymbol{y}}^{(i)})$. 
            \end{algorithmic}
            \end{minipage}
        \end{algorithm}
        }
        
        It is noteworthy that the LOBPO method is a general framework for antenna position optimization, which is not limited to solving CEBAP but also can be applied to solve the original two-timescale problem~\eqref{prob:two-timescale-optm}. 
        Specifically, by applying suboptimal precoding algorithms catering to the utility function and leveraging Monte-Carlo (MC) simulations, the objective, i.e., the ergodic utility with optimized precoding, can be numerically obtained. 
        Then, the gradients of the objective w.r.t. antenna positions can be approximated via finite difference methods~\cite{ref:finite-difference-method}, based on which the LOBPO method can be employed to optimize the antenna positions. 
        Nevertheless, the computational complexity will be much higher than that of solving CEBAP.


\section{Performance Evaluation}\label{sec:perf-eval-raytrace}

    \subsection{Simulation Setups}\label{subsec:simulation-setups}
        
        \begin{figure}[t!]
            \centering
            {
                \begin{subfigure}[t]{0.3\textwidth}
                    \centering
                    \includegraphics[scale = 0.25]{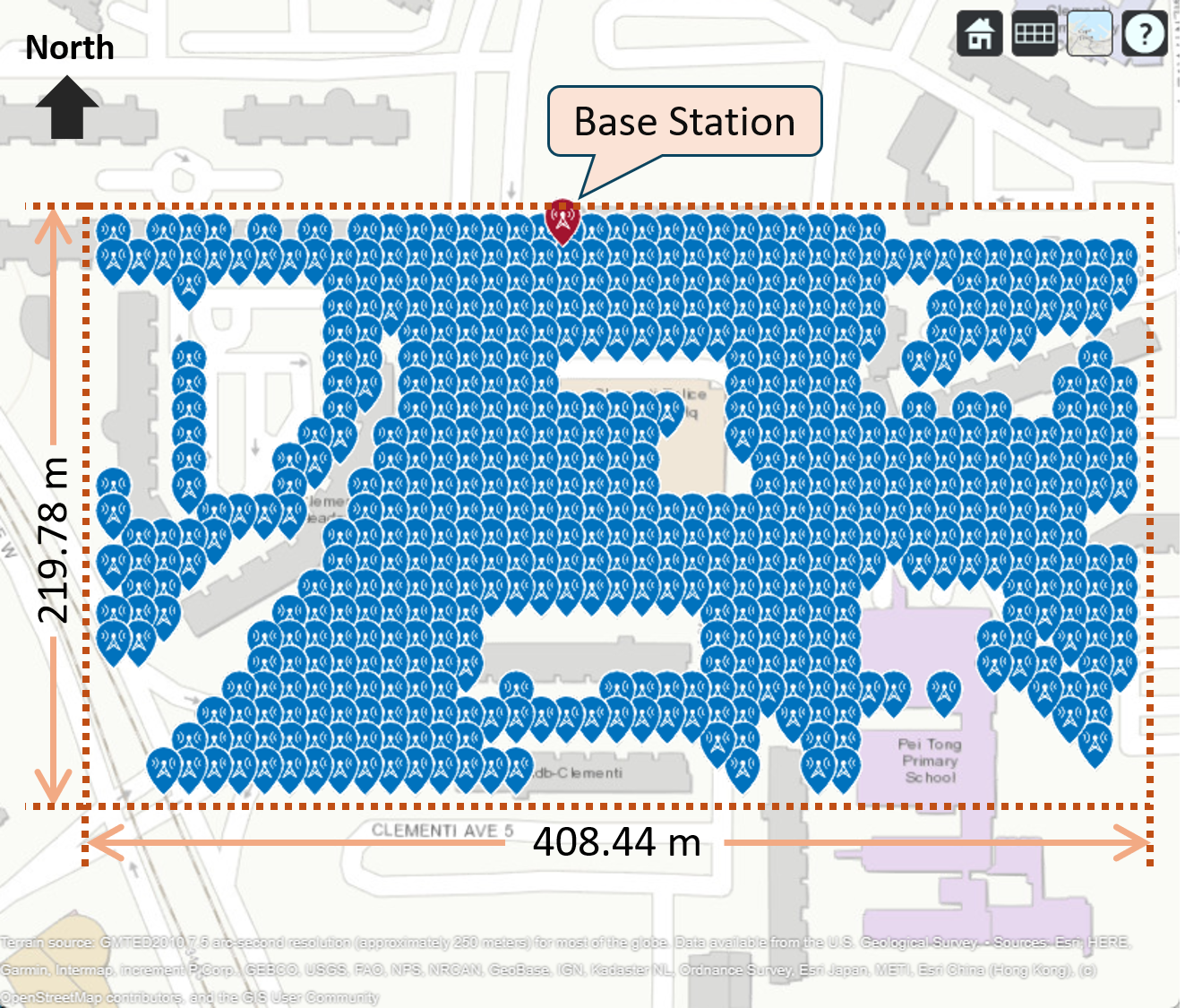}
                    \caption{Subregions division in the cell region. }
                    \label{subfig:cell-subregions}
                \end{subfigure}
                \begin{subfigure}[t]{0.18\textwidth}
                    \centering
                    \includegraphics[scale = 0.25]{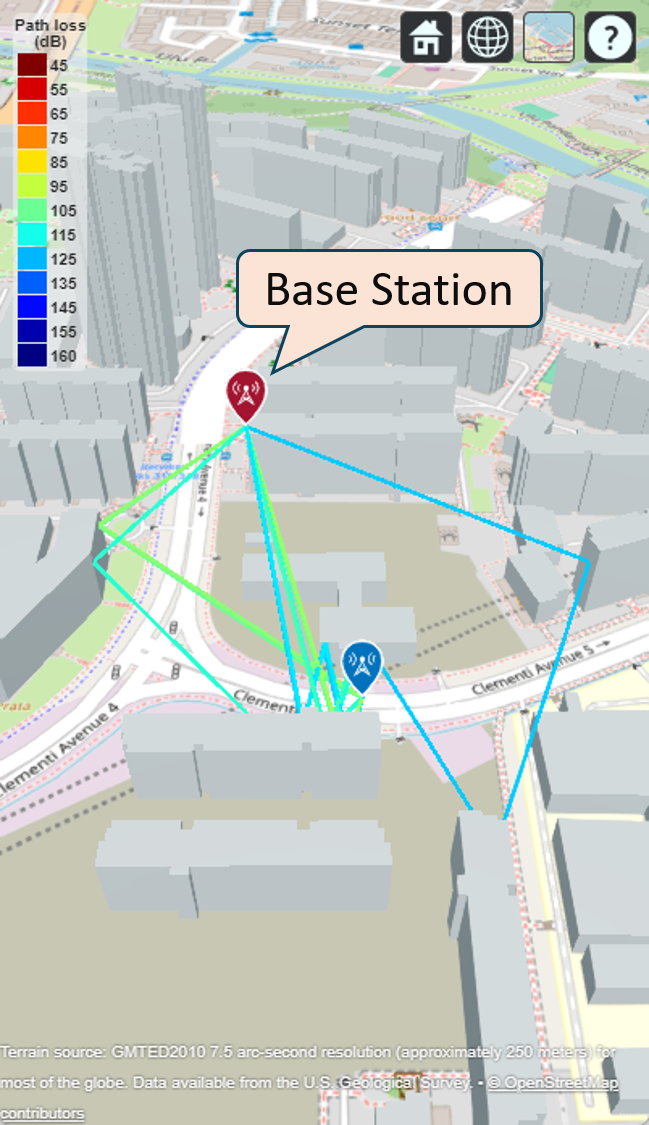}
                    \caption{Ray-tracing channels. }
                    \label{subfig:cell-raytrace}
                \end{subfigure}
            }
            \vspace{-2pt}
            \caption{Site environment setup with $584$ subregions. }
            \label{fig:cell-setup}
            \vspace{-2pt}
        \end{figure}
        
        To validate the effecacy of the proposed CEBAP and LOBPO methods in a practical propagation environment, we conduct simulations with ray-tracing generated channels based on the realistic urban data at Clementi, Singapore~\cite{ref:openstreetmap}. 
        As shown in Fig.~\ref{fig:cell-setup}, a rectangular region with a length of $408.44$ meters (m) and a width of $219.78$ m is considered for the cell. 
        The BS, marked in red, is located above the northmost building with height $59.97$ m, equipped with $N = 4\times 4 = 16$ MAs on the antenna plane facing to the south. 
        The carrier frequency is $f_{c} = 5$ GHz with a wavelength of $\lambda = 6$ cm and the maximum number of reflections is set as $5$ for ray-tracing. 
        By considering only ground users, the cell region is divided into $41\times 22 = 902$ square areas, each of size $10$ m $\times 10$ m, where $M = 584$ of them are adopted as subregions for user channel generation with their centers marked in blue in Fig.~\ref{subfig:cell-subregions}, while the remaining areas are discarded because no channel path is available between them and the BS with less than $5$ reflections. 
        The transmit AoDs and power responses of each subregion are generated as those from the BS to its center, as shown in the example in Fig.~\ref{subfig:cell-raytrace}. 
        In particular, the LoS and NLoS channel paths are scaled with different factors such that the Rician factor across the cell equals to a predefined value $\chi$. 
        For APS representation, the transmit elevation and azimuth AoDs are discretized into $N_{E} = 50$ and $N_{A} = 80$ angles, respectively, yielding $L_{0} = 4000$ solid angle grids. 

        To evaluate the system performance, the ergodic utility is obtained via MC simulations over $5000$ random MU-MISO channel realizations. 
        Specifically, the truncated Poisson distribution is assumed for the user number, i.e., $\zeta_{n} = {Z}^{-1}K_{0}^{n}/n!$, $1\le n\le N$, where $K_{0}$ is the Poisson parameter and $Z$ is the normalization factor that ensures $\sum_{n = 1}^{N}{\zeta_{n}} = 1$. 
        For each channel realization, the user number $K$ is first generated, based on which $K$ subregions are selected according to the user distribution across the cell, i.e., $\mu_{1}, \ldots, \mu_{M}$. 
        After that, the Gaussian path-response coefficients are generated for each transmit channel path of the subregions. 
        Depending on the utility function, different precoding algorithms are employed given the instantaneous channel to compute the instantaneous utility, e.g., the reduced weighted minimum mean square error (RWMMSE) precoding~\cite{ref:RWMMSE} and the max-min weighted SINR precoding~\cite{ref:max-min-weighted-sinr-precoding} for maximizing the weighted sum rate and minimum weighted SINR, respectively. 
        
        Unless otherwise stated, we set the transmit power as $P_{T} = 30$ dBm, noise power as $\sigma^2 = -90$ dBm, $\Delta = \lambda/2$, ${S}_{x} = {S}_{y} = {S}_{0} = 4\lambda$, $K_{0} = 12$, and $\chi = 10$ dB. 
        Parameters for Algorithms~\ref{alg:gradient-ascent} and~\ref{alg:lobpo} are set as $\alpha_{0} = 1$, $\epsilon_{0} = 0.2\lambda$, $\varepsilon_{0} = 0.01\lambda$, $\eta = 10^{-4}$, $I_{\text{c}} = 20$, $I = 25$, and $\tau = 0.2$. 

    \subsection{Benchmarks for Performance Comparison}\label{subsec:benchmarks-and-proposed}
        The proposed CEBAP solved by the LOBPO method is evaluated and compared with the following benchmarks: i) \textbf{UPA-dense}: The conventional dense UPA is employed, where the antennas forms a $4\times 4$ array with inter-antenna separation $\lambda/2$. ii) \textbf{UPA-sparse}: The antennas are sparsely placed into a $4\times 4$ array with inter-antenna separations ${S}_{x}/4$ and ${S}_{y}/4$ along the $x$ and $y$ axes, respectively. iii) \textbf{MA, cell-specific}: The antenna positions are optimized for the original problem~\eqref{prob:two-timescale-optm} by levaraging the LOBPO method, where the gradients of the objective are computed via finite difference methods and MC simulations, as discussed in Section~\ref{subsec:lobpo}. iv) \textbf{MA, subregion-specific}: The antenna positions are optimized based on the subregion-specific (i.e., user-location-specifc) statistical CSI~\cite{ref:my-ma-tcom} by the LOBPO method, which are updated once users move to other subregions. The gradients are computed via finite difference methods and MC simulations. v) \textbf{MA, instantaneous}: The antenna positions are optimized based on instantaneous CSI by the LOBPO method, with the gradients computed via finite difference methods. 

        Note that three MA-based benchmarks optimize antenna positions over different timescales, where ``MA, cell-specific'' can be regarded as a computationally heavy baseline for the proposed MA optimization approach, while the other two serve as upper bounds. 
        To solve antenna positions for ``MA, cell-specific'' and ``MA, subregion-specific'', we adopt $500$ and $25$ random MU-MISO channel realizations in each MC simulation for the LOBPO method, respectively. 

    \subsection{Convergence and Effectiveness of Proposed CEBAP}\label{subsec:alg-convergence}
        Assuming uniform user distribution over $M$ subregions, i.e., $\mu_{m} = 1 / M, \forall m$, the proposed CEBAP is solved by the LOBPO method for the considered cell region, which is visualized in Fig.~\ref{fig:cebap-visualization}. 
        In particular, the cell-specific APSD is shown in Fig.~\ref{subfig:env-apsd}, where the total average channel power gain is $\beta = -45.51$ dB. 
        It can be observed that the channel power is distributed over the lower-half of $\mathcal{S}_{+}$, i.e., where $\kappa^{y} < 0$, because the BS stands higher than all users and scatterers. 
        Besides, more power is concentrated to the west due to buildings there, which frequently reflect signals from the BS to users across the cell.

        \begin{figure*}[t!]
            \centering
            {
                \begin{subfigure}[t]{0.35\textwidth}
                    \centering
                    \includegraphics[scale = 0.49]{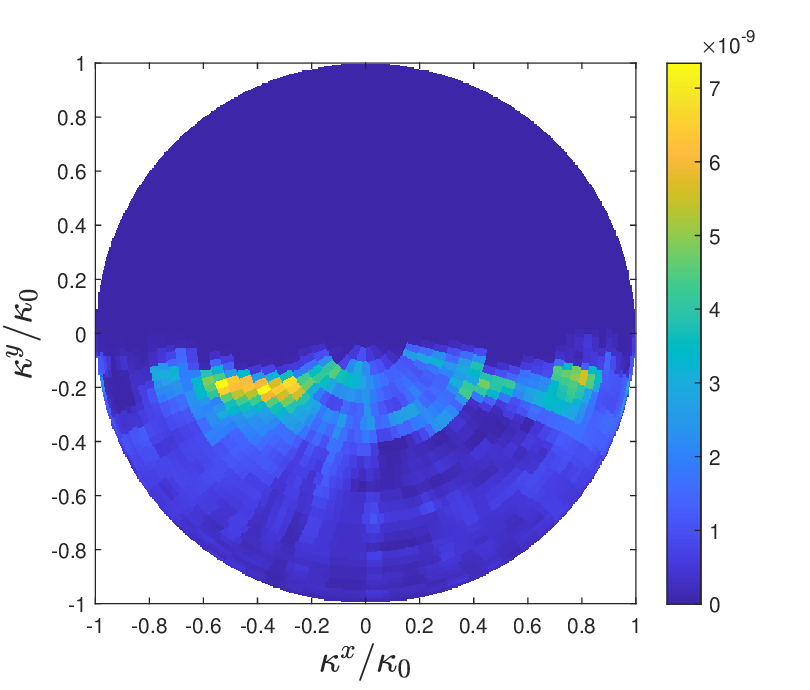}
                    \caption{Transmit APSD at the BS. }
                    \label{subfig:env-apsd}
                \end{subfigure}
                \begin{subfigure}[t]{0.33\textwidth}
                    \centering
                    \includegraphics[scale = 0.49]{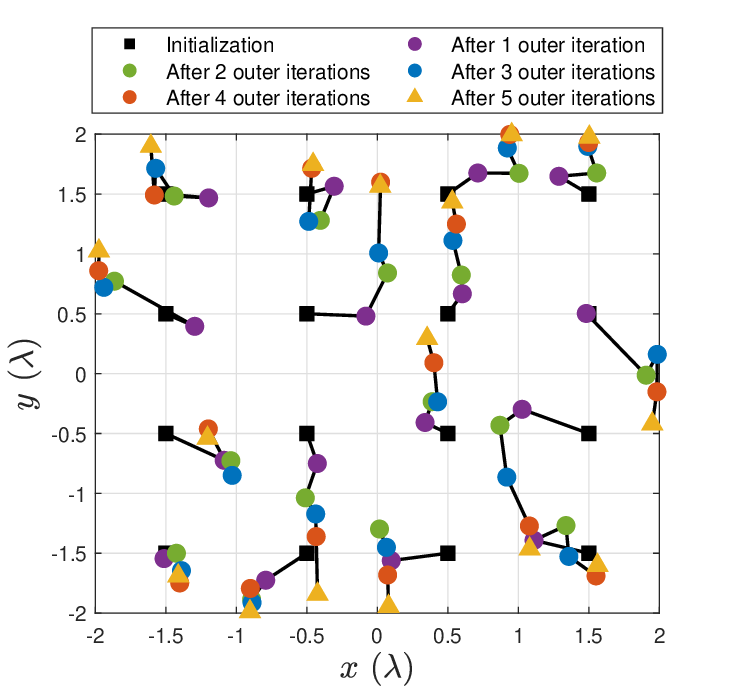}
                    \caption{Trajectories of MAs during iterations. }
                    \label{subfig:cebap-traj}
                \end{subfigure}
                \begin{subfigure}[t]{0.3\textwidth}
                    \centering
                    \includegraphics[scale = 0.49]{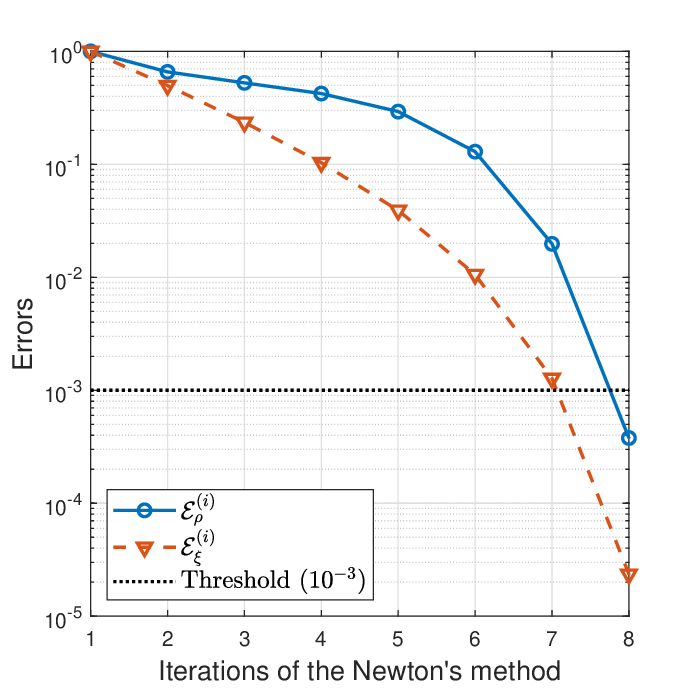}
                    \caption{Convergence of the Newton's method. }
                    \label{subfig:cebap-newton-conv}
                \end{subfigure}

                \begin{subfigure}[t]{0.26\textwidth}
                    \centering
                    \includegraphics[scale = 0.48]{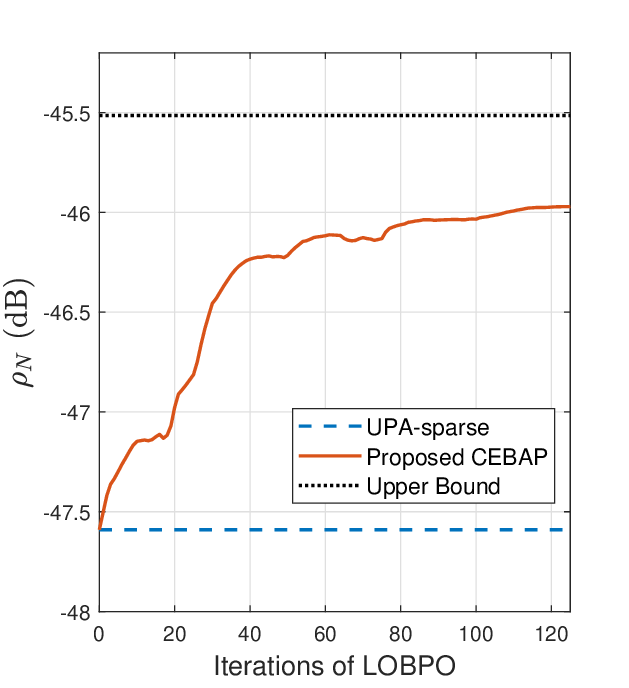}
                    \caption{Convergence of LOBPO method. }
                    \label{subfig:cebap-lobpo-conv}
                \end{subfigure}
                \begin{subfigure}[t]{0.245\textwidth}
                    \centering
                    \includegraphics[scale = 0.48]{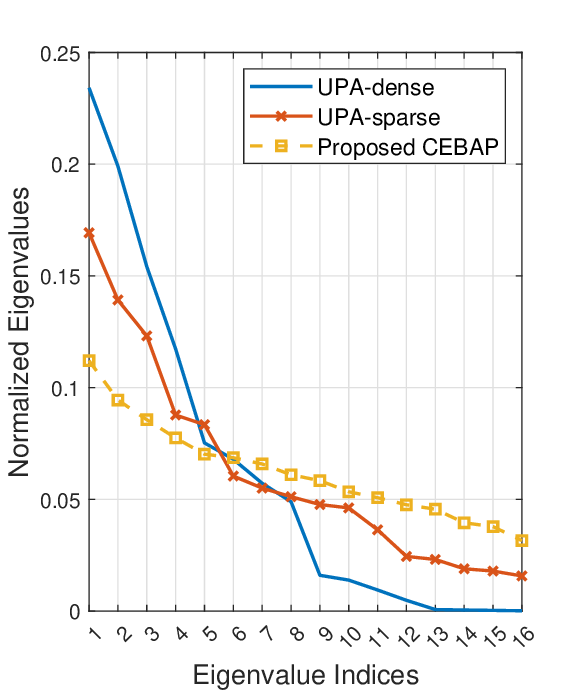}
                    \caption{Eigenvalues of $\bar{\boldsymbol{G}}$. }
                    \label{subfig:cov-eigens}
                \end{subfigure}
                \begin{subfigure}[t]{0.46\textwidth}
                    \centering
                    \includegraphics[scale = 0.46]{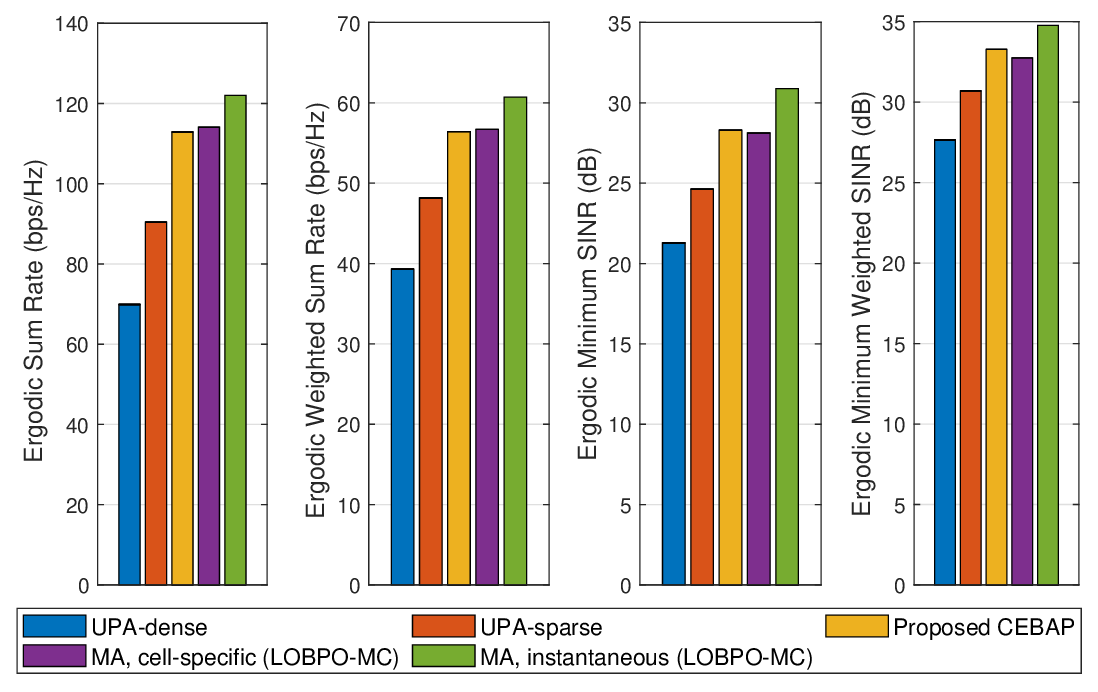}
                    \caption{Utility comparisons of CEBAP with benchmarks. }
                    \label{subfig:cebap-perf-bar}
                \end{subfigure}
            }
            \caption{Visualization of MA optimization based on the proposed CEBAP. }
            \label{fig:cebap-visualization}
            \vspace{-8pt}
        \end{figure*}

        Based on this APS, MAs at the BS are optimized to maximize $\rho_{N}$ and their trajectories during the outer iterations of the LOBPO method (iterations of Algorithm~\ref{alg:lobpo}) are presented in Fig.~\ref{subfig:cebap-traj}, converging to the CEBAP marked with yellow triangles. 
        To verify the convergence of the Newton's method employed in Section~\ref{subsec:asymp-approx}, the asymptotic decorrelated channel power gain $\rho_{N}$ is computed given the solved CEBAP. 
        Specifically, by denoting $\rho_{N}^{(i)}$ as the asymptotic decorrelated channel power gain obtained in the $i$-th iteration, we define $\mathcal{E}_{\rho}^{(i)} = |\rho_{N}^{(i)} - \rho_{N}^{(i - 1)}|/|\rho_{N}^{(i)}|$ and $\mathcal{E}_{\xi}^{(i)} = |\xi_{N}(\rho_{N}^{(i)}) - 1|$ as the relative error of $\rho_{N}^{(i)}$ with the previous iteration and the error of equation $\xi_{N}(\rho_{N}) = 1$, respectively. 
        As is shown in Fig.~\ref{subfig:cebap-newton-conv}, both $\mathcal{E}_{\rho}^{(i)}$ and $\mathcal{E}_{\xi}^{(i)}$ decrease rapidly, which indicates only $8$ iterations are required to solve $\rho_{N}$ with an error no more than $10^{-3}$. 
        Meanwhile, the convergence of the LOBPO method for solving the CEBAP is shown in Fig.~\ref{subfig:cebap-lobpo-conv}, where the asymptotic decorrelated channel power gain is solved for antenna positions updated in every iteration of the LOBPO method. 
        Due to the log-barrier penalties, $\rho_{N}$ is not strictly decreasing during iterations but eventually converges closely to the upper bound $\rho_{N}^{\text{max}} = \beta = -45.51$ dB, with $1.62$ dB improvement from that of UPA-sparse. 

        In Fig.~\ref{subfig:cov-eigens}, the normalized eigenvalues of the channel covariance matrices are shown for FPAs and CEBAP. 
        The eigenvalue distribution under CEBAP is significantly more balanced than that of FPAs, indicating reduced spatial correlation among user channels.
        Furthermore, the performance of the obtained CEBAP is evaluated under various utility functions and compared with benchmark schemes in Fig.~\ref{subfig:cebap-perf-bar}, where the weights for the weighted sum rate and minimum weighted SINR are randomly generated. 
        Notably, the benchmark scheme ``MA, cell-specific'' optimizes antenna positions separately for each utility, whereas CEBAP remains fixed.
        The results demonstrate that CEBAP consistently improves multiple performance metrics, achieving performance close to that of ``MA, instantaneous'' which optimizes antenna positions based on instantaneous CSI. 
        In addition, CEBAP even outperforms antenna positions optimized via MC simulations w.r.t. ergodic minimum SINR and ergodic minimum weighted SINR. 
        This is caused by the limited number of MC samples used in the benchmark optimization.

        \begin{figure*}[t!]
            \centering
            {
                \begin{subfigure}[t]{0.325\textwidth}
                    \centering
                    \includegraphics[scale = 0.42]{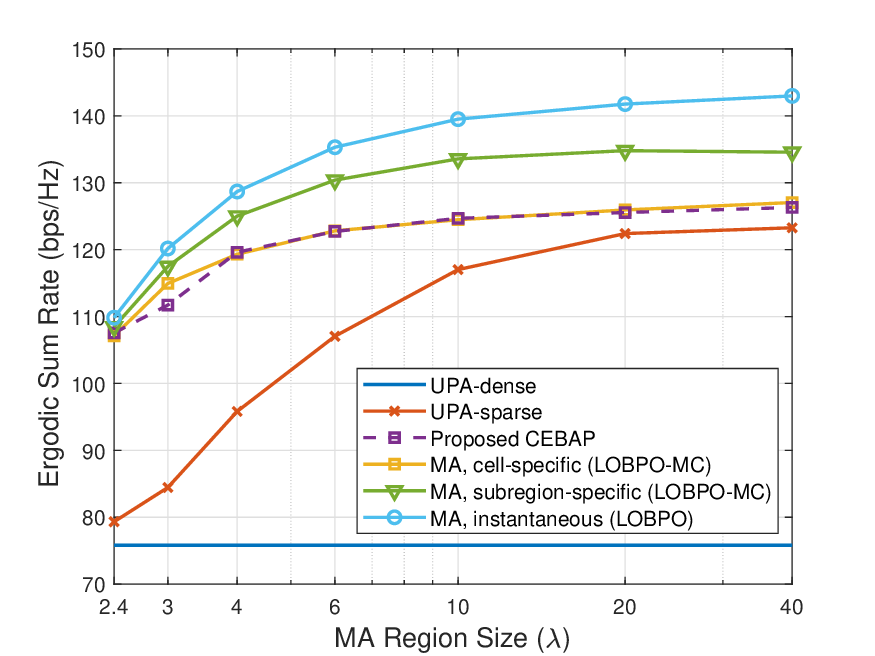}
                    \caption{Ergodic sum rate versus MA region size. }
                    \label{subfig:per-sys-aper-sr}
                \end{subfigure}
                \begin{subfigure}[t]{0.325\textwidth}
                    \centering
                    \includegraphics[scale = 0.42]{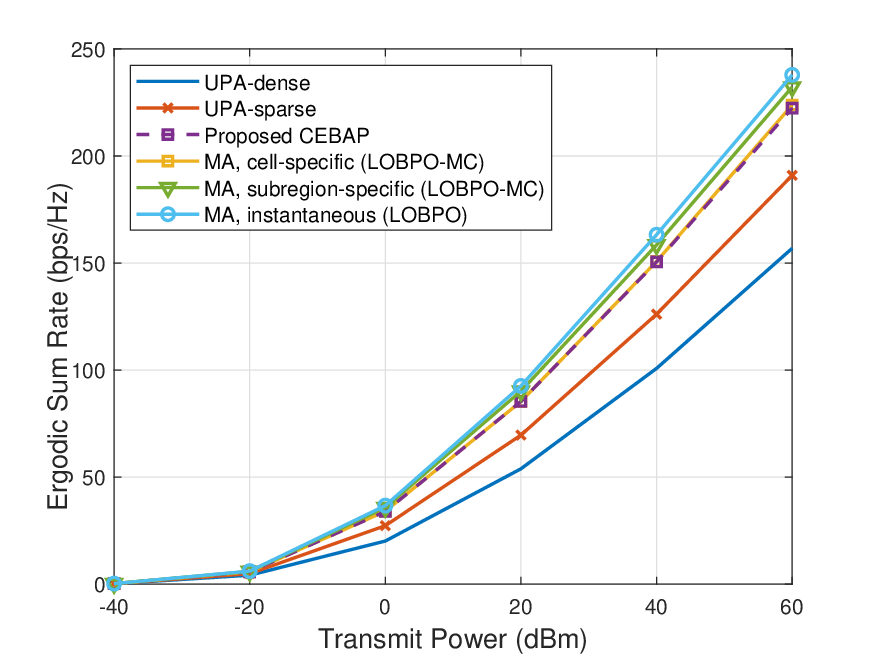}
                    \caption{Ergodic sum rate versus transmit power. }
                    \label{subfig:per-sys-txpwr-sr}
                \end{subfigure}
                \begin{subfigure}[t]{0.325\textwidth}
                    \centering
                    \includegraphics[scale = 0.42]{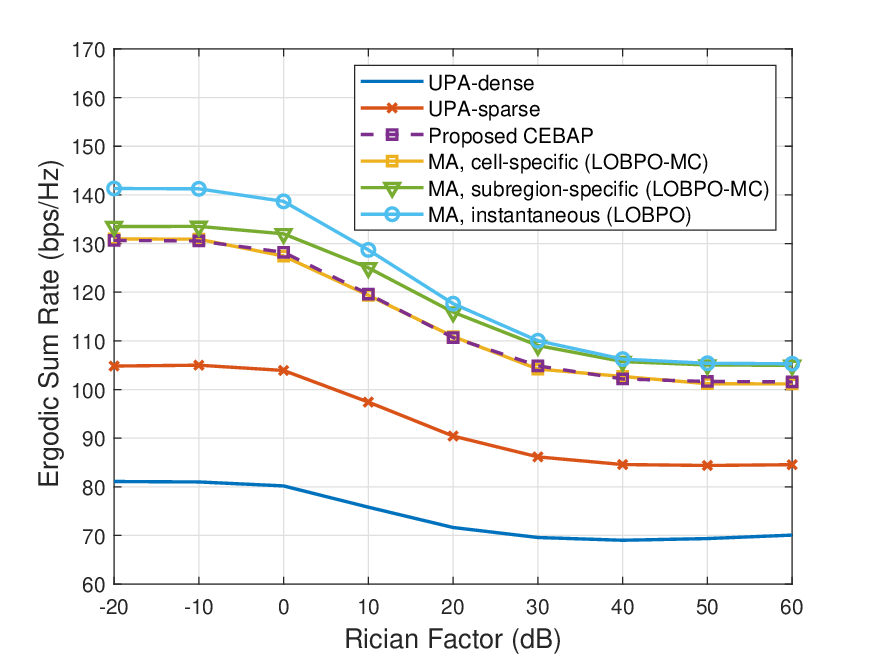}
                    \caption{Ergodic sum rate versus Rician factor. }
                    \label{subfig:per-sys-rician-sr}
                \end{subfigure}

                \begin{subfigure}[t]{0.325\textwidth}
                    \centering
                    \includegraphics[scale = 0.42]{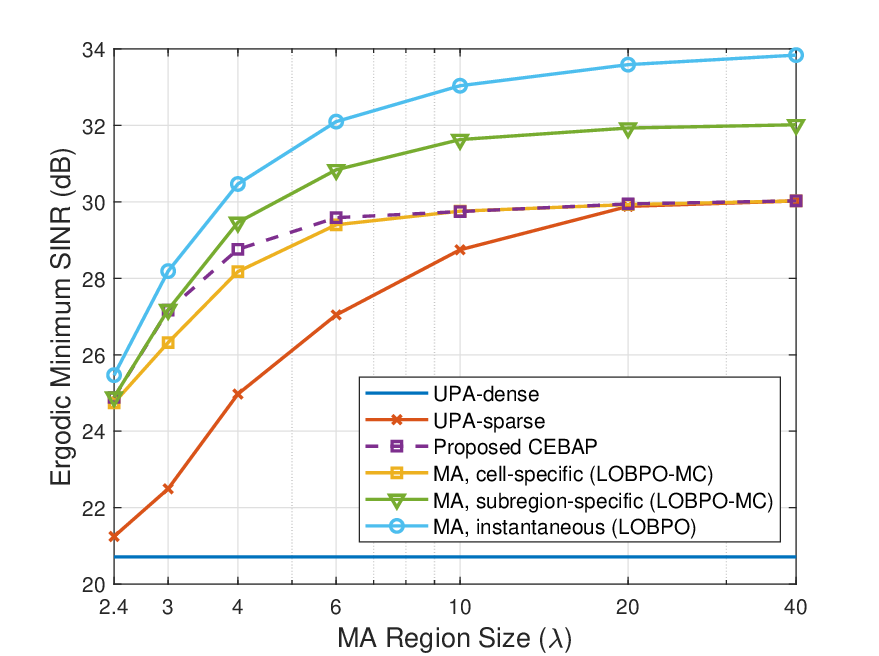}
                    \caption{Ergodic minimum SINR versus MA region size. }
                    \label{subfig:per-sys-aper-msinr}
                \end{subfigure}
                \begin{subfigure}[t]{0.325\textwidth}
                    \centering
                    \includegraphics[scale = 0.42]{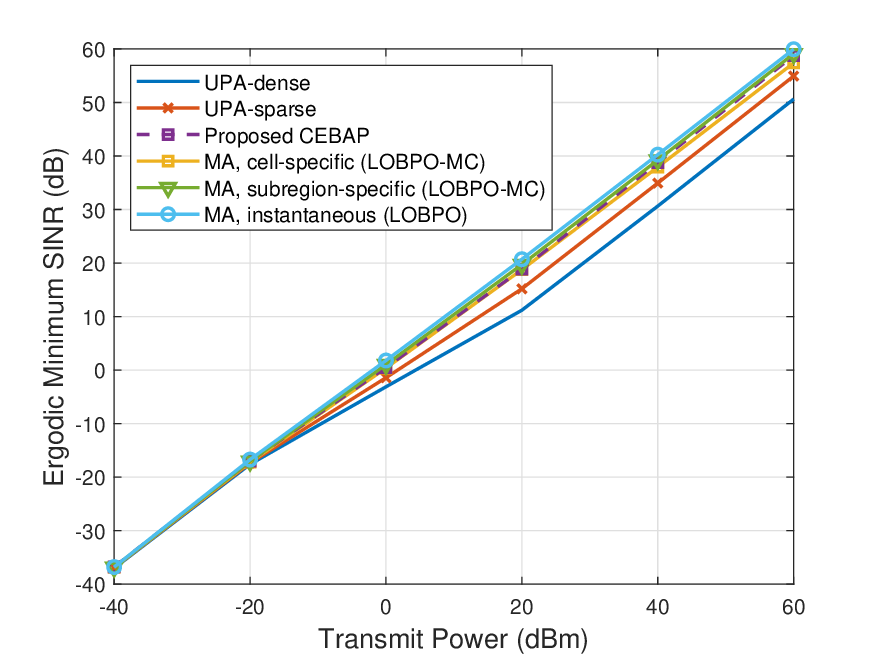}
                    \caption{Ergodic minimum SINR versus transmit power. }
                    \label{subfig:per-sys-txpwr-msinr}
                \end{subfigure}
                \begin{subfigure}[t]{0.325\textwidth}
                    \centering
                    \includegraphics[scale = 0.42]{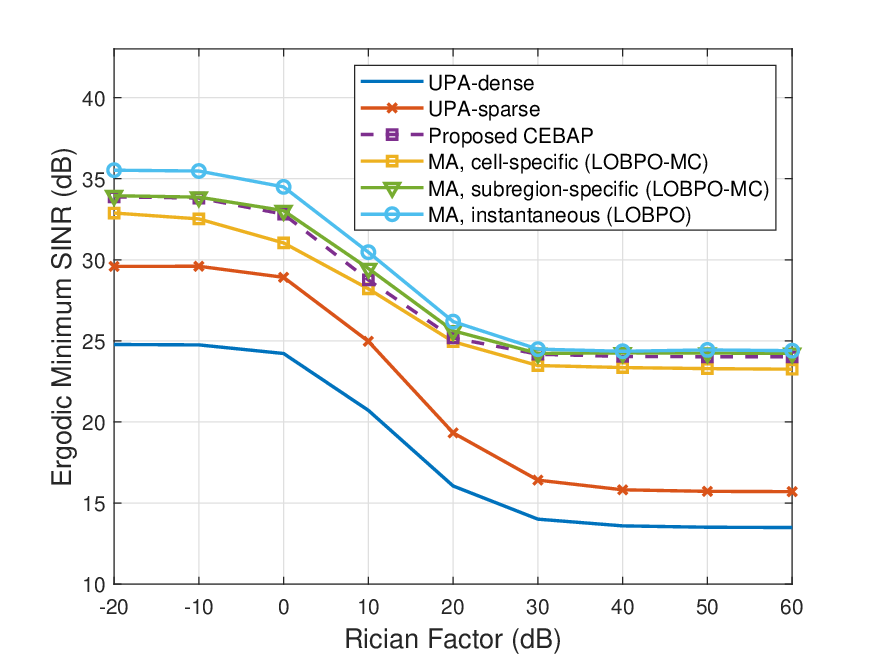}
                    \caption{Ergodic minimum SINR versus Rician factor. }
                    \label{subfig:per-sys-rician-msinr}
                \end{subfigure}
            }
            \vspace{-2pt}
            \caption{Ergodic sum rate and minimum SINR versus different system configurations. }
            \label{fig:perf-sys}
            \vspace{-12pt}
        \end{figure*}

    \subsection{Utility Improvement versus System Configurations}\label{subsec:utility-max-versus-env}
        In this subsection, the performance improvements of the CEBAP are compared with benchmarks under various system configurations. 
        As shown in Fig.~\ref{fig:perf-sys}, all MA designs as well as FPAs are evaluated with different MA region size, transmit power, and channel Rician factor. 
        The MA region size is the side length $S_{0}$ normalized by wavelength $\lambda$, which equals $\Delta\times 4/\lambda = 2$ for the $4\times 4$ UPA-dense array. 
        The results demonstrate significant performance boost for the CEBAP over FPA systems, which even exceed that of ``MA, cell-specific'' in terms of ergodic minimum SINR, similar to Fig.~\ref{subfig:cebap-perf-bar}. 
        
        From Fig.~\ref{subfig:per-sys-aper-sr} and~\ref{subfig:per-sys-aper-msinr}, it can be observed that the ergodic system performance of MA designs based on cell-specific information (i.e., CEBAP and ``MA, cell-specific'') coincide with that of UPA-sparse when the MA region is sufficiently large. 
        This verifies the analysis in Section~\ref{subsec:discussions} that arbitrary antenna position design obtains the same ergodic performance as inter-antenna spacing goes to infinity. 
        Meanwhile, it is also validated that the most significant gain can be achieved for moderate MA region size, e.g., $24.73\%$ and $3.79$ dB improvements for ergodic sum rate and minimum SINR when $S_{0} = 4\lambda$, respectively. 
        Moreover, despite relying on ZF precoding and assuming a large number of transmit channel paths, consistent performance gain is obtained by CEBAP over FPA systems for arbitrary transmit power and Rician factor, as shown in Figs.~\ref{subfig:per-sys-txpwr-sr},~\ref{subfig:per-sys-txpwr-msinr},~\ref{subfig:per-sys-rician-sr}, and~\ref{subfig:per-sys-rician-msinr}. 
        Since the transmit power does not affect the user channel correlation, the performance differences between any two schemes for both ergodic sum rate and minimum SINR become constant as $P_{T}\to\infty$. 
        Meanwhile, the channel correlation among users becomes larger for a large Rician factor $\chi$ even though the total averaged channel power remains unchanged, leading to lower performance as $\chi\to\infty$. 
        By balancing eigenvalues of the channel covariance matrix, the CEBAP is shown capable of decoupling user channels in LoS-dominant cases as well.

    \subsection{Utility Improvement versus User Distribution}\label{subsec:utility-max-versus-userpdf}
        Finally, we investigate the performance of the proposed approach under different user distributions. 
        In Figs.~\ref{subfig:perf-ue-usernum-sr} and~\ref{subfig:perf-ue-usernum-msinr}, the ergodic sum rate and minimum SINR given antenna positions for all schemes are presented with increasing user number Poisson parameter $K_{0}$ assuming $\mu_{m} = 1/M$, $\forall m$. 
        As $K_{0}$ increases, the growth speeds of ergodic sum rates for FPAs become lower due to higher user channel correlation, whereas the CEBAP keeps increasing at a higher speed by reducing the correlation, achieving nearly the same performance as ``MA, cell-specific'' and closely approaching the upper bounds. 
        Similarly, significant gains are obtained for ergodic minimum SINR by CEBAP over FPAs, which is even larger with more users.

        \begin{figure}
            \centering
            \vspace{-4pt}
            {
                \begin{subfigure}[t]{0.4\textwidth}
                    \centering
                    \includegraphics[scale = 0.47]{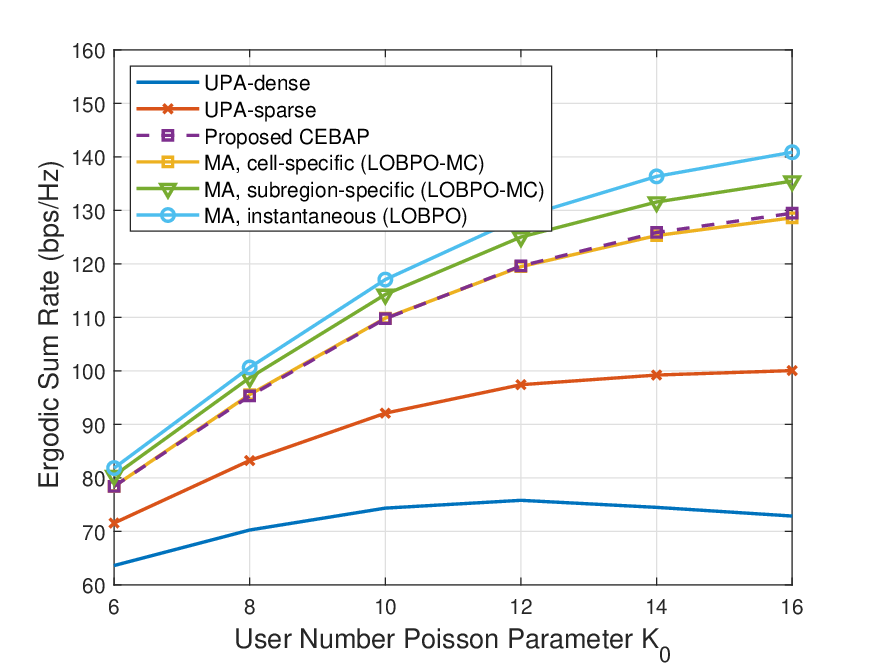}
                    \caption{Ergodic sum rate versus $K_{0}$. }
                    \label{subfig:perf-ue-usernum-sr}
                \end{subfigure}
                \begin{subfigure}[t]{0.4\textwidth}
                    \centering
                    \includegraphics[scale = 0.47]{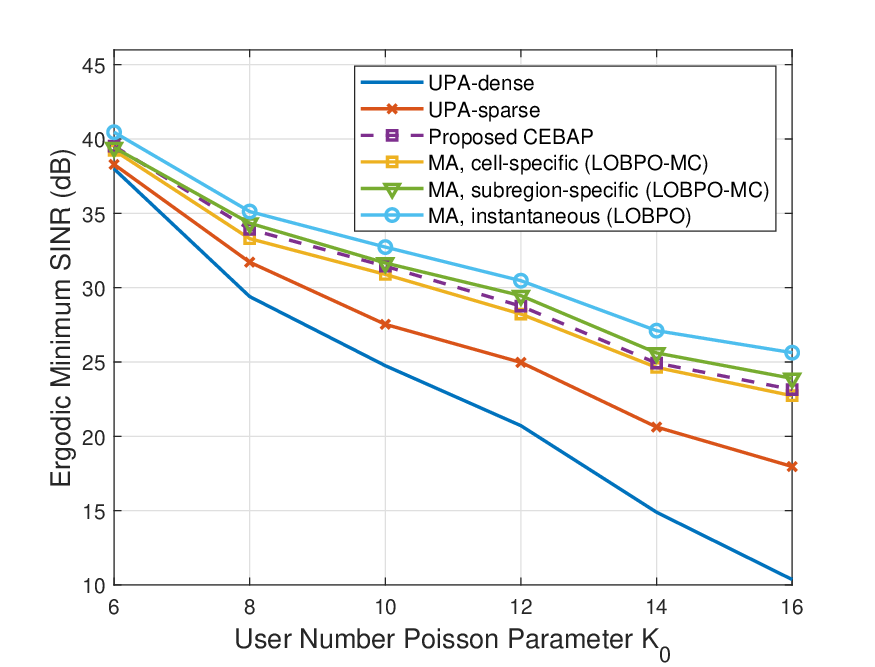}
                    \caption{Ergodic minimum SINR versus $K_{0}$. }
                    \label{subfig:perf-ue-usernum-msinr}
                \end{subfigure}
            }
            \vspace{-2pt}
            \caption{Ergodic sum rate and minimum SINR versus different user number Poisson parameter $K_{0}$. }
            \label{fig:perf-ue-usernum}
            \vspace{-2pt}
        \end{figure}

        Furthermore, we consider a Gaussian user distribution, with its center traversing the cell from the southwest corner to the northeast corner, which is denoted by a 2D vector $a_{\text{ue}}\in\mathbb{R}^{2\times 1}$ on the ground whose elements represent its distances in meters from the southwest corner horizontally and vertically, respectively. 
        The trajectory of $a_{\text{ue}}$ is plotted with the black arrows in Fig.~\ref{fig:user-distrbtn-traverse}. 
        Specifically, $\boldsymbol{a}_{\text{ue}}$ is given by 
        \begin{equation}\label{def:ug-center-traverse}
            \boldsymbol{a}_{\text{ue}} = (1 - \eta_{\text{ue}})\boldsymbol{a}_{\text{SW}} + \eta_{\text{ue}}\boldsymbol{a}_{\text{NE}}, 
        \end{equation}
        where $\eta_{\text{ue}}\in[0, 1]$ is the user center traverse factor, while $\boldsymbol{a}_{\text{SW}}$ and $\boldsymbol{a}_{\text{NE}}$ denote the locations of the southwest corner and the northeast corner, respectively. 
        By denoting $\boldsymbol{a}_{m}\in\mathbb{R}^{2\times 1}$ as the center location of the $m$-th subregion, the user distribution across subregions is defined as
        \begin{equation}\label{def:ug-probs}
            \mu_{m} = \frac{1}{A}\exp\left(
                -\frac{\|\boldsymbol{a}_{m} - \boldsymbol{a}_{\text{ue}}\|_{2}^{2}}{2\sigma_{\text{ue}}^{2}}
            \right), ~1\le m\le M, 
        \end{equation}
        where the spread factor $\sigma_{\text{ue}}$ is set as $54.21$ m while $A$ is the normalization factor such that $\sum_{m = 1}^{M}{\mu_{m}} = 1$. 
        
        \begin{figure}[t]
            \begin{center}
                \includegraphics[scale = 0.325]{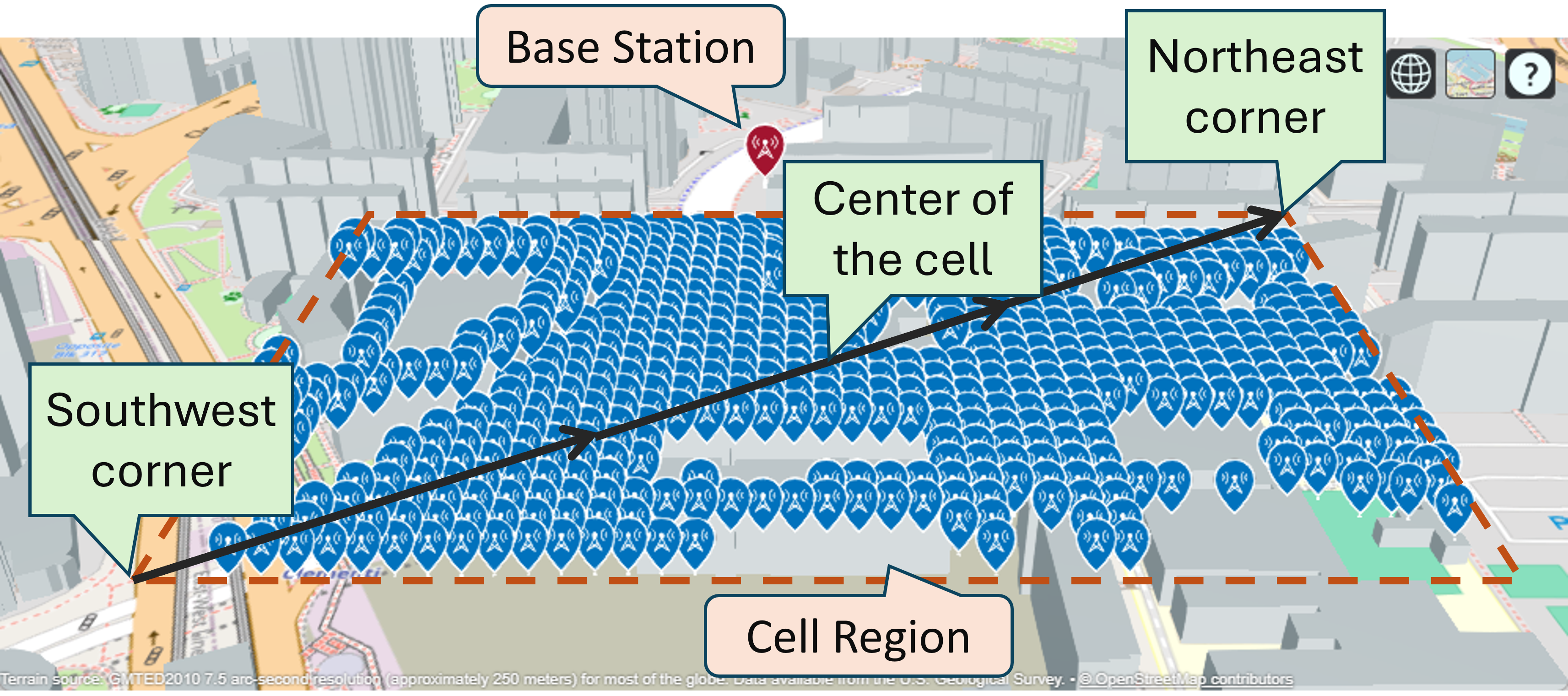}
                \caption{The traverse of the user distribution center. }
                \label{fig:user-distrbtn-traverse}
            \end{center}
        \end{figure}

        \begin{figure}[t!]
            \centering
            {
                \begin{subfigure}[t]{0.24\textwidth}
                    \centering
                    \includegraphics[scale = 0.332]{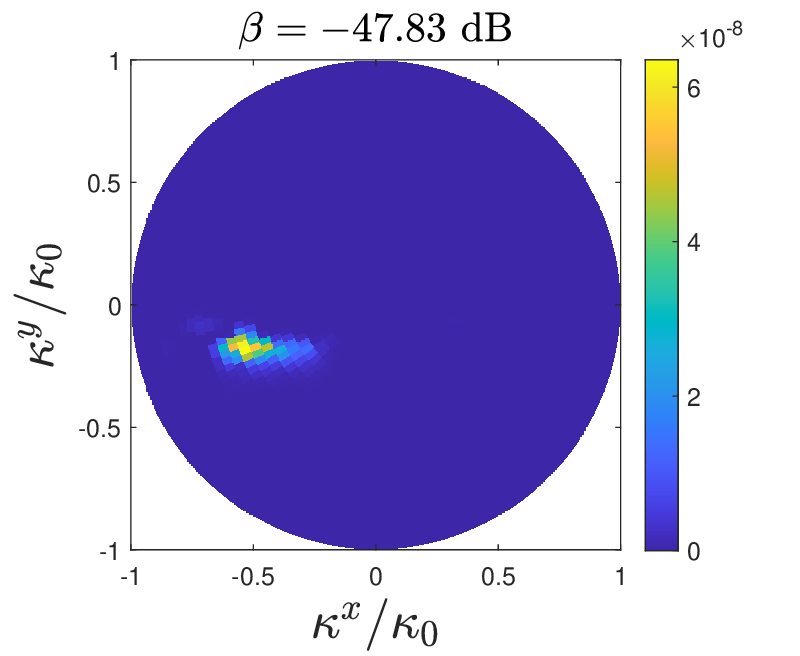}
                    \caption{Southwest corner. }
                    \label{subfig:ug-apsd-southwest}
                \end{subfigure}
                \begin{subfigure}[t]{0.24\textwidth}
                    \centering
                    \includegraphics[scale = 0.332]{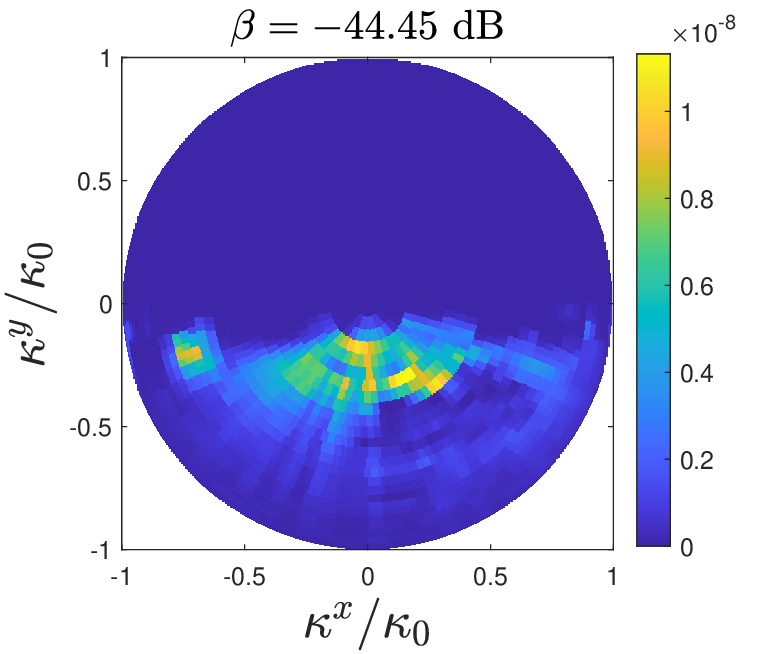}
                    \caption{Center of cell. }
                    \label{subfig:ug-apsd-center}
                \end{subfigure}
                \begin{subfigure}[t]{0.24\textwidth}
                    \centering
                    \includegraphics[scale = 0.332]{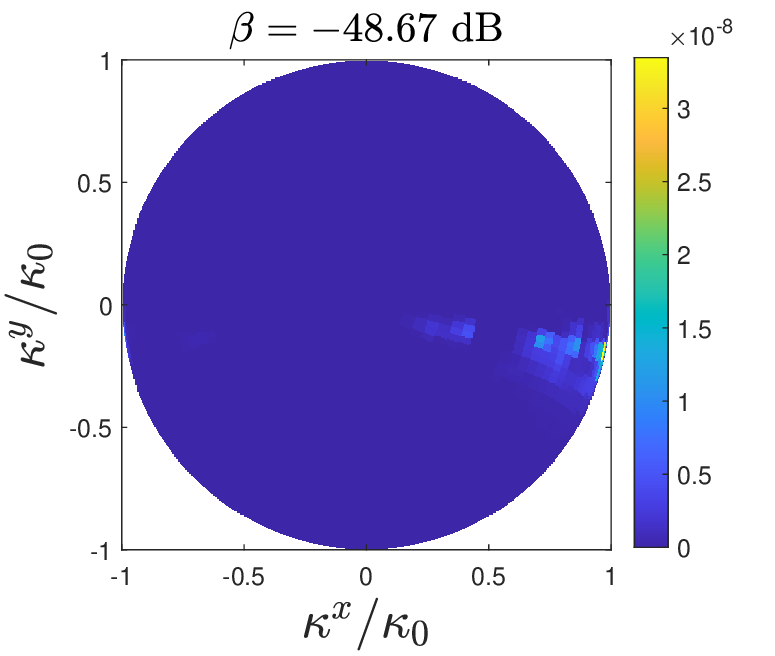}
                    \caption{Northeast corner. }
                    \label{subfig:ug-apsd-northeast}
                \end{subfigure}
            }
            \caption{APSDs given different user distribution centers. }
            \label{fig:ug-apsd}
        \end{figure}

        In Figs.~\ref{subfig:ug-apsd-southwest},~\ref{subfig:ug-apsd-center}, and~\ref{subfig:ug-apsd-northeast}, the APSDs are plotted for $\eta_{\text{ue}} = 0$, $0.5$, and $1$, i.e., the user distribution center is at the southwest corner, center of the cell, and the northeast corner, respectively, where the corresponding total averaged channel power gain is given above the figures. 
        The APSDs for $\eta_{\text{ue}} = 0$ and $\eta_{\text{ue}} = 1$ are spiked and their total power is lower because the majority of the users are distributed at remote corners with their LoS paths blocked or severely attenuated, rendering their channels mainly composed of similar NLoS paths and thus leading to weak and concentrated channel power in the angular domain. 
        In contrast, the total channel power for $\eta_{\text{ue}} = 0.5$ is higher and distributed more uniformly in the angular domain. 

        Based on the APS corresponding to different values of the traverse factor $\eta_{\text{ue}}$, the ergodic sum rate and minimum SINR obtained by CEBAP are shown along with other benchmark schemes in Figs.~\ref{subfig:perf-ue-center-sr} and~\ref{subfig:perf-ue-center-msinr}, respectively. 
        As $\eta_{\text{ue}}$ grows from $0$ to $0.5$, i.e., the user distribution center $\boldsymbol{a}_{\text{ue}}$ moves from the southwest corner to the center of cell, users get closer to the BS and channel power of users is more spread out across the angular domain, which reduce signal attenuation and user channel correlation at the same time, thus leading to performance improvement for both ergodic sum rate and minimum SINR. 
        After that, $\boldsymbol{a}_{\text{ue}}$ moves towards the northeast corner, where distances between the BS and most users become larger and their channels share similar NLoS paths due to obstacles. 
        The system performance decreases during this process because of higher path loss and user channel correlation. 
        Furthermore, the CEBAP always achieves almost the same performance as its upper bounds, exceeding ``MA, cell-specific'' for most cases, which validates its adaptability to different user distributions. 

        \begin{figure}[t!]
            \centering
            \vspace{-4pt}
            {
                \begin{subfigure}[t]{0.4\textwidth}
                    \centering
                    \includegraphics[scale = 0.47]{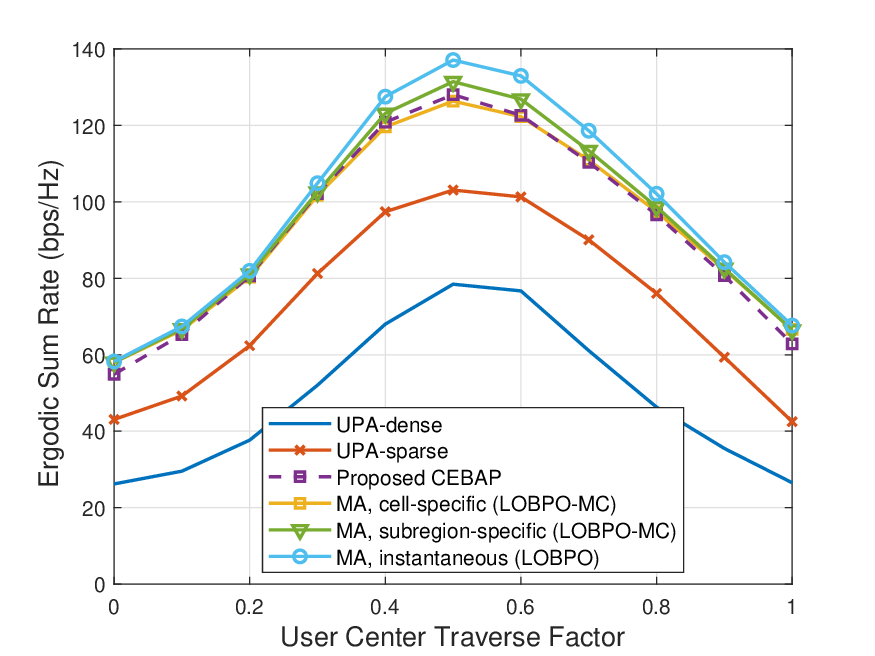}
                    \caption{Ergodic sum rate versus $\eta_{\text{ue}}$. }
                    \label{subfig:perf-ue-center-sr}
                \end{subfigure}
                \begin{subfigure}[t]{0.4\textwidth}
                    \centering
                    \includegraphics[scale = 0.47]{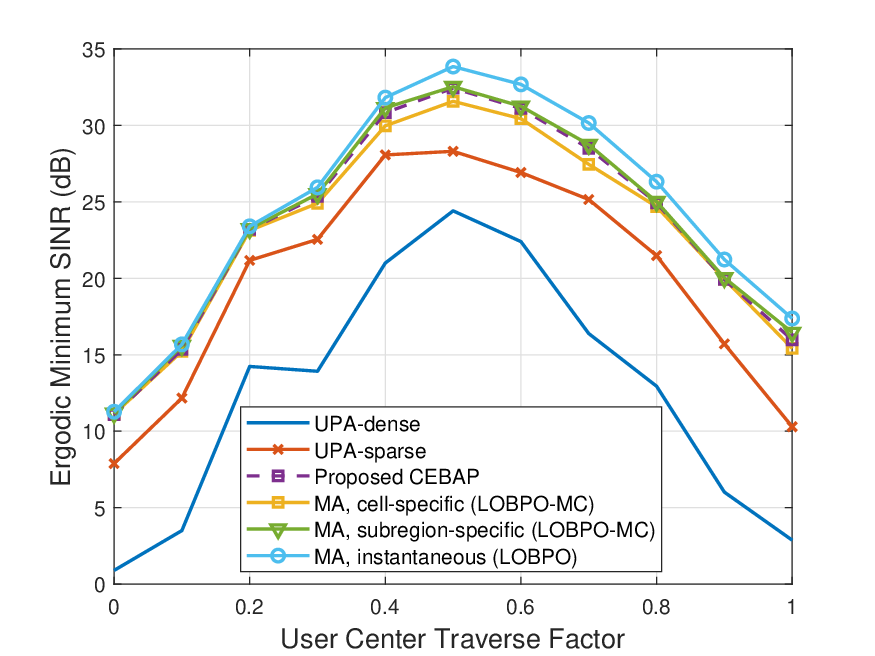}
                    \caption{Ergodic minimum SINR versus $\eta_{\text{ue}}$. }
                    \label{subfig:perf-ue-center-msinr}
                \end{subfigure}
            }
            \caption{Ergodic sum rate and minimum SINR versus user center traverse factor $\eta_{\text{ue}}$. }
            \label{fig:perf-ue-ug-center}
        \end{figure}


\section{Conclusions}\label{sec:conclusions}
    In this paper, we investigated cell-specific long-term antenna position designs for MA-aided MU-MISO downlink transmissions. 
    By optimizing antenna positions based on the APS over an extended timescale, the proposed approach significantly reduces both the antenna movement frequency and channel acquisition overhead, thereby enhancing the practicality of MA deployment at the BS. 
    Specifically, we first developed a cell-specific Gaussian mixture channel model to describe user channel variations over the extended timescale, accounting for both the scattering environment and user distribution. 
    Based on this model, the MA optimization was formulated as a two-timescale optimization problem, where the antenna positions are designed for maximizing the ergodic system utility, while precoding adapts to instantaneous CSI. 
    Using ZF precoding, the CEBAP was derived via asymptotic analysis to approximate the optimal solution. 
    Rather than directly focusing on a specific utility, the proposed CEBAP mitigates inter-user channel correlation by exploiting the APS to equalize eigenvalues of the channel covariance matrix, aiming for a balanced channel eigenspace independent of the utility function. 
    As such, the resulting design is able to enhance system performance in various aspects while avoiding heavy channel acquisition burden. 
    To enable efficient implementation, the LOBPO method was further proposed to numerically obtain the CEBAP solutions through maximizing the asymptotic decorrelated channel power gain incorporated by log-barrier penalties for satisfying practical constraints. 
    Simulation results based on realistic urban environments and ray-tracing channel models validate the effectiveness of the proposed approach across various utility functions.
    In particular, the performance gains are shown to be more pronounced for moderately large MA regions and moderately concentrated APS, as demonstrated by both analytical and simulation results.


\appendices

    \section{Proof for Proposition~\ref{prop:asymp-approx}}
    \label{appendix:asymp-approx-proposition}
        According to~\cite{ref:my-ma-tcom}, constant deterministic equivalents for $c_{k}$ can be obtained under the conditions of Proposition~\ref{prop:asymp-approx}. 
        By defining $\mathcal{M}(k)$ as the index of the subregion in which user $k$ is located, we have $c_{k} - c_{k}^{\infty}\rightarrow 0$ almost surely, where 
        \begin{subequations}
            \begin{gather}
                c_{k}^{\infty} = \left[
                    \mathrm{tr}\left(
                        \boldsymbol{G}_{\mathcal{M}(k)}\boldsymbol{Y}_{k}^{-1}
                    \right)
                \right], \\
                \boldsymbol{Y}_{k} = \boldsymbol{I}_{N} + \sum_{i\neq k}{\epsilon_{k, i}\boldsymbol{G}_{\mathcal{M}(i)}}, 
            \end{gather}
        \end{subequations}
        while $\epsilon_{k, i}$, $1\le i\le K$, are the unique solutions to the following equation, 
        \begin{equation}
            \mathrm{tr}\left(
                \epsilon_{k, i}\boldsymbol{G}_{\mathcal{M}(i)}\boldsymbol{Y}_{k}^{-1}
            \right) = 1, ~1\le k\neq i\le K. 
        \end{equation}
        Note that, for any $k, i$ such that $\mathcal{M}(i) = m$, we have 
        \begin{equation}
            \epsilon_{k, i}^{-1} = \mathrm{tr}\left[
                \boldsymbol{G}_{m}\left(\boldsymbol{I}_{N} + \sum_{l\neq k}{\epsilon_{k, l}\boldsymbol{G}_{\mathcal{M}(l)}}\right)^{-1}
            \right], 
        \end{equation}
        which is independent of $i$ and only relies on $m$. 
        Thus, we define ${e}_{k, m} = \epsilon_{k, i}$ for $\mathcal{M}(i) = m$, $1\le m\le M$. 
        Given that $K\to\infty$, it is easy to verify that the cadinality of set $\mathcal{I}_{k, m} = \{i | i\neq k, 1\le i\le K, \mathcal{M}(i) = m\}$ satisfies 
        \begin{equation}
            \lim_{K\to\infty}{\frac{|\mathcal{I}_{k, m}|}{K - 1}} = \lim_{K\to\infty}{\frac{|\mathcal{I}_{k, m}|}{K}} = \mu_{m}, ~\forall k, m. 
        \end{equation}
        Then, the matrix $\boldsymbol{Y}_{k} - \boldsymbol{I}_{N}$ can be equivalently rewritten as 
        \begin{subequations}
            \begin{align}
                \boldsymbol{Y}_{k} - \boldsymbol{I}_{N} & = \sum_{i\neq k}{\epsilon_{k, i}\boldsymbol{G}_{\mathcal{M}(i)}} = \sum_{m = 1}^{M}\left(
                    \sum_{i\in\mathcal{I}_{k, m}}{\epsilon_{k, i}\boldsymbol{G}_{m}}
                \right) \\
                & = \sum_{m = 1}^{M}{{e}_{k, m}\boldsymbol{G}_{m}\cdot |\mathcal{I}_{k, m}|} ~ \sim ~ (K - 1)\boldsymbol{A}_{k}, 
            \end{align}
        \end{subequations}
        where $\boldsymbol{A}_{k} = \sum_{m = 1}^{M}{{e}_{k, m}\mu_{m}\boldsymbol{G}_{m}}\in\mathbb{C}^{N\times N}$. 
        Consequently, the vector $\boldsymbol{e}_{k} = [e_{k, 1}, \ldots, e_{k, M}]^{T}\in\mathbb{R}_{+}^{M\times 1}$ asymptotically satisfies the following equation for large $K$, 
        \begin{equation}
            \mathrm{tr}\left[
                e_{k, m}\boldsymbol{G}_{m}\left(
                    \boldsymbol{I}_{N} + (K - 1)\boldsymbol{A}_{k}
                \right)^{-1}
            \right] = 1. 
        \end{equation}
        Notably, the above equation holds true for $\boldsymbol{e}_{k}$, $\forall k$, while its solution is unique due to uniqueness of $\epsilon_{k, i}$, $\forall k, i$, indicating that $\boldsymbol{e}_{k} = \bar{\boldsymbol{e}} = [\bar{e}_{1}, \ldots, \bar{e}_{M}]^{T}\in\mathbb{R}_{+}^{M\times 1}$ are identical given $K$. 
        As such, we have $\boldsymbol{A}_{k} = \bar{\boldsymbol{A}} \triangleq \sum_{m = 1}^{M}{\bar{e}_{m}\mu_{m}\boldsymbol{G}_{m}}$, $\boldsymbol{Y}_{k} = \bar{\boldsymbol{Y}} \triangleq \boldsymbol{I}_{N} + (K - 1)\bar{\boldsymbol{A}}$, $\forall k$, and $\bar{e}_{m}^{-1} = \mathrm{tr}(\boldsymbol{G}_{m}\bar{\boldsymbol{Y}}^{-1})$, $\forall m$, yielding
        \begin{equation}
            c_{k}^{\infty} = \left[
                \mathrm{tr}\left(\boldsymbol{G}_{\mathcal{M}(k)}\bar{\boldsymbol{Y}}^{-1}\right)
            \right] = \bar{e}_{\mathcal{M}(k)}^{-1}, ~\forall k. 
        \end{equation}
        Hence, we have $c_{k} - \bar{e}_{\mathcal{M}(k)}^{-1}\rightarrow 0$ under conditions of Proposition~\ref{prop:asymp-approx}, where $\bar{e}_{1}, \ldots, \bar{e}_{M}$ can be solved via the equation below, 
        \begin{equation}
            \mathrm{tr}\left[
                \bar{e}_{m}\boldsymbol{G}_{m}\bar{\boldsymbol{Y}}^{-1}
            \right] = 1, ~\forall m. 
        \end{equation}
        
        Therefore, the random vector $\boldsymbol{c}$ can be asymptotically approximated by constants $\bar{e}_{1}, \ldots, \bar{e}_{M}$ in the large system limit, where $\bar{e}_{m}^{-1}$ can be interpreted as the asymptotic decorrelated channel power of a user in the $m$-th subregion from other $(K - 1)$ users, $\forall m$, which is identical for all users within the same subregion. 
        This is a direct result of the Gaussian mixture channel model, where the channel covariance matrix for each subregion is distinct. 
        To further characterize the correlation between users randomly distributed over all subregions, we define $\rho_{K}$ as the decorrelated channel power gain across the cell, which is given by the averaged asymptotic decorrelated channel power gain over subregions given $K$, i.e., $\rho_{K} = \sum_{m = 1}^{M}\mu_{m}\bar{e}_{m}^{-1}$. 
        Thus, it can be verified that 
        \begin{subequations}
            \begin{align}
                & \rho_{K} - \mathbb{E}_{\boldsymbol{H}}\left[c_{k} | K\right] = \sum_{m = 1}^{M}
                \mu_{m}\Big(\bar{e}_{m}^{-1} - \\
                & ~~~~ ~~~~ ~~~~ ~~~~ \mathbb{E}_{\boldsymbol{H}}\left[c_{k} | K, \mathcal{M}(k) = m\right]\Big)
                \rightarrow 0. 
            \end{align}
        \end{subequations}
        Moreover, by replacing $\bar{e}_{m}^{-1}$, $1\le m\le M$, with $\rho_{K}$, we have $\bar{\boldsymbol{A}}\approx\sum_{m = 1}^{M}\rho_{K}^{-1}\mu_{m}\boldsymbol{G}_{m} = \rho_{K}^{-1}\bar{\boldsymbol{G}}$ and thus 
        \begin{subequations}
            \begin{align}
                \rho_{K} & = \sum_{m = 1}^{M}\mu_{m}\mathrm{tr}\left(\boldsymbol{G}_{m}\bar{\boldsymbol{Y}}^{-1}\right) = \mathrm{tr}\left(\bar{\boldsymbol{G}}\bar{\boldsymbol{Y}}^{-1}\right) \\
                & \approx\mathrm{tr}\left(\bar{\boldsymbol{G}}\left(\boldsymbol{I}_{N} + \rho_{K}^{-1}(K - 1)\bar{\boldsymbol{G}}\right)^{-1}\right), 
            \end{align}
        \end{subequations}
        which yields equation~\eqref{def:corr-factor-fixed-point}. 

        Additionally, the existence and uniqueness of the positive solution $\rho_{K}$ to equation~\eqref{def:corr-factor-fixed-point} can be shown. 
        Using the eigenvalue decomposition, equation~\eqref{def:corr-factor-fixed-point} can be equivalently rewritten as $\xi_{K}(\rho_{K}) = 1$, where function $\xi_{K}(\rho)$ is defined in~\eqref{def:corr-facotr-xi-func}. 
        Note that $\xi_{K}(0) = N/(K - 1) > 1$, $\xi_{K}(\rho)\to 0 < 1$ as $\rho\to\infty$, and it is easy to verify that $\xi_{K}(\rho)$ monotonically decreases with $\rho\ge 0$. 
        Thus, the positive solution $\rho_{K}$ for equation $\xi_{K}(\rho) = 1$ exists and can be uniquely determined.

    \section{Convergence Proof for the Newton's Method}
    \label{appendix:newton-convergence}
        From equation~\eqref{def:corr-factor-xi-func-deriv}, it can be verified that $\partial{\xi_{K}(\rho)}/\partial{\rho} = J_{K}(\rho) < 0$ and $\partial^{2}{\xi_{K}(\rho)}/\partial^{2}{\rho} > 0$ for $\rho \ge 0$ given $K\ge 2$, i.e., $\xi_{K}(\rho)$ decreases with $\rho$ and is convex. 
        First, we show via mathematical induction that, for $0\le i\le I_{c}$, we always have
        \begin{equation}\label{eq:newton-itr-rho-range}
            0\le\rho_{K}^{(i)}\le\rho_{K}. 
        \end{equation}
        Obviously, this condition holds for $i = 0$, where $\rho_{K}^{(0)} = 0$ is employed as initialization. 
        Then, by assuming condition~\eqref{eq:newton-itr-rho-range} for some $i$, we have $\xi_{K}(\rho_{K}^{(i)})\ge\xi_{K}(\rho_{K}) = 1$ and thus
        \begin{equation}
            \rho_{K}^{(i + 1)} = \rho_{K}^{(i)} - J_{K}(\rho_{K}^{(i)})^{-1}\left(\xi_{K}(\rho_{K}^{(i)}) - 1\right)\ge\rho_{K}^{(i)}\ge 0, 
        \end{equation}
        because $J_{K}(\rho_{K}^{(i)}) < 0$. 
        Meanwhile, due to the convexity of $\xi_{K}(\rho)$, we have 
        \begin{subequations}
            \begin{align}
                \xi_{K}(\rho_{K}^{(i + 1)}) & \ge \xi_{K}(\rho_{K}^{(i)}) + J_{K}(\rho_{K}^{(i)})\left(\rho_{K}^{(i + 1)} - \rho_{K}^{(i)}\right) \\
                & = \xi_{K}(\rho_{K}^{(i)}) - \left(\xi_{K}(\rho_{K}^{(i)}) - 1\right) \\
                & = 1, 
            \end{align}
        \end{subequations}
        which equals $\xi_{K}(\rho_{K})$ and indicates $\rho_{K}^{(i + 1)}\le\rho_{K}$ as $\xi_{K}(\rho)$ is monotonically decreasing. 
        Thereby, we have shown that $0\le\rho_{K}^{(i + 1)}\le\rho_{K}$ holds in this case. 
        Hence, condition~\eqref{eq:newton-itr-rho-range} must hold throughout the iterations of the Newton's method. 

        Next, by leveraging $\xi_{K}(\rho_{K}) = 1$ and the iterative update equation~\eqref{def:newtons-iterations}, the error between $\rho_{K}^{(i + 1)}$ and $\rho_{K}$ is given by 
        \begin{subequations}\label{eq:newton-method-error-evolve}
            \begin{align}
                & \left|\rho_{K}^{(i + 1)} - \rho_{K}\right| = \left|
                    \rho_{K}^{(i)} - \rho_{K} - J_{K}(\rho_{K}^{(i)})^{-1}\left(\xi_{K}(\rho_{K}^{(i)}) - 1\right)
                \right| \label{subeq:newton-method-error-update} \\
                & ~~~~ = \left|\rho_{K}^{(i)} - \rho_{K}\right|\cdot\left|
                    1 - J_{K}(\rho_{K}^{(i)})^{-1}\cdot\frac{\xi_{K}(\rho_{K}^{(i)}) - \xi_{K}(\rho_{K})}{\rho_{K}^{(i)} - \rho_{K}}
                \right|. \label{subeq:newton-method-error-decay}
            \end{align}
        \end{subequations}
        Since $\xi_{K}(\rho)$ is continuous and smooth for $\rho\ge 0$, there exists $\bar{\rho}_{K}^{(i)}\in[\rho_{K}^{(i)}, \rho_{K}]$ such that 
        \begin{equation}\label{def:newton-error-decay-lagrange-deriv}
            J_K(\bar{\rho}_{K}^{(i)}) = \frac{\xi_{K}(\rho_{K}^{(i)}) - \xi_{K}(\rho_{K})}{\rho_{K}^{(i)} - \rho_{K}}. 
        \end{equation}
        Moreover, due to the convexity of $\xi_{K}(\rho)$, its derivative $J_{K}(\rho)$ increases with $\rho\ge 0$, leading to 
        \begin{equation}
            J_K(0)\le J_K(\rho_{K}^{(i)})\le J_K(\bar{\rho}_{K}^{(i)})\le J_K(\rho_{K}) < 0. 
        \end{equation}
        Thus, the second term in equation~\eqref{subeq:newton-method-error-decay} can be bounded as follows: 
        \begin{equation}\label{eq:newton-error-decay-bounds}
            0\le 1 - \frac{J_K(\bar{\rho}_{K}^{(i)})}{J_K(\rho_{K}^{(i)})} \le 1 - \frac{J_K(\rho_{K})}{J_K(0)} < 1. 
        \end{equation}
        By substituting~\eqref{def:newton-error-decay-lagrange-deriv} and~\eqref{eq:newton-error-decay-bounds} into~\eqref{eq:newton-method-error-evolve}, we have 
        \begin{subequations}
            \begin{align}
                \left|\rho_{K}^{(i + 1)} - \rho_{K}\right| & \le \left|\rho_{K}^{(i)} - \rho_{K}\right|\cdot\left|
                    1 - \frac{J_K(\rho_{K})}{J_K(0)}
                \right| \\
                & \le \rho_{K}\left|
                    1 - \frac{J_K(\rho_{K})}{J_K(0)}
                \right|^{i + 1}, 
            \end{align}
        \end{subequations}
        which vanishes for sufficiently large $i$. 
        This completes the proof of convergence of the Newton's method employed in Section~\ref{subsec:asymp-approx}.

    \section{Analysis for VMF-Type APS}
    \label{appendix:vmf-corr-analysis}
        \subsection{Derivation of Equation~\eqref{eq:vmf-corr-sinc} and~\eqref{def:vmf-corr-complex-dist}}
        \label{appendix-subsec:vmf-corr-derivation}
            With sufficiently dense anglar sampling over $\mathcal{S}_{+}$, i.e., $N_{E}$ and $N_{A}$ goes to infinity, we have 
            \begin{subequations}\label{eq:vmf-corr}
                \begin{align}
                    & \left[\bar{\boldsymbol{G}}\right]_{ni} = \sum_{l = 1}^{L_{0}}{
                        b_{l}\exp\left(
                            -j\bar{\boldsymbol{\kappa}}_{l}^{T}(\tilde{\boldsymbol{r}}_{n} - \tilde{\boldsymbol{r}}_{i})
                        \right)
                    } \label{subeq:vmf-corr-def} \\
                    & \xrightarrow{N_{E}, N_{A}\rightarrow +\infty} \frac{1}{B}\int_{
                        \bar{\boldsymbol{\kappa}}\in\mathcal{S}_{+}
                    }{
                        \exp\left(
                            -j\bar{\boldsymbol{\kappa}}^{T}(\boldsymbol{\delta}_{n,i} + j\boldsymbol{\nu})
                        \right)
                    }{
                        \mathrm{d}^{2}{S}
                    }, \label{subeq:vmf-corr-integral}
                \end{align}
            \end{subequations}
            where~\eqref{subeq:vmf-corr-integral} results from replacing the Riemann sum in~\eqref{subeq:vmf-corr-def} with integral. 
            Meanwhile, the normalization factor becomes $B\rightarrow\beta^{-1}\int_{\mathcal{S}_{+}}\exp(\boldsymbol{\nu}^{T}\bar{\boldsymbol{\kappa}}){\mathrm{d}^{2}S}$. 

            Given a large $\nu_{0}$ with $\nu_{z} > 0$, the vMF over the other hemisphere with $\kappa^{z} < 0$, i.e., $\mathcal{S}_{-} = \{\boldsymbol{\kappa} = [\kappa^{x}, \kappa^{y}, \kappa^{z}]^{T}|\|\boldsymbol{\kappa}\|_{2} = \kappa_{0}, \kappa^{z} < 0\}$, becomes negligible compared to $\mathcal{S}_{+}$. 
            Thereby, $\left[\bar{\boldsymbol{G}}\right]_{ni}$ can be approximately computed by integrating over the full sphere $\mathcal{S} = \{\boldsymbol{\kappa} = [\kappa^{x}, \kappa^{y}, \kappa^{z}]^{T}|\|\boldsymbol{\kappa}\|_{2} = \kappa_{0}\}$ as follows, 
            \begin{equation}\label{eq:vmf-corr-fullsphere-integral}
                \left[\bar{\boldsymbol{G}}\right]_{ni} \approx \frac{1}{B}\int_{
                    \bar{\boldsymbol{\kappa}}\in\mathcal{S}
                }{
                    \exp\left(
                        -j\bar{\boldsymbol{\kappa}}_{l}^{T}(\boldsymbol{\delta}_{n,i} + j\boldsymbol{\nu})
                    \right)
                }{
                    \mathrm{d}^{2}{S}
                }. 
            \end{equation}
            To calculate the integral in~\eqref{eq:vmf-corr-fullsphere-integral}, we leverage the following integral equation for real vector $\boldsymbol{\delta}\in\mathbb{R}^{3\times 1}$, i.e., 
            \begin{equation}\label{def:real-vmf-sphere-integral}
                \int_{
                    \bar{\boldsymbol{\kappa}}\in\mathcal{S}
                }{
                    \exp\left(
                        -j\bar{\boldsymbol{\kappa}}_{l}^{T}\boldsymbol{\delta}
                    \right)
                }{
                    \mathrm{d}^{2}{S}
                } = 4\pi\kappa_{0}^{2}\mathrm{sinc}(\kappa_{0}\|\boldsymbol{\delta}\|_{2}), 
            \end{equation}
            which is verified in~\cite{ref:hmimo-spherical-integral-ref}. 
            By applying the analytical continuation technique~\cite{ref:analytical-continuation-source, ref:analytical-continuation-example}, i.e., replacing $\boldsymbol{\delta}$ with a complex vector $\boldsymbol{\delta}' = \boldsymbol{\delta}_{n, i} + j\boldsymbol{\nu}$, we have 
            \begin{equation}
                \left[\bar{\boldsymbol{G}}\right]_{n,i} \approx \frac{4\pi\kappa_{0}^{2}}{B}\mathrm{sinc}(\kappa_{0}d_{n, i}), 
            \end{equation}
            where $d_{n,i}$ is given by~\eqref{def:vmf-corr-complex-dist}. 
            Note that $\|\boldsymbol{\delta}\|_{2}$ in~\eqref{def:real-vmf-sphere-integral} is not simply replaced by $\|\boldsymbol{\delta}'\|_{2} = \sqrt{\boldsymbol{\delta}'^{H}\boldsymbol{\delta}'}$ because hermitian transpose is not holomorphic, which should be changed to $d_{n, i} = \sqrt{\boldsymbol{\delta}'^{T}\boldsymbol{\delta}'}$ instead. 

        \subsection{Proof of Equation~\eqref{eq:vmf-corr-sparse-limit}}
        \label{appendix-subsec:vmf-corr-sparse-limit-proof}
            From equation~\eqref{def:vmf-corr-complex-dist}, rewrite $d_{n, i}$ as follows, 
            \begin{equation}
                d_{n, i} = \sqrt{
                    (\|\boldsymbol{\delta}_{n,i}\|_{2}^{2} - \nu_{0}^2) + j2\boldsymbol{\delta}_{n,i}^{T}\boldsymbol{\nu}
                }. 
            \end{equation}
            Let $\hat{\boldsymbol{\delta}}_{n, i} = \boldsymbol{\delta}_{n, i}/\|\boldsymbol{\delta}_{n, i}\|_{2}$, $a = \|\boldsymbol{\delta}_{n,i}\|_{2}^{2} - \nu_{0}^2$, and $b = 2\boldsymbol{\delta}_{n,i}^{T}\boldsymbol{\nu}$. 
            Without loss of generality, let $d_{n, i} = d_{\text{re}} + jd_{\text{im}}$, where 
            \begin{equation}\label{def:vmf-type-complex-dist-solved}
                d_{\text{re}} = \sqrt{(r + a) / 2}, ~d_{\text{im}} = \sqrt{(r - a) / 2}, 
            \end{equation}
            with $r = \sqrt{a^{2} + b^{2}}$. 
            As $\|\boldsymbol{\delta}_{n, i}\|_{2}\rightarrow +\infty$ with fixed $\hat{\boldsymbol{\delta}}_{n, i}$, it can be easily verified that $a, b, r\rightarrow +\infty$ and $ab^{-2}\rightarrow (2\hat{\boldsymbol{\delta}}_{n, i}^{T}\boldsymbol{\nu})^{-2}$. 
            Therefore, we have $d_{\text{re}} = \sqrt{(r + a)/2} \rightarrow +\infty$ and 
            \begin{subequations}
                \begin{align}
                    d_{\text{im}} & = \sqrt{\frac{r - a}{2}} = \left(
                        \frac{1}{2}\frac{
                            b^{2}
                        }{\sqrt{a^{2} + b^{2}} + a}
                    \right)^{\frac{1}{2}} \\
                    & = \frac{1}{\sqrt{2}}\left[
                        \left(
                            \frac{a^{2}}{b^{4}} + \frac{1}{b^2}
                        \right)^{\frac{1}{2}} + \frac{a}{b^2}
                    \right]^{-\frac{1}{2}} \rightarrow \hat{\boldsymbol{\delta}}_{n, i}^{T}\boldsymbol{\nu}. 
                \end{align}
            \end{subequations}
            Note that $\mathrm{sinc}(\kappa_{0}d_{n, i})$ can be equivalently rewritten as 
            \begin{equation}
                \mathrm{sinc}(\kappa_{0}d_{n, i}) = \frac{e^{j\kappa_{0}d_{\text{re}}}e^{-\kappa_{0}d_{\text{im}}} - e^{-j\kappa_{0}d_{\text{re}}}e^{\kappa_{0}d_{\text{im}}}}{2j\kappa_{0}(d_{\text{re}} + jd_{\text{im}})}, 
            \end{equation}
            where the numerator is bounded while the denominator goes to infinity. 
            Hence, we have $4\pi\kappa_{0}^{2}B^{-1}\mathrm{sinc}(\kappa_{0}d_{n, i})\rightarrow 0$.

        \subsection{Proof of Equation~\eqref{eq:vmf-corr-concentrated-limit}}
        \label{appendix-subsec:vmf-corr-concentrated-limit-proof}
            When $\nu_{0}\rightarrow +\infty$ given $\boldsymbol{\delta}_{n, i}$, we have $a\rightarrow -\infty$, $b, r\rightarrow +\infty$, and $ab^{-2}\rightarrow -(2\boldsymbol{\delta}_{n, i}^{T}\hat{\boldsymbol{\nu}})^{-2}$. 
            Contrary to the former case, it can be shown that $d_{\text{im}} = \sqrt{(r - a) / 2}\rightarrow +\infty$ while $d_{\text{re}}\rightarrow \boldsymbol{\delta}_{n, i}^{T}\hat{\boldsymbol{\nu}}$ instead, where $\hat{\boldsymbol{\nu}} = \boldsymbol{\nu}/\nu_{0}$ as defined in Section~\ref{subsec:discussions}. 
            Meanwhile, the normalization factor becomes
            \begin{equation}\label{eq:vmf-norm-integral}
                B\sim\beta^{-1}\int_{\mathcal{S}}{
                    \exp(\boldsymbol{\nu}^{T}\bar{\boldsymbol{\kappa}})
                }{\mathrm{d}^{2}S} = \frac{2\pi\kappa_{0}(e^{\kappa_{0}\nu_{0}} - e^{-\kappa_{0}\nu_{0}})}{\beta\nu_{0}},
            \end{equation}
            where $x\sim y$ means $x/y\rightarrow 1$. 
            Note that the integration is applied over the whole sphere $\mathcal{S}$ because the integrand over $\mathcal{S}_{-}$ is asymptotically negligible as $\nu_{0}\to +\infty$. 
            Moreover, it can be verified that 
            \begin{equation}\label{eq:vmf-corr-dist-imag-scale-order-with-ccntn}
                \frac{d_{\text{im}}}{\nu_{0}} = \left(
                    \frac{\sqrt{a^{2} + b^{2}} - a}{2\nu_{0}^{2}}
                \right)^{\frac{1}{2}} = 1 + \mathcal{O}\left(
                    \left(\frac{1}{\nu_{0}}\right)^{2}
                \right), 
            \end{equation}
            where the second equation is obtained by applying the Taylor expansion w.r.t. $(1/\nu_{0})$. 
            Thus, we have $d_{\text{im}} - \nu_{0}\rightarrow 0$ and
            \begin{subequations}
                \begin{align}
                    & \frac{1}{B}\mathrm{sinc}(\kappa_{0}d_{n, i}) \sim \frac{
                        \beta\nu_{0}\left(
                            e^{j\kappa_{0}d_{\text{re}}}e^{-\kappa_{0}d_{\text{im}}} - e^{-j\kappa_{0}d_{\text{re}}}e^{\kappa_{0}d_{\text{im}}}
                        \right)
                    }{
                        j4\pi\kappa_{0}^{2}(d_{\text{re}} + jd_{\text{im}})(e^{\kappa_{0}\nu_{0}} - e^{-\kappa_{0}\nu_{0}})
                    } \\
                    & ~~~~ ~~~~ \sim \frac{\beta}{4\pi\kappa_{0}^{2}}e^{-j\kappa_{0}\boldsymbol{\delta}_{n, i}^{T}\hat{\boldsymbol{\nu}}}\cdot\frac{\nu_{0}}{d_{\text{im}}}e^{\kappa_{0}(d_{\text{im}} - \nu_{0})}  \\
                    & ~~~~ ~~~~ \rightarrow \frac{\beta}{4\pi\kappa_{0}^{2}}\exp(-j\kappa_{0}\boldsymbol{\delta}_{n, i}^{T}\hat{\boldsymbol{\nu}}), 
                \end{align}
            \end{subequations}
            as $\nu_{0}\rightarrow +\infty$, which yields equation~\eqref{eq:vmf-corr-concentrated-limit}.

    \section{Derivation of Equations~\eqref{def:corr-factor-grads-wrt-positions} and~\eqref{def:corr-factor-grads-Smat}}
    \label{appendix:corr-factor-grads}
        By leveraging equation $\xi_{N}(\rho_{N}) = 1$, we have
        \begin{equation}
            0 = \frac{\mathrm{d}{\xi_{N}(\rho_{N})}}{\mathrm{d}{\boldsymbol{t}}} = \frac{\partial{\xi_{N}(\rho_{N})}}{\partial{\rho_{N}}}\frac{\mathrm{d}{\rho_{N}}}{\mathrm{d}{\boldsymbol{t}}} + \frac{\partial{\xi_{N}(\rho_{N})}}{\partial{\boldsymbol{t}}}, 
        \end{equation}
        where $\boldsymbol{t}\in\{\boldsymbol{x}, \boldsymbol{y}\}$, which can be equivalently written as
        \begin{equation}
            \frac{\mathrm{d}{\rho_{N}}}{\mathrm{d}{\boldsymbol{t}}} = -\left(
                \frac{\partial{\xi_{N}(\rho_{N})}}{\partial{\rho_{N}}}
            \right)^{-1}\frac{\partial{\xi_{N}(\rho_{N})}}{\partial{\boldsymbol{t}}}. 
        \end{equation}
        To solve the gradient of $\rho_{N}$ w.r.t. $\boldsymbol{t}$, we have to first compute the partial derivatives of $\xi_{N}(\rho)$ w.r.t. $\rho$ and $\boldsymbol{t}$. 

        More generally, given function $\xi_{K}(\rho)$, $\forall K$, its derivative can be equivalently written as 
        \begin{subequations}\label{eq:partial-xi-to-rho}
            \begin{align}
                \frac{\partial{\xi_{K}(\rho)}}{\partial{\rho}} & = J_{K}(\rho) = -\sum_{n = 1}^{N}{
                    \frac{\lambda_{n}}{(\rho + (K - 1)\lambda_{n})^{2}}
                } \\
                & = -\mathrm{tr}\left[
                    \bar{\boldsymbol{G}}\left(
                        \rho\boldsymbol{I}_{N} + (K - 1)\bar{\boldsymbol{G}}
                    \right)^{-2}
                \right] \\
                & = -\rho^{-2}\mathrm{tr}\left(
                    \bar{\boldsymbol{G}}\boldsymbol{\Upsilon}_{K}^{-2}
                \right), 
            \end{align}
        \end{subequations}
        where $\boldsymbol{\Upsilon}_{K} = \boldsymbol{I}_{N} + \rho^{-1}(K - 1)\bar{\boldsymbol{G}}\in\mathbb{C}^{N\times N}$. 
        Meanwhile, we have $\xi_{K}(\rho) = \mathrm{tr}(\rho^{-1}\bar{\boldsymbol{G}}\boldsymbol{\Upsilon}_{K}^{-1})$ and thus
        \begin{equation}\label{def:partial-xi-to-position}
            \frac{\partial{\xi_{K}(\rho)}}{\partial{{t}_{n}}} = \mathrm{tr}\left(
                \rho^{-1}\frac{\partial{
                    \bar{\boldsymbol{G}}
                }}{\partial{
                    {t}_{n}
                }}\boldsymbol{\Upsilon}_{K}^{-1}
            \right) - \mathrm{tr}\left(
                \rho^{-1}\bar{\boldsymbol{G}}\frac{\partial{
                    \boldsymbol{\Upsilon}_{K}^{-1}
                }}{\partial{{t}_{n}}}
            \right), 
        \end{equation}
        where $t\in\{x, y\}$. 
        Next, denote the two terms on the right-hand side of equation~\eqref{def:partial-xi-to-position} as 
        \begin{subequations}
            \begin{align}
                \mathcal{R}_{t, n}^{(1)} & = \mathrm{tr}\left(
                    \rho^{-1}\frac{\partial{
                        \bar{\boldsymbol{G}}
                    }}{\partial{
                        {t}_{n}
                    }}\boldsymbol{\Upsilon}_{K}^{-1}
                \right), \\
                \mathcal{R}_{t, n}^{(2)} & = - \mathrm{tr}\left(
                    \rho^{-1}\bar{\boldsymbol{G}}\frac{\partial{
                        \boldsymbol{\Upsilon}_{K}^{-1}
                    }}{\partial{{t}_{n}}}
                \right), 
            \end{align}
        \end{subequations}
        respectively, $\forall n$, such that $\partial{\xi_{K}(\rho)}/\partial{{t}_{n}} = \mathcal{R}_{t, n}^{(1)} + \mathcal{R}_{t, n}^{(2)}$. 
        Moreover, define vectors $\boldsymbol{\mathcal{R}}_{t}^{(1)} = [\mathcal{R}_{t, 1}^{(1)}, \ldots, \mathcal{R}_{t, N}^{(1)}]^{T}\in\mathbb{R}^{N\times 1}$ and $\boldsymbol{\mathcal{R}}_{t}^{(2)} = [\mathcal{R}_{t, 1}^{(2)}, \ldots, \mathcal{R}_{t, N}^{(2)}]^{T}\in\mathbb{R}^{N\times 1}$. 
        Thus, we have
        \begin{equation}
            \frac{\partial{\xi_{K}(\rho)}}{\partial{\boldsymbol{t}}} = \left[
                \frac{\partial{\xi_{K}(\rho)}}{\partial{{t}_{1}}}, \ldots, \frac{\partial{\xi_{K}(\rho)}}{\partial{{t}_{N}}}
            \right]^{T} = \boldsymbol{\mathcal{R}}_{t}^{(1)} + \boldsymbol{\mathcal{R}}_{t}^{(2)}. 
        \end{equation}

        \begin{lemma}\label{lemma:Gmat-grad-trace}
            Given arbitrary Hermitian matrix $\boldsymbol{A}\in\mathbb{C}^{N\times N}$, define vector $\boldsymbol{\mathcal{T}}$ as 
            \begin{subequations}
                \begin{align}
                    \mathcal{T}_{n} & = \mathrm{tr}\left(
                        \frac{\partial{
                            \bar{\boldsymbol{G}}
                        }}{\partial{
                            {t}_{n}
                        }}\boldsymbol{A}
                    \right), ~\forall n, \\
                    \boldsymbol{\mathcal{T}} & = [\mathcal{T}_{1}, \ldots, \mathcal{T}_{N}]^{T}\in\mathbb{R}^{N\times 1}. 
                \end{align}
            \end{subequations}
            Then, $\boldsymbol{\mathcal{T}}$ can be equivalently written as
            \begin{equation}\label{def:appendix-grad-trace-lemma-expression}
                \boldsymbol{\mathcal{T}} = 2\mathrm{Re}\left[
                    \mathrm{diag}\left(
                        \boldsymbol{A}\boldsymbol{S}^{t}
                    \right)
                \right], 
            \end{equation}
            where $\boldsymbol{S}^{t}\in\mathbb{C}^{N\times N}$ is given by equation~\eqref{def:corr-factor-grads-Smat} and is rewritten here as follows
            \begin{equation}\label{def:appendix-Smat-rewrite}
                \boldsymbol{S}^{t} = \bar{\boldsymbol{Q}}^{H}\mathrm{Diag}(j\bar{\boldsymbol{\kappa}}^{t})\mathrm{Diag}(\boldsymbol{b})\bar{\boldsymbol{Q}}, ~t\in\{x, y\}. 
            \end{equation}
        \end{lemma}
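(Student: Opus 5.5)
The plan is to compute $\partial\bar{\boldsymbol{G}}/\partial t_n$ explicitly from $\bar{\boldsymbol{G}} = \bar{\boldsymbol{Q}}^{H}\mathrm{Diag}(\boldsymbol{b})\bar{\boldsymbol{Q}}$. Since only the $n$-th column of $\bar{\boldsymbol{Q}}$ depends on $t_n$, the derivative is a rank-two correction supported on the $n$-th row and $n$-th column. The trace against $\boldsymbol{A}$ then collapses to the $n$-th diagonal entry of a matrix product.

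Concretely, write $\bar{\boldsymbol{q}}_{n}$ for the $n$-th column of $\bar{\boldsymbol{Q}}$, with entries $\exp(j\bar{\boldsymbol{\kappa}}_{l}^{T}\tilde{\boldsymbol{r}}_{n})$. Since $\tilde{\boldsymbol{r}}_n = [x_n, y_n, 0]^T$, we have
\[
\partial \bar{\boldsymbol{q}}_{n}/\partial t_n = \mathrm{Diag}(j\bar{\boldsymbol{\kappa}}^{t})\bar{\boldsymbol{q}}_{n}.
\]
Let $\boldsymbol{e}_n$ be the $n$-th canonical basis vector, so that $\partial\bar{\boldsymbol{Q}}/\partial t_n = \mathrm{Diag}(j\bar{\boldsymbol{\kappa}}^{t})\bar{\boldsymbol{Q}}\boldsymbol{e}_n\boldsymbol{e}_n^{T}$. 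The product rule then gives
\[
\frac{\partial\bar{\boldsymbol{G}}}{\partial t_n} = \boldsymbol{E}_n\boldsymbol{S}^{t} + (\boldsymbol{E}_n\boldsymbol{S}^{t})^{H},
\]
where $\boldsymbol{E}_n = \boldsymbol{e}_n\boldsymbol{e}_n^{T}$. Here $\boldsymbol{S}^{t}$ is exactly the matrix in \eqref{def:appendix-Smat-rewrite}, and the second term arises because $\mathrm{Diag}(j\bar{\boldsymbol{\kappa}}^{t})^{H} = -\mathrm{Diag}(j\bar{\boldsymbol{\kappa}}^{t})$ and $\mathrm{Diag}(\boldsymbol{b})$ is real diagonal. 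Specifically, the first piece is $\bar{\boldsymbol{Q}}^{H}\mathrm{Diag}(\boldsymbol{b})\,\partial\bar{\boldsymbol{Q}}$ and the second is its Hermitian transpose. I must be careful about which of the two factors carries $\boldsymbol{E}_n$ on which side, so that the final expression reads $\mathrm{diag}(\boldsymbol{A}\boldsymbol{S}^{t})$ and not $\mathrm{diag}(\boldsymbol{S}^{t}\boldsymbol{A})$.

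Next, substitute into the trace, using the cyclic property, $\mathrm{tr}(\boldsymbol{E}_n\boldsymbol{M}) = [\boldsymbol{M}]_{nn}$, and the Hermitian property $\boldsymbol{A}=\boldsymbol{A}^{H}$. One term gives $[\boldsymbol{A}\boldsymbol{S}^{t}]_{nn}$ (or the analogous entry with the product order reversed, depending on the placement found above). The other term is $\mathrm{tr}\big((\cdot)^{H}\boldsymbol{A}\big) = \overline{\mathrm{tr}(\boldsymbol{A}^{H}(\cdot))}$, which is the complex conjugate of the first. The sum is therefore twice the real part, and stacking over $n$ yields $\boldsymbol{\mathcal{T}} = 2\mathrm{Re}[\mathrm{diag}(\boldsymbol{A}\boldsymbol{S}^{t})]$. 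This also confirms that $\boldsymbol{\mathcal{T}}$ is real.

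The only real obstacle is the bookkeeping of which side the selector $\boldsymbol{E}_n$ sits on after differentiation. If the direct product rule places it as $\boldsymbol{S}^{t}\boldsymbol{E}_n$, I will rewrite $\mathrm{tr}(\boldsymbol{S}^{t}\boldsymbol{E}_n\boldsymbol{A}) = [\boldsymbol{A}\boldsymbol{S}^{t}]_{nn}$ by cyclicity. In either placement the conjugate-pair structure guarantees the $2\mathrm{Re}$ form. I will also double-check the sign convention of the $\exp(\pm j\cdot)$ in $\bar{\boldsymbol{Q}}$ against \eqref{subeq:vmf-corr-def}, so that the factor is $+j\bar{\boldsymbol{\kappa}}^{t}$ as stated.
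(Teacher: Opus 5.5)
Your proposal is correct and is essentially the paper's argument: the paper computes $\partial\bar{\boldsymbol{G}}/\partial t_n$ entrywise and writes it as $\boldsymbol{\Omega}_n^t+(\boldsymbol{\Omega}_n^t)^H$, where $\boldsymbol{\Omega}_n^t$ holds the $n$-th column of $\boldsymbol{S}^t$ in column $n$ (i.e.\ $\boldsymbol{\Omega}_n^t=\boldsymbol{S}^t\boldsymbol{E}_n$, exactly what your product rule gives), and then uses $\boldsymbol{A}=\boldsymbol{A}^H$ to collapse the trace to $2\mathrm{Re}[\boldsymbol{A}\boldsymbol{S}^t]_{nn}$. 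Do fix your displayed formula to $\boldsymbol{S}^t\boldsymbol{E}_n+(\boldsymbol{S}^t\boldsymbol{E}_n)^H$, because $\partial\bar{\boldsymbol{Q}}/\partial t_n=\mathrm{Diag}(j\bar{\boldsymbol{\kappa}}^t)\bar{\boldsymbol{Q}}\boldsymbol{E}_n$ puts the selector on the right; the placement matters, since $\boldsymbol{S}^t$ is skew-Hermitian (its middle factor is a purely imaginary diagonal), so your displayed $\boldsymbol{E}_n\boldsymbol{S}^t+(\boldsymbol{E}_n\boldsymbol{S}^t)^H$ is minus the true derivative and would give $-2\mathrm{Re}[\mathrm{diag}(\boldsymbol{A}\boldsymbol{S}^t)]$, contrary to your remark that either placement leads to the same form.
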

        \begin{proof}[Proof\textup{:}\nopunct]
            Define matrix $\boldsymbol{\mathcal{E}}_{n}^{t}\triangleq\partial{\bar{\boldsymbol{G}}}/\partial{{t}_{n}}\in\mathbb{C}^{N\times N}$, $\forall n$. Note that the elements of $\bar{\boldsymbol{G}}$ can be written as 
            \begin{equation}
                \left[\bar{\boldsymbol{G}}\right]_{mm'} = \sum_{l = 1}^{L_{0}}{
                    b_{l}\exp(-j\bar{\boldsymbol{\kappa}}_{l}^{T}(\tilde{\boldsymbol{r}}_{m} - \tilde{\boldsymbol{r}}_{m'}))
                }
            \end{equation}
            for $1\le m, m'\le N$, which leads to $\left[\boldsymbol{\mathcal{E}}_{n}^{t}\right]_{mm'} = 0$ for $m, m'\neq n$. 
            Besides, we have $[\bar{\boldsymbol{G}}]_{nn} = \beta$ is constant and thus $\left[\boldsymbol{\mathcal{E}}_{n}^{t}\right]_{nn} = 0$. 
            For $m\neq m' = n$, it can be verified that 
            \begin{subequations}
                \begin{align}
                    \left[\boldsymbol{\mathcal{E}}_{n}^{t}\right]_{mn} & = \sum_{l = 1}^{L_{0}}{
                        j\bar{\kappa}_{l}^{t}\cdot b_{l}\exp(-j\bar{\boldsymbol{\kappa}}_{l}^{T}(\tilde{\boldsymbol{r}}_{m} - \tilde{\boldsymbol{r}}_{n}))
                    } \\
                    & = \bar{\boldsymbol{q}}_{m}^{H}\mathrm{Diag}(\boldsymbol{b})\mathrm{Diag}(j\bar{\boldsymbol{\kappa}}^{t})\bar{\boldsymbol{q}}_{n}, 
                \end{align}
            \end{subequations}
            where $\bar{\boldsymbol{q}}_{m}\in\mathbb{C}^{L_{0}\times 1}$ denotes the $m$-th column of $\bar{\boldsymbol{Q}}$ while $\bar{\boldsymbol{\kappa}}^{t} = [\bar{\kappa}_{1}^{t}, \ldots, \bar{\kappa}_{L_{0}}^{t}]^{T}\in\mathbb{R}^{L_{0}\times 1}$ as defined in Section~\ref{subsec:gradient-ascent}. 
            Since $\bar{\boldsymbol{G}}$ is an Hermitian matrix, we have $\left[\boldsymbol{\mathcal{E}}_{n}^{t}\right]_{nm} = \left[\boldsymbol{\mathcal{E}}_{n}^{t}\right]_{mn}^{*}$. 
            Therefore, elements of matrix $\boldsymbol{\mathcal{E}}_{n}^{t}$ are all zeros except for non-diagonal ones in the $n$-th row and $n$-th column. 
            By defining 
            \begin{subequations}
                \begin{align}
                    \boldsymbol{s}_{n}^{t} & = \bar{\boldsymbol{Q}}^{H}\mathrm{Diag}(\boldsymbol{b})\mathrm{Diag}(j\bar{\boldsymbol{\kappa}}^{t})\bar{\boldsymbol{q}}_{n}\in\mathbb{C}^{N\times 1}, \\
                    \boldsymbol{\Omega}_{n}^{t} & = [\boldsymbol{0}_{N\times(n - 1)}, \boldsymbol{s}_{n}^{t}, \boldsymbol{0}_{N\times(N - n)}]\in\mathbb{C}^{N\times N}, 
                \end{align}
            \end{subequations}
            it can be verified that $\boldsymbol{\mathcal{E}}_{n}^{t}$ can be equivalently expressed as 
            \begin{equation}
                \boldsymbol{\mathcal{E}}_{n}^{t} = \boldsymbol{\Omega}_{n}^{t} + (\boldsymbol{\Omega}_{n}^{t})^{H}. 
            \end{equation}
            Specially, for the $n$-th diagonal element, we have $[\boldsymbol{\Omega}_{n}^{t}]_{nn} + [(\boldsymbol{\Omega}_{n}^{t})^{H}]_{nn} = 0 = [\boldsymbol{\mathcal{E}}_{n}^{t}]_{nn}$. 
            Meanwhile, by denoting $\boldsymbol{\alpha}_{n}\in\mathbb{C}^{N\times N}$ as the $n$-th column of the Hermitian matrix $\boldsymbol{A}$, we have $\boldsymbol{A} = [\boldsymbol{\alpha}_{1}, \ldots, \boldsymbol{\alpha}_{N}] = [\boldsymbol{\alpha}_{1}, \ldots, \boldsymbol{\alpha}_{N}]^{H}$. 
            Then, $\mathcal{T}_{n}$ can be simplified as 
            \begin{subequations}
                \begin{align}
                    \mathcal{T}_{n} & = \mathrm{tr}\left(\boldsymbol{\mathcal{E}}_{n}^{t}\boldsymbol{A}\right) = \mathrm{tr}\left(\boldsymbol{\Omega}_{n}^{t}\boldsymbol{A}\right) + \mathrm{tr}\left((\boldsymbol{\Omega}_{n}^{t})^{H}\boldsymbol{A}\right) \\
                    & = \boldsymbol{\alpha}_{n}^{H}\boldsymbol{s}_{n}^{t} + (\boldsymbol{s}_{n}^{t})^{H}\boldsymbol{\alpha}_{n} = 2\mathrm{Re}\left(\boldsymbol{\alpha}_{n}^{H}\boldsymbol{s}_{n}^{t}\right). 
                \end{align}
            \end{subequations}
            Note that $\boldsymbol{\alpha}_{n}^{H}\boldsymbol{s}_{n}^{t}$ is the $n$-th diagonal element of matrix $\boldsymbol{A}\boldsymbol{S}^{t}$, where $\boldsymbol{S}^{t} = [\boldsymbol{s}_{1}^{t}, \ldots, \boldsymbol{s}_{N}^{t}]$ is given by equation~\eqref{def:appendix-Smat-rewrite}. 
            Hence, it is justified that vector $\boldsymbol{\mathcal{T}}$ can be expressed as equation~\eqref{def:appendix-grad-trace-lemma-expression}. 

        \end{proof}

        Following Lemma~\ref{lemma:Gmat-grad-trace}, it can be easily verified that 
        \begin{equation}
            \boldsymbol{\mathcal{R}}_{t}^{(1)} = 2\rho^{-1}\cdot\mathrm{Re}\left[
                \mathrm{diag}\left(
                    \boldsymbol{\Upsilon}_{K}^{-1}\boldsymbol{S}^{t}
                \right)
            \right]. 
        \end{equation}
        Meanwhile, by noting that 
        \begin{subequations}
            \begin{align}
                \mathcal{R}_{t, n}^{(2)} & = -\mathrm{tr}\left(
                    \rho^{-1}\bar{\boldsymbol{G}}\boldsymbol{\Upsilon}_{K}^{-1}\frac{\partial{\boldsymbol{\Upsilon}_{K}}}{\partial{{t}_{n}}}\boldsymbol{\Upsilon}_{K}^{-1}
                \right) \\
                & = -\mathrm{tr}\left(
                    \rho^{-1}\bar{\boldsymbol{G}}\boldsymbol{\Upsilon}_{K}^{-1}\rho^{-1}(K - 1)\frac{\partial\bar{\boldsymbol{G}}}{\partial{{t}_{n}}}\boldsymbol{\Upsilon}_{K}^{-1}
                \right), ~\forall n, 
            \end{align}
        \end{subequations}
        we can obatin the expression for $\boldsymbol{\mathcal{R}}_{t}^{(2)}$ as 
        \begin{equation}
            \boldsymbol{\mathcal{R}}_{t}^{(2)} = - 2(K - 1)\rho^{-2}\cdot\mathrm{Re}\left[
                \mathrm{diag}\left(
                    \boldsymbol{\Upsilon}_{K}^{-1}\bar{\boldsymbol{G}}\boldsymbol{\Upsilon}_{K}^{-1}\boldsymbol{S}^{t}
                \right)
            \right]. 
        \end{equation}
        Thus, the gradient $\partial{\xi_{K}(\rho)}/\partial{\boldsymbol{t}}$ is given by 
        \begin{subequations}\label{eq:partial-xi-to-position-finale}
            \begin{align}
                & \frac{\partial{\xi_{K}(\rho)}}{\partial{\boldsymbol{t}}} = 
                \begin{aligned}[t]
                    & 2\rho^{-1}\cdot\mathrm{Re}\big[
                        \mathrm{diag}\big(
                            \boldsymbol{\Upsilon}_{K}^{-1}\boldsymbol{S}^{t} \\
                        & ~~~~ - \rho^{-1}(K - 1)\boldsymbol{\Upsilon}_{K}^{-1}\bar{\boldsymbol{G}}\boldsymbol{\Upsilon}_{K}^{-1}\boldsymbol{S}^{t}
                        \big)
                    \big]
                \end{aligned}
                \label{subeq:partial-xi-to-position-added} \\
                & ~~~~ = 2\rho^{-1}\cdot\mathrm{Re}\left[
                    \mathrm{diag}\left(
                        \left(
                            \boldsymbol{I}_{N} - \boldsymbol{\Upsilon}_{K}^{-1}\left(
                                \boldsymbol{\Upsilon}_{K} - \boldsymbol{I}_{N}
                            \right)
                        \right)
                        \boldsymbol{\Upsilon}_{K}^{-1}\boldsymbol{S}^{t}
                    \right)
                \right] \label{subeq:partial-xi-to-position-combined} \\
                & ~~~~ = 2\rho^{-1}\cdot\mathrm{Re}\left[
                    \mathrm{diag}\left(
                        \boldsymbol{\Upsilon}_{K}^{-2}\boldsymbol{S}^{t}
                    \right)
                \right], 
            \end{align}
        \end{subequations}
        where equation~\eqref{subeq:partial-xi-to-position-combined} is obtained by replacing $\rho^{-1}(K - 1)\bar{\boldsymbol{G}}$ with $\boldsymbol{\Upsilon}_{K} - \boldsymbol{I}_{N}$. 
        
        Based on the results above while letting $K = N$ and $\rho = \rho_{N}$ in equations~\eqref{eq:partial-xi-to-rho} and~\eqref{eq:partial-xi-to-position-finale}, we have $\boldsymbol{\Upsilon}_{N} = \boldsymbol{Y}_{N} = \boldsymbol{I}_{N} + \rho_{N}^{-1}(N - 1)\bar{\boldsymbol{G}}$ and the gradient of $\rho_{N}$ w.r.t. $\boldsymbol{t}$ can be obtained as equation~\eqref{def:corr-factor-grads-wrt-positions}, i.e., 
        \begin{subequations}
            \begin{align}
                \frac{\mathrm{d}{\rho_{N}}}{\mathrm{d}{\boldsymbol{t}}} & = - \left[
                    \left(
                        \frac{
                            \partial{\xi_{N}(\rho)}
                        }{
                            \partial{\rho}
                        }
                    \right)^{-1}
                    \frac{\partial{\xi_{N}(\rho)}}{\partial{\boldsymbol{t}}}
                \right]\Bigg|_{\rho = \rho_{N}} \\
                & = \frac{
                    2\rho_{N}\mathrm{Re}\left[
                        \mathrm{diag}\left(
                            \boldsymbol{Y}_{N}^{-2}\boldsymbol{S}^{t}
                        \right)
                    \right]
                }{
                    \mathrm{tr}\left(
                        \boldsymbol{Y}_{N}^{-2}\bar{\boldsymbol{G}}
                    \right)
                }. 
            \end{align}
        \end{subequations}


\bibliographystyle{IEEEtran} 
\bibliography{IEEEabrv, reference}

\end{document}